\DeclareMathOperator*{\argmax}{argmax}
\DeclareMathOperator*{\interior}{int}
\newtheorem*{thm3.6_BHS}{Theorem 3.6 (BHS)}
\newtheorem*{thm2.2_Schreiber}{Theorem 2.2 (\textcite{schreiber2001urn})}
\newtheorem*{Assumptions_BHS}{BHS Standing Assumptions}
\newtheorem*{thm2.9_Pemantle}{Theorem 2.9 (\textcite{pemantle2007survey})}
\newtheorem{theorem}{Theorem}
\newtheorem{lemma}{Lemma}
\newtheorem*{cor-pur}{Corollary 4}
\newtheorem*{lem-tech}{Lemma 1}
\newtheorem*{lem-tech-DF}{Lemma 7}
\newtheorem*{assumption*}{Assumption}
\theoremstyle{definition}
\newtheorem{definition}{Definition}
\newtheorem{assumption}{Assumption}
\newtheorem*{lemma2}{Lemma 2}
\newtheorem*{lemma6}{Lemma 6}
\DeclareMathOperator{\sign}{sign}
\renewcommand{\blacksquare}{\rule{0.5em}{0.5em}}
\renewenvironment{proof}[1][Proof]{\medskip\noindent\textbf{#1.} }{\hfill\blacksquare\par\medskip}
\title{Endogenous Attention and the Spread of False News (Extended Cut)\footnote{This an extended version of the paper with additional details. We thank Michel Benaim, Claire Bartolone, Yifan Dai, Krishna Dasaratha, Ben Golub, Kevin He, Navin Kartik, Rachel Kranton, Simon Loertscher, David McAdams, Reed Orchinik,
 David Rand, Doron Ravid,  Noah Siderhurst,  Philipp Strack, and Alexander Wolitzky for helpful comments and conversations. We thank NSF Grant SES-2417162 for financial support. }}
\author{Tuval Danenberg\thanks{Department of Economics, MIT, Cambridge, MA, 02142, tuvaldan@mit.edu}\, and Drew Fudenberg\thanks{Department of Economics, MIT, Cambridge, MA, 02142, drew.fudenberg@gmail.com} }
\date{\today}
\begin{document}

\maketitle
\thispagestyle{empty}
\vspace{-1em}

\begin{abstract}
 We study the impact of endogenous attention in a dynamic social media model. Each period, a user observes a random story and decides whether to share it. Users like sharing true and interesting stories, but identifying false stories requires costly attention.  
Depending on parameters, the system exhibits either a unique limit or strong path dependence. 
Endogenous attention responds to changes in false story credibility, so reducing credibility can boost their prevalence. Increases in the exogenous production rate of false stories can be amplified by users' sharing decisions. Increasing users' capacity to reach others amplifies both true and false stories; we identify conditions under which the net effect favors truth over falsehood.

\quad

\noindent \textbf{Keywords}: false news, endogenous attention, Polya urns, stochastic approximation, social media

\noindent \textbf{JEL codes}: D83, D91

\end{abstract}

\newpage

\setcounter{page}{1}

\section{Introduction}
  This paper develops a dynamic model of the spread of misinformation on social media. \textcite{vosoughi2018spread} shows that the spread of falsehoods on social media  is mostly due to humans rather than bots, and  \textcite{pennycook2021shifting} attributes the sharing of false news to inattention. 
   Motivated by these  findings, our  model  assumes that users want to share true stories, but  distinguishing true and false content requires costly attention.  Users' attention depends on the prevalence and credibility of false stories:  If the share of false stories in their feed is negligible, they are unwilling to spend much effort spotting them, but if that share is significant and the false stories are superficially plausible, users  are willing to incur a nontrivial  
  cost to distinguish between true and false content. In turn, attention choices affect the prevalence of false stories as more attentive users are more discerning. We study the resulting joint dynamics of users' attention and platform composition.

In our model, each period a distinct user randomly draws a story from the stories on a social media platform and decides whether to share it. Users consider two factors when evaluating a story: its \emph{veracity}, or truthfulness, and its \emph{evocativeness}, or how interesting and stimulating it is. Users first observe the story's evocativeness level, and then choose  their  attention level and pay the corresponding cost. They then receive a binary signal of the story's veracity. One signal realization reveals that the story is false while the other may be sent for both true and false stories.  False stories are characterized by a credibility measure that captures how true they appear---when false stories are highly credible, signals about their veracity are less precise. The precision of the signal increases with the user's attention level, and is supermodular in credibility and attention, so  users pay more attention when false stories are more credible. If the user decides to share the story, a fixed number of identical copies are added to the platform. Regardless of the sharing decision,  fixed numbers of true and false stories are exogenously added as well, which corresponds to original content creation.

 We assume that  users do not share boring stories, and consider two levels of evocativeness: mildly interesting (M) and very interesting (I). While a story's veracity is fixed throughout time,  evocativeness is drawn i.i.d (conditional on veracity) for each user. This captures the idea that different users will find different stories very interesting. We also assume that false stories are more likely to be very interesting. 

Our focus is the share of true stories in the system for each period $n\in\mathbb{N}$, which we denote by $y_n$.  As a function of $y_n$, users follow one of four possible decision rules. For each evocativeness level, the rules dictate one of two approaches:  Either  i) do not pay attention to  stories of that level and do not share; or ii) pay attention to those stories  and share them as long as they were not revealed to be false. When users do pay attention to a story, their attention level varies with  $y_n$ according to a first-order condition that equates the marginal cost of attention with its marginal benefit.  When $y_n$ is sufficiently high, the system is in the \emph{sharing} region, where users pay attention to stories of both evocativeness levels. When $y_n$ is low, the system is in the \emph{no-sharing} region, where users do not pay attention to any story. In between, there is an intermediate region, in which the optimal decision rule is either to only pay attention to very interesting stories or  only to mildly interesting ones.

 By applying stochastic approximation techniques to concatenations of generalized Polya urns,  we 
show that $y_n$ converges almost surely and provide a complete characterization of its limit. (See the technical summary below.) For some parameter values the limit is unique. For others it is random, so that starting from the same initial conditions the platform may end up with significantly different limit shares of true stories and different user behavior in the limit. This effect is most pronounced when the platform is new and the total number of stories is small, but it is still present in any finite platform.

The system converges either to a point where users strictly prefer a single decision  rule or to one where they are indifferent between two rules. Comparative statics are qualitatively different in these two cases.\footnote{This is analogous to the difference in comparative statics between pure-strategy and mixed-strategy Nash equilibrium in games.} For example, in the steady states where users strictly prefer one rule, increasing the cost of attention lowers the share of true stories. However, 
in the steady states where users are indifferent between two  rules, the limit share of true stories is increasing in the cost of attention. This occurs because the cost of attention lowers  users' payoffs, so 
the share of true stories required for indifference increases.   

The most notable  implications  of our comparative statics analysis concern false-story credibility, user reach, and the production rate of false stories.   False story credibility can have a  non-monotonic effect, because when false stories are very credible users  pay more attention to them. When credibility is high, user responses to an increase in credibility may more than compensate for the direct effect of this increase, so  making false stories less credible can actually increase their prevalence. We draw two practical implications. First, producers of false stories may choose to produce implausible stories even when credibility is free. Second, platforms that aim to counter the spread of false news by fact-checking false stories might be better off not fact-checking at all than fact-checking only a small share of stories, because increasing the share of stories flagged as false leads users to put more trust in stories that were not flagged.

Turning to the effects of users' reach, a common view is that social media platforms are hotbeds for false news because users can easily disseminate content to large audiences. However, the ability of each user to reach many others can also increase the spread of true stories. To analyze this trade-off, we formally define \emph{reach} as the number of copies of a story added to the platform when a user shares it. We find that the effects of increased reach depend on users' \emph{truth-sharing propensity} and  \emph{false-sharing propensity}, defined as the probability of sharing a true or false story, respectively, when one is drawn. We  say that users are \emph{discerning} if their truth-sharing propensity exceeds their false-sharing propensity.  We show that whether users are discerning determines whether sharing increases or decreases the share of true stories. This supports the use of similar measures in \textcite{pennycook2022accuracy} and \textcite{guriev2023curtailing} to evaluate attempts to reduce the spread of false news.

When users only share mildly interesting stories or share both mildly and very interesting stories, they are always discerning. This implies that the steady states associated with these decision rules are increasing in the reach parameter, and converge to $1$ as the reach parameter increases to infinity. In contrast, when users share only very interesting stories (decision rule $I$), they may or may not be discerning, depending on the parameters. Intuitively, because users have a higher intrinsic benefit from sharing very interesting stories, they may share stories that are relatively likely to be false. 

If users are not discerning, then their net effect on platform composition is negative. In this case, higher reach is detrimental, degrading the quality of information on the platform.  However, this negative  effect of reach is self-limiting: as reach  increases, the decrease in the steady-state share of true stories makes users more attentive and therefore more discerning, which partially offsets the direct negative effect.  So, while for some parameter specifications the steady state for rule $I$ converges to $0$ as reach increases, for others it converges to an interior point that equates users' truth-sharing  and false-sharing propensities. The case where users are discerning with decision rule $I$ is a mirror image. 

Finally, we find that increases in the production rate of false stories can cause users to stop sharing entirely, amplifying the exogenous shock. Yet in other cases, a significant share of false stories can persist in the limit even as their exogenous inflow vanishes.

\subsection*{Technical Summary}
In the Polya urn model, an urn consists of balls of various colors. In each period one ball is drawn randomly from the urn, and then  returned to the urn along with one additional ball of the same color. A \emph{generalized Polya urn}  (GPU) allows for the number of balls added in each period to be random, with probabilities that depend on the state of the system; see, e.g., \textcite{schreiber2001urn} and \textcite{mahmoud2008polya}.\footnote{\textcite{arthur1993information},  \textcite{smith2020rational}, and 
 and \textcite{arieli2024sequential} use  GPUs with two colors and one ball added each period to analyze models of social learning. }

In our model, if the users' decision rule was fixed, instead of depending on $y_n$, our system would be a GPU  where stories are balls, colors are veracity levels, and the number of balls added each period depends on whether the user decides to share their current story and whether that story was true or false. 
\textcite{schreiber2001urn} and \textcite{benaim2004generalized} use stochastic approximation arguments to show that under fairly general conditions the long-run behavior of GPUs can be determined by studying the attractors of a deterministic differential equation.  Their results imply that the hypothetical systems where users pick one of the four contingently-optimal decision rules and use it for all values of $y_n$  have unique limit shares of true stories.

These limits, which we call \emph{quasi steady states}, are the unique steady states of the associated differential equations. However, because the optimal sharing and attention rules are not continuous,  our system is not a GPU but a concatenation of them. For this reason, we extend the literature on stochastic approximation of urn models to cover concatenations of a finite number of GPUs. This lets us relate the long-run behavior of the system to the stable steady states of  the associated   \emph{limit differential inclusion} (LDI), which concatenates the   differential equations associated with the GPUs.\footnote{A differential inclusion is an equation of the form $\frac{dx}{dt}\in F(x)$ for a set-valued function $F$.}

The first step in our analysis of the dynamics of the  share of true stories is Theorem \ref{thm S_F}, which  shows that a quasi steady state is a stable steady state for the LDI if and only if it is within the region where its associated decision rule is uniquely optimal. Depending on the parameters, there may or may not be one additional stable steady state, the \emph{threshold} where the user is just indifferent between sharing and not sharing very interesting stories.

Next, Theorem \ref{thm limit chain transitive} in Appendix \ref{appendix:stochastic approximation} uses results from \citet{benaim2005stochastic} (henceforth BHS) to show the system almost surely converges to a steady state of the
 LDI.  Lemma  \ref{claim stable} then  gives a direct proof that all of the stable steady states of the limit differential inclusion have positive probability. Lemma \ref{claim repelling} complements this by using a result of \citet{pemantle2007survey} to show the system has  probability $0$ of converging to an unstable steady state. Together these results imply Theorem \ref{thm limit behavior}, which shows that the system almost surely converges to a stable steady state of the limit differential inclusion, and converges to each stable steady state with positive probability (except in the case where users never share and the system deterministically converges to the no-sharing steady state).

\section{Related Literature}\label{sec:literature}

\subsubsection*{Empirical Evidence} 
In our model, inattention plays a central role in the sharing of false content. \textcite{pennycook2021shifting} argues  that inattention to  veracity is a key reason that users share false stories. Using survey and field experiments on Twitter (now X), it  demonstrates that there is a discrepancy  between users' stated preferences for accuracy and their observed sharing behavior, and that  interventions designed to shift users’ attention toward accuracy significantly improve the quality of the content they subsequently share.

While inattention is central to our model, other factors also contribute to the spread of false news. The  conclusion discusses  how our model can be adapted to incorporate additional factors such as   politically motivated reasoning and ideological alignment (e.g., \textcite{van2018partisan}, \textcite{allcott2017social}) and digital illiteracy \textcite{guess2020digital}.

We assume that users care about two content dimensions---veracity and evocativeness. \textcite{chen2023makes} conducts a factor analysis of the content dimensions affecting sharing decisions in a series of experiments and finds that the main factors are perceived accuracy, evocativeness, and familiarity, and that  accuracy   has the most impact on sharing.\footnote{ The evocativeness factor captures characteristics such as the extent to which content is surprising, amusing, or provokes anxiety and other negative feelings. Earlier work by  \textcite{berger2012makes} also finds a positive correlation between these characteristics and sharing intentions.}  \textcite{guriev2023curtailing} structurally estimates a model of sharing decisions, and finds that users' perception of content veracity has a significant positive effect on their sharing decisions.

\subsubsection*{Theory of Online Misinformation}

Much of the theoretical literature on misinformation studies sequential models in which a single  story spreads through a network.  In \textcite{bloch2018rumors}, the network consists of biased and unbiased agents, and agents' evaluations of the story's veracity depend on their beliefs about the part of the network in which the story originated. In \textcite{acemoglu2023online}, the social media platform chooses the network structure to maximize engagement, and
users' sharing decisions depend on expectations about subsequent users' tastes but not past user actions. These models do not feature inspection choices or endogenous attention. In \textcite{papanastasiou2020fake},  users arranged in a line can pay a cost to learn a story’s veracity. Inspection choices depend on a user’s position in the line (which signals prior inspections) and an exogenous belief about the prevalence of false news.

Other papers have studied the spread of multiple messages about a fixed binary state. In \citet{mostagir2022naive}, each user initially gets an informative message about the state, and then repeatedly transmits their posteriors to their neighbors using either Bayesian updating or DeGroot learning. \citet{merlino2023debunking} analyzes the mean field of an infinite-population  SIS model with two messages corresponding to the two states.  Agents become ``infected'' when they encounter a message and choose how much effort to spend to verify it, so this model features a form of endogenous attention. \textcite{kranton2024social} models the production of information, with consumers allocating attention across sources and veracity endogenously determined by producer investments.

\textcite{dasaratha2022learning}, like our paper, uses stochastic approximation to determine the evolution of the shares of true and false stories rather than the spread of a single story. Users  only care about veracity and do not know the state of the platform. The paper focuses on  the weight the platform places on stories' virality when choosing what stories to display to users, and does not feature endogenous attention.

We contribute to this literature by considering agents whose sharing decisions depend on both the individual characteristics of a story---veracity and evocativeness---and the endogenous composition of the platform. In existing models, the prevalence of false news is either fixed or inconsequential to users. In our model, the evolving platform composition feeds back into users' attention and sharing decisions. Our extension of generalized Polya urn results to concatenations of urns allows us to analyze sharing dynamics in a setting where some platform compositions lead users to completely ``tune out'' from considering certain types of stories. 
The results highlight the role of path dependence in platform evolution and clarify how different determinants of users' sharing behavior shape long-run outcomes. As we discuss below, our result about the nonmonotone effect of false story credibility is related to previous findings in \textcite{pennycook2020implied} and \textcite{acemoglu2023online}.

\section{Model}\label{sec:model}

We consider an infinite-horizon model of a social media platform. The platform contains stories with two characteristics ($v,e$). A story's \emph{veracity} is $v\in\{T,F\}$, with the story being true if $v=T$ and false otherwise. A story's \emph{evocativeness} is $e\in\{M,I\}$, with the story being mildly interesting if $e=M$ and very interesting if $e=I$. While a story's veracity is fixed (the story is either always true or always false), a story might be mildly interesting to one user and very interesting to another.\footnote{In reality there are also boring stories that are rarely or never shared, we omit these.} When a user draws a story, the probabilities of each evocativeness level are: 
\begin{equation*}
    \mathbb{P}(e=I|v=T)=\frac{1}{2};\, \mathbb{P}(e=I|v=F)=\delta.
\end{equation*}
 We assume that $\delta > \frac{1}{2}$,  so  false stories are more likely to seem very interesting, and that  $\delta<1$ as otherwise mildly interesting stories are always true. 
 
 The false stories are of \emph{credibility} $\theta \in (0,1).$ The credibility of a false story determines how difficult it is to distinguish from a true story, in a manner that will be described below. To keep the model simple, we assume that all false stories have the same credibility. 
 
 The platform begins operating at time $n=0$ with an initial stock of true and false stories $(T_{0}, F_{0}).$ 
 Let $T_n$ and $F_n$ respectively denote the numbers of true and false stories on the platform at the beginning of period $n$. The vector $z_n:= (T_n,F_n)$ summarizes the current state of the platform;  we use the notation $|z_n|:= T_n+F_n$ for the total number of stories in period $n$, and let $y_n := \frac{T_n}{|z_n|}$ denote the share of true stories. 

In each period $n\ge0$, a distinct user randomly draws a story among those currently on the platform and decides whether or not to share it. Before making the sharing decision, the user sees the story's evocativeness level and a noisy signal of its veracity. The precision of this signal depends on the user's \emph{attention} as will be explained below. The parameter $\rho $ describes the \emph{reach} of shared stories on the platform---if the user decides to share the story, $\rho$ copies of the story are added to the platform. Regardless of the user's decision, one true story and $\kappa$ false stories are posted to the platform, corresponding to original content creation.

In summary, the sequence of events in each period is: 
\begin{enumerate}
\item  A distinct user draws a story and observes its realized evocativeness.
    \item Chooses an attention level $a\in[0,1]$.
    \item Draws a signal whose distribution depends on $a$.
    \item Decides whether to share the story. If shared, $\rho$ copies of the story are added.
    \item Receives payoffs.
    \item Finally, $1$ new true story and $\kappa$ new false stories are exogenously added. 
      \end{enumerate}

   After drawing a story and observing its evocativeness level $e$, the user chooses a level of attention $a$, which will determine the precision of the signals they get regarding the story's veracity. The cost of attention level $a$ is $\beta\cdot a^2$,where $\beta>0$. The signal is $s\in\{T',F'\}$, with  probabilities given by 

\begin{equation}\label{signals}
  \mathbb{P}(T'|T) = 1 ;\, \mathbb{P}(T'|F) = \theta(1-a). 
\end{equation}

The idea behind Equation \ref{signals} is that a false story of credibility $\theta$ is \textit{clearly false} with probability $1-\theta$, where a  clearly false story is one that users will recognize as false even when they do not pay attention. With probability $\theta$,  users will notice the story is false only if they pay attention.  A user's attention level $a$ is the probability with which they pay attention. Thus, when a user's attention level is $a$ and the credibility of false stories is $\theta$, they will identify a false story as false with probability $\mathbb{P}(F'|F) = 1-\theta+\theta a = 1-\theta(1-a)$. If the story is true,  the user receives the signal $T'$ with certainty, regardless of  their  attention level. Thus, signal $F'$ reveals the story is false, while after  signal $T'$ the user is uncertain about the story's veracity.

Users choose their  attention level after seeing the story's evocativeness,  knowing the current share of true stories $y_n$.\footnote{This approximates a scenario where the veracity of stories shared a few periods back has been revealed and the mix between true and false stories is not changing too quickly.} They will never share stories for which they received the signal $F'$, so they either share stories with signal $T'$ or do not share at all. Whether or not they share, users pay the cost $\beta a^2$ of their chosen attention level. If they do not share they get no additional payoff so their total payoff is $-\beta a^2$. If they share a $(v,e)$-story their additional payoff is
\[
u(v,e) = 1-\mu\mathbbm{1}(v=F)+\lambda\mathbbm{1}(e=I).\]
 Here we have normalized the payoff to sharing a true and mildly interesting story to $1$. The parameter $\mu$ captures the loss from sharing a false story, and the parameter $\lambda$ captures the additional gain from sharing a story that is very interesting. We assume both of these are strictly positive, so in line with the empirical evidence mentioned above, users want to share stories that are true and interesting. We also make the following parametric assumption:
 \begin{assumption}\label{condition on lambda}
$\mu >1+\lambda$.
\end{assumption}
Assumption \ref{condition on lambda} implies users will not share very interesting stories they know are false, and therefore will not share any story for which they received the signal $F'$.\footnote{This assumption is consistent with \textcite{chen2023makes}, which finds that the content factor with the strongest positive correlation with sharing intentions is perceived accuracy.} Hence, for each evocativeness level $e$, users either do not share at all and pay no attention (so their expected payoff is $0$), or they share stories if and only if they receive the signal $T'$. In the latter case, their expected payoff to attention level $a$ is 
\begin{equation}\label{definition of U}
     U(a,y,e) := \mathbb{P}_{a,y}(T'|e)\mathbb{E}[u(v,e)|T',e]-\beta a^2.
 \end{equation}
Thus, if users share at all they will choose the attention level
\[
a(y,e) :=\argmax_{a\in[0,1]} U(a,y,e),\]
and share only signal $T'$ stories. 

Our second parametric assumption implies that the optimal attention levels are always given by solutions to first-order conditions.

\begin{assumption} \label{sufficient for a below 1} 
$(\mu-1)\theta <2\beta$.
\end{assumption}

In summary, the model parameters are $(\rho,\kappa,\theta,\mu,\beta,\delta,\lambda)$. We assume throughout that all parameters are strictly positive, satisfy Assumptions \ref{condition on lambda} and \ref{sufficient for a below 1}, and that $\theta<1$ and $\delta\in(\frac{1}{2},1)$. 

\subsection*{Discussion}

Users in our model want to share content that is both subjectively interesting and objectively true. Each of these motivations can be interpreted either as intrinsic or as a reduced form approximation of a motive that depends on subsequent users' reactions. For example, users might share subjectively interesting content simply due to a \emph{self-expression} motive.  Alternatively, they could share it in the hope of influencing others to take actions aligned with their interests, similar to the persuasion motive in \textcite{guriev2023curtailing}. Formally incorporating these latter considerations would significantly complicate our model. Instead, we focus on  how these empirically grounded motives, previously derived in simpler models, interact with endogenous attention on a dynamically evolving platform.

\section{Optimal Attention and the Sharing Decision}\label{subsec:sharing decision}
 
 We are interested in characterizing the composition of stories on the platform over time, i.e, analyzing the stochastic process $\{z_n\}$, and in particular the share of true stories $\{y_n\}$. To begin the analysis, we compute how user-optimal attention depends on the  state.

\begin{lemma}\label{lemma U a V}

The functions $U(a,y,M)$ and $U(a,y,I)$ are strictly concave, and the optimal attention levels (conditional on sharing $T'$ stories) are:

\begin{equation*}\label{attention levels}
\left\{ 
\begin{array}{ll} 
 0\le a(y,M) = \dfrac{(\mu-1)(1-y)(1-\delta) \theta}{\beta\left(y+2(1-y)(1-\delta)\right)}\le1,
\\ 
\\
0\le a(y,I) = \dfrac{(\mu-1-\lambda)(1-y)\delta \theta}{\beta\left(y+2(1-y)\delta\right)}\le 1.
\end{array}
\right.
\end{equation*}
\end{lemma}
The proof of this and all other results stated in the text are in Appendix \ref{appendix:proofs}. 
It is straightforward to verify that $a(y,e)<1$ for all $y$ and $a(y,e)>0$ if $y<1$, and that  the system can never reach a state where $y=1$. Intuitively, when $y=1$ there is no need to pay attention, so  $a(1,M)=a(1,I)=0$. As $y$ decreases the marginal gain from paying attention increases,  and since the $U$'s are strictly concave, $da/dy<0.$ However, when $y$ is close enough to $0$ the payoff from the $a(y,e)$ is so low that users do  not share any stories and do not pay attention at all.

Online Appendix \ref{appendix:additional} shows that
users pay more attention to stories of both evocativeness levels when false stories are very credible and when the cost to sharing false stories is high. Users pay more attention to very interesting stories  when false stories are more likely to be very interesting, and pay less attention to very interesting stories as the payoff to sharing them increases. 

The next lemma shows that there are interior \emph{thresholds} $\hat{y}_M,\hat{y}_I$  for each evocativeness level such that if the share of true stories is below the corresponding threshold then users choose $a=0$ and do not share the story, and if the share is above this threshold users choose the attention level given in Lemma \ref{lemma U a V} and share if and only if they received the signal $T'$.

\begin{lemma}\label{lemma thresholds}
    Let $V(y,e) := U(a(y,e),y,e)$. $V(y,M)$ and $V(y,I)$ are strictly increasing in $y$, and there are (unique) $\hat{y}_M,\hat{y}_I\in(0,1)$ such that  $V(\hat{y}_M,M)=V(\hat{y}_I,I)=0$.
\end{lemma}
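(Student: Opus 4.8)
The plan is to put $U(\cdot,y,e)$ into closed form along the optimal-attention path, deduce strict monotonicity of $V(\cdot,e)$ by an envelope argument, and then locate its unique zero in $(0,1)$ by signing $V$ at the endpoints $y=0$ and $y=1$.

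First I would simplify $U$. Write $\mathbb{P}(T\mid e)=y\,\mathbb{P}(e\mid T)\big/\big(y\,\mathbb{P}(e\mid T)+(1-y)\,\mathbb{P}(e\mid F)\big)$ and $\mathbb{P}(F\mid e)=1-\mathbb{P}(T\mid e)$ for the Bayesian posteriors given evocativeness $e$. The identities $\mathbb{P}_{a,y}(T'\mid e)\,\mathbb{P}_{a,y}(T\mid T',e)=\mathbb{P}(T\mid e)$ and $\mathbb{P}_{a,y}(T'\mid e)\,\mathbb{P}_{a,y}(F\mid T',e)=\theta(1-a)\,\mathbb{P}(F\mid e)$ let one cancel the denominators in $\mathbb{P}_{a,y}(T'\mid e)\,\mathbb{E}[u(v,e)\mid T',e]$, and substituting the values of $u(v,e)$ gives the affine-in-$a$ expressions
\begin{equation*}
U(a,y,M)=\lambda\,\mathbb{P}(T\mid M)-\lambda\mu\theta(1-a)\,\mathbb{P}(F\mid M)-\beta a^2,\qquad U(a,y,I)=\mathbb{P}(T\mid I)-c\,\theta(1-a)\,\mathbb{P}(F\mid I)-\beta a^2,
\end{equation*}
where $c:=\lambda(1+\mu)-1>0$ by Assumption~\ref{condition on lambda}; these are the expressions underlying Lemma~\ref{lemma U a V}.

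For monotonicity, observe that since $\mathbb{P}(e\mid T),\mathbb{P}(e\mid F)>0$ for both evocativeness levels, the posterior $\mathbb{P}(T\mid e)$ is strictly increasing in $y$ and $\mathbb{P}(F\mid e)$ strictly decreasing. Reading off the displayed formulas, for each fixed $a\in[0,1]$ the map $y\mapsto U(a,y,e)$ has strictly positive derivative on $(0,1)$: the veracity-posterior term is strictly increasing, while $-(1-a)\,\mathbb{P}(F\mid e)$ is nondecreasing because $1-a\ge0$. Since $a(y,e)$ is an interior maximizer that is smooth in $y$ (it lies in $(0,1)$ for $y\in(0,1)$, cf.\ Lemma~\ref{lemma U a V} and Assumption~\ref{sufficient for a below 1}), the envelope theorem yields $\tfrac{d}{dy}V(y,e)=\partial_y U\big(a(y,e),y,e\big)>0$, and $V(\cdot,e)$ is continuous as a composition of continuous maps. (If one prefers to avoid the envelope theorem, substitute $a(y,e)=\tfrac1{2\beta}\cdot(\text{coefficient})\cdot\theta\,\mathbb{P}(F\mid e)$ into $U$ and differentiate directly; monotonicity then reduces to an inequality of the form $2\beta>(\text{coefficient})\,\theta$, which holds because that coefficient is at most $\mu$ and $\mu\theta<2\beta$ by Assumption~\ref{sufficient for a below 1}.) This monotonicity step is the one needing genuine care rather than bookkeeping: one must check that the envelope argument is legitimate (interiority and smoothness of $a(y,e)$, exactly where Assumption~\ref{sufficient for a below 1} and the remarks in Lemma~\ref{lemma U a V} do their work) and that the sign of $\partial_y U$ is unambiguous over the whole admissible range of $a$.

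Finally I would sign $V$ at the endpoints. At $y=1$, $\mathbb{P}(T\mid e)=1$ and $\mathbb{P}(F\mid e)=0$, so $a(1,e)=0$ and $V(1,M)=\lambda>0$, $V(1,I)=1>0$. At $y=0$, $\mathbb{P}(T\mid e)=0$ and $\mathbb{P}(F\mid e)=1$, so $V(0,M)=\lambda\mu\theta\big(\tfrac{\lambda\mu\theta}{4\beta}-1\big)$ and $V(0,I)=c\theta\big(\tfrac{c\theta}{4\beta}-1\big)$; since $\lambda<1$ gives $\lambda\mu<\mu$ and $c=\lambda(1+\mu)-1<\mu$, Assumption~\ref{sufficient for a below 1} ($\mu\theta<2\beta$) forces $\tfrac{\lambda\mu\theta}{4\beta}<\tfrac12$ and $\tfrac{c\theta}{4\beta}<\tfrac12$, hence $V(0,M)<0$ and $V(0,I)<0$. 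Continuity, strict monotonicity, and the intermediate value theorem then deliver unique $\hat y_M,\hat y_I\in(0,1)$ with $V(\hat y_M,M)=V(\hat y_I,I)=0$.
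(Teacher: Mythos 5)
Your proposal is correct and takes essentially the same route as the paper: closed-form expressions for $U$ that are affine in $a$, monotonicity of $V$ from the key fact that $\partial_y U(a,y,e)>0$ for every fixed $a$ (using Assumption \ref{condition on lambda} for $e=I$), and the endpoint signs $V(1,M)=\lambda$, $V(1,I)=1$, $V(0,M)=\lambda\mu\theta\bigl(\tfrac{\lambda\mu\theta}{4\beta}-1\bigr)<0$, $V(0,I)<0$ plus the intermediate value theorem, exactly as in Appendix \ref{appendix:proofs}. The only cosmetic difference is that where you invoke the envelope theorem, the paper uses the direct comparison $V(y_1)=U(a(y_1),y_1)<U(a(y_1),y_2)\le U(a(y_2),y_2)=V(y_2)$, which delivers strict monotonicity from the same inequality on $\partial_y U$ without any smoothness or interiority caveats.
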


\begin{table}[h]
\centering
\caption{\textbf{Regions and Decision Rules}}
\label{table:regions}
\setlength{\tabcolsep}{13pt}
\begin{tabular}{ll@{}}
\toprule
$N= (0,\min\{\hat{y}_M, \hat{y}_I\})$& Don't share any story.\\
$I=(\hat{y}_I,\hat{y}_M)$ &  Share only very interesting (with signal $T'$).\\
$M =(\hat{y}_M,\hat{y}_I)$ & Share only mildly interesting (with signal $T'$).\\
$S= (\max\{\hat{y}_M, \hat{y}_I\},1)$ & Share both mildly and very interesting (with signal $T'$).\\
\bottomrule
\end{tabular}
\end{table}

Users' sharing behavior depends on the share of true stories $y_n$. When $y_n$ is below both thresholds, the expected value from sharing is negative for both evocativeness levels so users do not share at all. When $y_n$ is above both thresholds, users share both types of stories,  and otherwise they share only one type of story, as shown in Table \ref{table:regions}. When the state is exactly at a threshold, users are indifferent between the two associated policies. Note that the system always has three regions: the extreme regions $N$ to the left and $S$ to the right, and an intermediate region that is either $I$ or $M$ depending on the ordering of  $\hat{y}_I$ and $\hat{y}_M$. Numerical examples in Online Appendix \ref{appendix:additional} show that both $\hat{y}_M<\hat{y}_I$ and $\hat{y}_M>\hat{y}_I$ are possible, so the intermediate region can be either of the two. 
\section{Dynamics}\label{subsec:dynamics}
\subsection{The Discrete-Time Stochastic Process}
We begin the analysis of dynamics by describing how the share of true stories evolves in each region. Let  $p^T_R(y),p^F_R(y)$ be  the probabilities that the agent shares a true or false story, respectively, when the current share of true stories is $y$ under the decision rule of region $R\in\{N,I,M,S\}$. These are given by
\begin{equation}\label{probabilities}
   p^T_R(y),\,p^F_R(y)= 
\begin{cases}
y,\, (1-y)\theta\left(1-\delta a (y,I)-(1-\delta)a(y,M)\right),
&R=S \\ 
\frac{y}{2},\,(1-y)\delta\theta \left(1-a(y,I)\right),
&R=I \\

 \frac{y}{2},\,(1-y)(1-\delta) \theta \left(1-a(y,M)\right),
& R=M\\  

0,\,0,
& R=N.
		 \end{cases}
\end{equation}

For example, $p^F_I(y)= (1-y)\delta \theta \left(1-a(y,I)\right)$ because in region $I$ users share a false story if and only if all of the following occur: They drew a false story, the story is very interesting, and they  observed the signal $T'$.

The following Markov processes describe how the system would evolve if users followed the  decision rule of some region $R\in\{N,I,M,S\}$, whether or not it was optimal given the current state:

\begin{equation}\label{lom z_n}
z_{n+1;R} = z_{n;R} +
\begin{cases}
\renewcommand{\arraystretch}{.8}
\begin{pmatrix} 1 + \rho \\ \kappa \end{pmatrix},
& \text{with probability } \quad p^T_R(y_n) \\[14pt]
\renewcommand{\arraystretch}{.8}
\begin{pmatrix} 1 \\ \kappa + \rho \end{pmatrix}, 
& \text{with probability } \quad p^F_R(y_n) \\[14pt]
\renewcommand{\arraystretch}{.8}
\begin{pmatrix} 1 \\ \kappa \end{pmatrix}, 
& \text{w.p} \quad 1 - p^T_R(y_n) - p^F_R(y_n).
\end{cases}
\end{equation}
\medskip

 Appendix \ref{appendix:limit ODE} shows  these processes are \emph{generalized Polya urns} (GPUs), which lets us apply results from \textcite{schreiber2001urn} and \textcite{benaim2004generalized}.

The law of motion for $y_n$ in region $R$ is\footnote{If $z_{n+1}-z_n= \begin{pmatrix}a\\ b\end{pmatrix}$ then $y_{n+1}-y_n = \frac{y_n|z_n|+a}{|z_n|+a+b}-y_n =\frac{(1-y_n)a-y_nb}{|z_n|+a+b}$.} 

\begin{equation}\label{lom y_n}
    y_{n+1} -y_{n}= 
\begin{cases}
\dfrac{(1-y_n)(1+\rho)-y_n\kappa}{|z_n|+1+\kappa+\rho}, 
& \text{with probability}\quad p^T_R(y_n)\bigskip \\ 

\dfrac{(1-y_n)-y_n(\kappa+\rho)}{|z_n|+1+\kappa+\rho},
& \text{with probability}\quad p^F_R(y_n)\bigskip \\

\dfrac{(1-y_n)-y_n\kappa}{|z_n|+1+\kappa} ,
& \text{w.p}\quad 1-p^T_R(y_n)-p^F_R(y_n). \bigskip \\  

		 \end{cases}
\end{equation}

We will use stochastic approximation  to approximate the behavior of the discrete stochastic system $\{y_n\}_{n\ge0}$ by a continuous-time deterministic system. If our system was a single GPU, we could apply results in \textcite{schreiber2001urn} and \textcite{benaim2004generalized} to relate its limit behavior to that of an appropriately chosen \emph{limit differential equation}. Instead, since our system is a concatenation of the GPUs $\{z_{n;R}\}$, we relate its limit behavior to that of a differential inclusion, an equation of the form $\frac{dx}{dt}\in F(x)$ for a set-valued function $F$. We construct this inclusion, which we will refer to as the \emph{limit differential inclusion} or LDI, by pasting together the limit ODEs associated with the GPUs $\{z_{n;R}\}$.  In our model these ODEs are\footnote{See Appendix \ref{appendix:limit ODE} for the derivation of this equation.} 

\begin{equation}\label{general ODE}
   \frac{dy}{dt}= 1+p^T_R(y)\rho - y\big(1+\kappa+\rho(p^T_R(y)+p^F_R(y))\big):= g_R(y).
\end{equation}

For an intuition for the limit ODEs, note that in each region $R$  the expected number of incoming true stories is $1+p^T_R(y)\rho$  and the total expected number of incoming stories is $1+\kappa+\rho\left(p^T_R(y)+p^F_R(y)\right)$. So, 
\begin{equation*}
    \begin{split}
     g_R(y)& = \mathbb{E}_R[\#\text{incoming true stories in period n+1}|y_n=y]\\
     &-y\mathbb{E}_R[\#\text{total incoming stories in period n+1}|y_n=y].  
    \end{split}
\end{equation*}

Thus, according to the limit ODE $\frac{dy}{dt}=g_R(y)$, the share of true stories increases if and only if the ratio of expected incoming true stories to total expected incoming stories  is greater than the current share of true stories.

Our LDI is given by
\begin{equation}\label{inclusion}
 \frac{dy}{dt} \in F(y),  
\end{equation}
where $F(y)=\{g_R(y)\}$ within each region $R$, and at the thresholds, $F$ takes on all values in the interval between the limit ODEs:  If $\hat{y}$ is the threshold between regions $R$ and $R'$, then $
    F(\hat{y}) = [\min\{g_R(\hat{y}),g_{R'}(\hat{y})\},\max\{g_R(\hat{y}),g_{R'}(\hat{y})\}].$

\subsection{Analysis of the Limit Continuous-Time System}
 We say that a point $y^*\in(0,1)$ is a \emph{steady state} for the LDI if $0\in F(y^*)$. We say that $y^*$ is a \emph{stable steady state} for the LDI if it is a steady state and there exists $\epsilon>0$ such that for all $y\in(y^*-\epsilon,y^*+\epsilon)\setminus \{y^*\}$ we have $\sign (x) = \sign(y^*-y)$ for all $x\in F(y)$, and that a steady state is \emph{globally stable} if the system converges to it with probability 1 from any initial position. We say a steady state is repelling if there exists $\epsilon>0$ such that for all $y\in(y^*-\epsilon,y^*+\epsilon)\setminus\{y^*\}$ we have $\sign (x) = -\sign(y^*-y)$ for all $x\in F(y)$.
 
 We will relate the steady states of the LDI to the behavior of the ODEs in each region. First we note that each of these ODEs has a  globally stable steady state.

 \begin{lemma}\label{lemma unique steady state}
  For all $R\in\{N,I,M,S\}$, the ODE $\frac{dy}{dt}= g_R(y)$ defined over $[0,1]$ has a globally stable steady state $y_R^*\in(0,1)$.
 \end{lemma}

We call the  $y^*_{R}$  \emph{quasi steady states}. The geometry of the phase diagram for the LDI is determined by the relative positions of the thresholds $\hat{y}_I,\hat{y}_M$ and the quasi steady states: The thresholds determine the system's regions, and within each region the flow is towards the corresponding quasi steady state. It is therefore important  to understand the possible orderings of the four quasi steady states. Since the exogenous inflow of true and false stories is constant, any differences between the quasi steady states are due to differences in sharing behavior, so to order them we need to understand how the different decision rules lead to different mixes of true and false stories.

\subsection*{Effects of Users' Sharing Decisions}

 To evaluate the effects of the  decision rules we introduce the following terms: For decision rule $R$ and any point $y\in(0,1)$ we will refer to $p^T_R(y)/y$ and $p^F_R(y)/(1-y)$, respectively, as the \emph{truth-sharing propensity} and \emph{false-sharing propensity} for rule $R$ at $y$. These are the probabilities of sharing a true or false story, conditional on one being drawn. Let $d_R(y):=p^T_R(y)/y-p^F_R(y)/(1-y)$ denote the difference between the truth- and false-sharing propensities. We refer to $d_R(y)$ as users' \emph{discernment} at $y$ when they follow rule $R$. We say that users are \emph{discerning} if $d_R(y)> 0$. The following result shows that discernment is  a key metric for evaluating the impact of users' sharing decisions.  

\begin{lemma}\label{lemma:contribution}
    For any two decision rules $R,W\in\{N,I,M,S\}$,  $y^*_R>y^*_W$ if and only if  $d_R(y^*_R)>d_W(y^*_R)$.
\end{lemma}

 In particular, this lemma shows that if one decision rule consistently implies higher discernment than another, then its corresponding quasi steady state is also higher. 
With decision rules $S$ and $M$ users are always discerning: With rule $M$ users share all stories that are true and mildly interesting, so the truth-sharing propensity is $1/2$, while the false-sharing propensity is below $1-\delta<1/2$; with rule $S$ users share all true stories but only some false stories. Thus, the quasi steady states for these rules are above the quasi steady state for the no-sharing rule. 

Since users are discerning when sharing $M$ stories, sharing both $M$ and $I$ stories always generates a larger net increase in $y$ than sharing $I$ stories alone, so the quasi steady state for rule $S$ is above the quasi steady state for rule $I$.

In contrast to the conclusion for rules $S$ and $M$, users' discernment when only sharing $I$ stories is ambiguous, which is why the relationships between $y^*_S$ and $y^*_M$ and between $y^*_I$ and $y^*_N$ cannot be signed. The ambiguity arises because with rule $I$ the truth-sharing propensity is $1/2$ and the false-sharing propensity at $y$ is $\delta\theta(1-a(y,I))$  and  both $1/2> \delta\theta(1-a(y^*_I,I))$ and the opposite inequality can occur (for different parameters).\footnote{We revisit this issue when discussing comparative statics of $y^*_I$ with respect to $\rho$ in Section \ref{subsec:comparative}. Online Appendix \ref{appendix:additional}  provides numerical examples to show that the inequality can go both ways.} 
Finally, compare decision rules $I$ and $M$. Under decision rule $I$, users consider sharing more false stories than under $M$ because more false stories are of type $I$. Additionally, they have less of an incentive to avoid sharing false stories because the payoff to sharing $I$ stories is greater. Together, these forces imply that users are more discerning with $M$ content than $I$ content.

 Combining the observations above yields the following lemma, where we rule out knife-edge cases of equality between any two of the quasi steady states. 

\begin{lemma}\label{ordering lemma}
    $\min \{y^*_S,y^*_M\}> \max\{y^*_I,y^*_N\}$.
\end{lemma}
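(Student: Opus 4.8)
The plan is to compare the four quasi steady states by analyzing the limit ODEs $g_R$ directly. Recall that $g^*_R$ is the unique root of $g_R$ in $[0,1]$, and since each ODE has a globally stable steady state, $g_R(y) > 0$ for $y < g^*_R$ and $g_R(y) < 0$ for $y > g^*_R$. Hence to show $y^*_S > y^*_I$ it suffices to show $g_S(y) > g_I(y)$ for all relevant $y$ (or at least that $g_S(y^*_I) > 0 = g_I(y^*_I)$, which forces $y^*_S > y^*_I$); similarly for the other three inequalities $y^*_S > y^*_N$, $y^*_M > y^*_I$, $y^*_M > y^*_N$. So the real content is four sign comparisons of the functions $g_R$ defined in \eqref{general ODE}, using the explicit sharing probabilities \eqref{probabilities}.

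First I would write out $g_R(y) = 1 + p^T_R(y)\rho - y\bigl(1 + \kappa + \rho(p^T_R(y) + p^F_R(y))\bigr)$ for each of the four regions. The comparison with $N$ is easiest: $p^T_N = p^F_N = 0$, so $g_N(y) = 1 - y(1+\kappa)$, giving $y^*_N = 1/(1+\kappa)$. For any region $R$ with positive sharing, I would argue $g_R(y^*_N) > 0$: at $y = y^*_N$ we have $g_R(y^*_N) - g_N(y^*_N) = p^T_R\rho(1 - y^*_N) - y^*_N\rho p^F_R = \rho\bigl(p^T_R(1-y^*_N) - y^*_N p^F_R\bigr)$, and since $p^T_R$ is (a multiple of) $y$ while $p^F_R$ carries a factor $(1-y)$, this difference should be positive — intuitively, when a true story is shared the $\rho$ copies are true, pushing the share up relative to the no-sharing benchmark. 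Making this precise for $R \in \{S, I, M\}$ handles $y^*_S, y^*_M, y^*_I > y^*_N$ (the last being a byproduct, though it is the inequality we do \emph{not} ultimately need on the left of the min).

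The substantive comparisons are $y^*_S > y^*_I$ and $y^*_M > y^*_I$. For $y^*_M > y^*_I$: in region $I$ only very interesting $T'$-stories are shared, in region $M$ only mildly interesting ones; both have $p^T = y/2$, so $g_M(y) - g_I(y) = -y\rho\bigl(p^F_M(y) - p^F_I(y)\bigr) = y\rho\bigl(p^F_I(y) - p^F_M(y)\bigr)$. Now $p^F_I(y) = (1-y)\delta\theta(1 - a(y,I))$ and $p^F_M(y) = (1-y)(1-\delta)\theta(1 - a(y,M))$; since $\delta > 1/2 > 1-\delta$ and $a(y,M) \ge a(y,I)$ is \emph{not} guaranteed, I would need the monotonicity facts from the text (e.g. $a(y,I)$ increasing in $\delta$, $a(y,M)$ decreasing in $\delta$) together with $\delta > 1/2$ to conclude $\delta(1-a(y,I)) > (1-\delta)(1-a(y,M))$, hence $p^F_I > p^F_M$ and $g_M > g_I$. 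Evaluating at $y = y^*_I$ gives $g_M(y^*_I) > 0$, so $y^*_M > y^*_I$. For $y^*_S > y^*_I$: in region $S$ a strictly larger set of stories is shared than in $I$ (all $T'$-stories versus only very interesting ones), so $p^T_S(y) = y > y/2 = p^T_I(y)$, and I would show the net effect on $g$ is positive — the extra true-story sharing (with all-true reposts) dominates the extra false-story sharing, using $\mathbb{P}(F'|F) < 1$. The bookkeeping $g_S(y) - g_I(y) = \rho\bigl((p^T_S - p^T_I)(1-y) - y(p^F_S - p^F_I)\bigr)$ plus the explicit forms should give a clean sign.

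The main obstacle I anticipate is the $y^*_M > y^*_I$ comparison, specifically controlling the term $\delta(1 - a(y,I))$ versus $(1-\delta)(1 - a(y,M))$: since the attention levels move in opposite directions with $\delta$ and have different $y$-dependence, a crude bound may not suffice and I may need to substitute the closed forms from Lemma~\ref{lemma U a V} and verify a polynomial inequality in $y$ (using $\delta > 1/2$, $\theta,\lambda < 1$, and Assumption~\ref{condition on lambda} $\mu > (1-\lambda)/\lambda$ so that the $I$-attention numerator $\lambda\mu - (1-\lambda) > 0$). A secondary subtlety is that $g_S - g_I$ and $g_M - g_I$ need only be signed on the range where the comparison is used (e.g. at the other steady state), so if a global sign fails I would fall back to evaluating each difference at the competitor's root, which is all that is logically required.
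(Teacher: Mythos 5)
Your overall strategy is the paper's: since each $g_R$ has a unique root in $[0,1]$ with $g_R>0$ below it and $g_R<0$ above it, it suffices to sign $g_R-g_W$, and the decomposition $g_R(y)-g_W(y)=\rho\left[(1-y)\left(p^T_R(y)-p^T_W(y)\right)-y\left(p^F_R(y)-p^F_W(y)\right)\right]$ is exactly what the paper uses. The comparisons $y^*_S>y^*_N$, $y^*_M>y^*_N$, and $y^*_S>y^*_I$ go through essentially as you sketch; for $S$ versus $I$ the difference of false-sharing probabilities collapses to $(1-y)\theta(1-\delta)\left(1-a(y,M)\right)$, and the needed bound $\tfrac12>\theta(1-\delta)\left(1-a(y,M)\right)$ follows from $1-\delta<\tfrac12$, $\theta<1$, $a(y,M)\in[0,1]$.

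The hard step, $y^*_M>y^*_I$, is where your proposal has a genuine gap. You correctly reduce it to $(1-\delta)\left(1-a(y,M)\right)<\delta\left(1-a(y,I)\right)$, but the route you propose first---monotonicity of $a(y,I)$ (increasing) and $a(y,M)$ (decreasing) in $\delta$, combined with $\delta>\tfrac12$---does not deliver it: those facts make $1-a(y,I)$ decreasing and $1-a(y,M)$ increasing in $\delta$, so $\delta\left(1-a(y,I)\right)$ and $(1-\delta)\left(1-a(y,M)\right)$ are each products of an increasing and a decreasing factor, and no ordering follows (nor can $a(y,I)$ and $a(y,M)$ themselves be ranked in general). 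The paper's proof is precisely the ``fallback'' you defer to but do not carry out: fix $y$, set $L(\delta)=(1-\delta)\left(1-a(y,M)\right)$ and $R(\delta)=\delta\left(1-a(y,I)\right)$, verify $L(\tfrac12)<R(\tfrac12)$ directly from the closed forms in Lemma \ref{lemma U a V}, and then show $\partial L/\partial\delta<0<\partial R/\partial\delta$, where each sign uses Assumption \ref{sufficient for a below 1} (via $\lambda\mu\theta<2\beta$ and $(\lambda\mu-(1-\lambda))\theta<2\beta$) and an algebraic reduction to $y^2>0$. That computation is the substantive content of the lemma, and asserting that ``a polynomial inequality should work'' does not substitute for it. A smaller error: your claim that $y^*_I>y^*_N$ comes out ``as a byproduct'' is false---at $y^*_N$ the relevant sign is that of $\tfrac12-\delta\theta\left(1-a(y,I)\right)$, which can go either way, and the paper explicitly notes that the order of $y^*_I$ and $y^*_N$ is ambiguous. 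The lemma does not need that inequality, but its failure shows your heuristic ``$p^T_R$ carries a factor $y$, $p^F_R$ carries a factor $1-y$, hence the difference is positive'' is not valid in general and must be replaced by the explicit bounds in each region.
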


The arguments for Lemma \ref{ordering lemma} do not rely on our specific choices of payoffs and signal function, so we expect that this ordering is satisfied for all specifications in which signal precision is increasing in attention and the payoff to sharing an $I$ story is greater than the payoff to sharing an $M$ story.

 Online Appendix \ref{appendix:online} shows that Lemma \ref{ordering lemma} is the only restriction on the ordering of the quasi steady states and thresholds. That is, if the relationship between any two of these variables is not determined by Lemma \ref{ordering lemma} then it can go both ways.   We then explain why this implies that there are 40 possible strict configurations for the five variables that pin down the phase diagram: the two thresholds, and the quasi steady states for the system's three regions, i.e., $y^*_S,y^*_N$ and one of $y^*_I,y^*_M$. As mentioned above, to avoid knife-edge cases we assume that no two quasi steady states are equal. Similarly, we also rule out equality between any two of the quasi steady states and thresholds.

\begin{figure}[h!]
    \centering
    \begin{subfigure}[b]{0.45\textwidth}
        \centering
        (a)
        \begin{tikzpicture}
            \node[] at (-2,1) {\footnotesize REGION N};
            \node[] at (0,1) {\footnotesize REGION I};
            \node[] at (2,1) {\footnotesize REGION S};
            \node[] at (.5,-.3) {\footnotesize $y^*_S$};
            \node[] at (0,-.3) {\footnotesize $y^*_N$};
            \node[] at (-2,-.3) {\footnotesize $y^*_I$};
            \node[] at (1,-.3) {\footnotesize $\hat{y}_M$};
            \node[] at (-1,-.3) {\footnotesize $\hat{y}_I$};
            \node[] at (3,-.3) {\footnotesize 1};
            \node[] at (-3,-.3) {\footnotesize 0};
            \draw [dashed, red] (-1,0) -- (-1,1);
            \draw [dashed, red] (1,0) -- (1,1);
            \draw[very thick, blue, ->] (-3,.5) -- (-1.1,.5);
            \draw[very thick, blue, ->] (.9,.5) -- (-.9,.5);
            \draw[very thick, blue, ->] (3,.5) -- (1.1,.5);
            \draw[draw=violet, fill=violet] (0,0) rectangle ++(0.1,0.3);
            \draw[draw=violet, fill=violet] (.5,0) rectangle ++(0.1,0.3);
            \draw[draw=violet, fill=violet] (-2,0) rectangle ++(0.1,0.3);
            \draw[draw=green, fill=green] (-1,0) rectangle ++(0.1,0.3);
            \draw [very thick] (-3,0) -- (3,0);
        \end{tikzpicture}
    \end{subfigure}
    \hfill
    \begin{subfigure}[b]{0.45\textwidth}
        \centering
        (b)
        \begin{tikzpicture}
            \node[] at (-2,1) {\footnotesize REGION N};
            \node[] at (0,1) {\footnotesize REGION I};
            \node[] at (2,1) {\footnotesize REGION S};
            \node[] at (.5,-.3) {\footnotesize $y^*_S$};
            \node[] at (0,-.3) {\footnotesize $y^*_I$};
            \node[] at (-.5,-.3) {\footnotesize $y^*_N$};
            \node[] at (1,-.3) {\footnotesize $\hat{y}_M$};
            \node[] at (-1,-.3) {\footnotesize $\hat{y}_I$};
            \node[] at (3,-.3) {\footnotesize 1};
            \node[] at (-3,-.3) {\footnotesize 0};
            \draw [dashed, red] (-1,0) -- (-1,1);
            \draw [dashed, red] (1,0) -- (1,1);
            \draw[very thick, blue, ->] (-3,.5) -- (-1.1,.5);
            \draw[very thick, blue, ->] (-.9,.5) -- (-.1,.5);
            \draw[very thick, blue, ->] (.9,.5) -- (.1,.5);
            \draw[very thick, blue, ->] (3,.5) -- (1.1,.5);
            \draw[draw=violet, fill=violet] (-.5,0) rectangle ++(0.1,0.3);
            \draw[draw=violet, fill=violet] (.5,0) rectangle ++(0.1,0.3);
            \draw[draw=green, fill=green] (0,0) rectangle ++(0.1,0.3);
            \draw [very thick] (-3,0) -- (3,0);
        \end{tikzpicture}
    \end{subfigure}
    \vfill
    \begin{subfigure}[b]{0.45\textwidth}
        \centering
        (c)
        \begin{tikzpicture}
            \node[] at (-2,1) {\footnotesize REGION N};
            \node[] at (0,1) {\footnotesize REGION M};
            \node[] at (2,1) {\footnotesize REGION S};
            \node[] at (2,-.3) {\footnotesize $y^*_M$};
            \node[] at (1.5,-.3) {\footnotesize $y^*_S$};
            \node[] at (-2.5,-.3) {\footnotesize $y^*_N$};
            \node[] at (1,-.3) {\footnotesize $\hat{y}_I$};
            \node[] at (-1,-.3) {\footnotesize $\hat{y}_M$};
            \node[] at (3,-.3) {\footnotesize 1};
            \node[] at (-3,-.3) {\footnotesize 0};
            \draw [dashed, red] (-1,0) -- (-1,1);
            \draw [dashed, red] (1,0) -- (1,1);
            \draw[very thick, blue, ->] (-3,.5) -- (-2.5,.5);
            \draw[very thick, blue, ->] (-1.1,.5) -- (-2.4,.5);
            \draw[very thick, blue, ->] (1.1,.5) -- (1.5,.5);
            \draw[very thick, blue, ->] (3,.5) -- (1.6,.5);
            \draw[very thick, blue, ->] (-.9,.5) -- (.9,.5);
            \draw[draw=green, fill=green] (-2.5,0) rectangle ++(0.1,0.3);
            \draw[draw=red, fill=red] (-1,0) rectangle ++(0.1,0.3);
            \draw[draw=green, fill=green] (1.5,0) rectangle ++(0.1,0.3);
            \draw[draw=violet, fill=violet] (2,0) rectangle ++(0.1,0.3);
            \draw [very thick] (-3,0) -- (3,0);
        \end{tikzpicture}
    \end{subfigure}
    \hfill
    \begin{subfigure}[b]{0.45\textwidth}
        \centering
        (d)
        \begin{tikzpicture}
            \node[] at (-2,1) {\footnotesize REGION N};
            \node[] at (0,1) {\footnotesize REGION M};
            \node[] at (2,1) {\footnotesize REGION S};
            \node[] at (2,-.3) {\footnotesize $y^*_S$};
            \node[] at (0,-.3) {\footnotesize $y^*_M$};
            \node[] at (-2,-.3) {\footnotesize $y^*_N$};
            \node[] at (1,-.3) {\footnotesize $\hat{y}_I$};
            \node[] at (-1,-.3) {\footnotesize $\hat{y}_M$};
            \node[] at (3,-.3) {\footnotesize 1};
            \node[] at (-3,-.3) {\footnotesize 0};
            \draw [dashed, red] (-1,0) -- (-1,1);
            \draw [dashed, red] (1,0) -- (1,1);
            \draw[very thick, blue, ->] (-3,.5) -- (-2,.5);
            \draw[very thick, blue, ->] (-1.1,.5) -- (-1.9,.5);
            \draw[very thick, blue, ->] (-.9,.5) -- (-.1,.5);
            \draw[very thick, blue, ->] (.9,.5) -- (.1,.5);
            \draw[very thick, blue, ->] (3,.5) -- (2.1,.5);
            \draw[very thick, blue, ->] (1.1,.5) -- (2,.5);
            \draw[draw=green, fill=green] (0,0) rectangle ++(0.1,0.3);
            \draw[draw=green, fill=green] (2,0) rectangle ++(0.1,0.3);
            \draw[draw=green, fill=green] (-2,0) rectangle ++(0.1,0.3);
            \draw[draw=red, fill=red] (-1,0) rectangle ++(0.1,0.3);
            \draw[draw=red, fill=red] (1,0) rectangle ++(0.1,0.3);
            \draw [very thick] (-3,0) -- (3,0);
        \end{tikzpicture}
    \end{subfigure}
    \caption{Examples of phase diagrams.}
    \label{fig:1}
\end{figure}

Figure \ref{fig:1} presents four examples of  phase diagrams. The stable steady states of the LDI are  in green, repelling steady states are in red, quasi steady states that are not steady states are in purple, and thresholds are marked by dashed lines. 

Intuitively, one should expect that all quasi steady states that are within their regions are stable steady states for the LDI,  and our next result shows that this is indeed the case.  There can be anywhere from 0 to 3 such steady states, as illustrated in Figure \ref{fig:1}. Since every limit ODE has a unique steady state, the only other candidate steady states for the LDI are the thresholds; these are steady states where the agents are indifferent between two rules and randomize between them.\footnote{Here, as with mixed strategy equilibria in games, the randomizing probabilities are determined by an equilibrium condition.}

For a threshold $\hat y $ to be a stable steady state, the flow above it needs to point down and the flow below it needs to point up. This requires a ``flip'' of quasi steady states: Let $W$ be the region to the left of $\hat y$, and $Z$ the region to the right, a flip is $y^*_Z<\hat y <  y^*_W$. Flips around $\hat{y}_I$ occur when $\hat{y}_I<\hat{y}_M$ and $y^*_I<\hat{y}_I<y^*_N$ (as in phase diagram (a) in Figure \ref{fig:1}), or  $\hat{y}_I>\hat{y}_M$ and $y^*_S<\hat{y}_I<y^*_M$. Online Appendix \ref{appendix:additional} shows that both cases are possible, and Lemma \ref{ordering lemma} implies that flips cannot occur around $\hat{y}_M$. This implies the following characterization of the  set $\mathcal{S}$ of stable steady states.

\begin{theorem}[Stable Steady States]\label{thm S_F}
  Either (a) $\mathcal{S} =\{y^*_R | y^*_R\in \text{region}\, R\}\cup\{\hat{y}_I\}$,  or (b) $\mathcal{S} = \{y^*_R | y^*_R\in \text{region}\, R\}$. Case (a) obtains if and only if  $\hat{y}_I<\hat{y}_M$ and $y^*_I<\hat{y}_I<y^*_N$  or  $\hat{y}_I>\hat{y}_M$ and $y^*_S<\hat{y}_I<y^*_M$.  
\end{theorem}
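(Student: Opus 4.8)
The plan is to build on the sandwich already established in the text, $\mathcal{Q}\subset\mathcal{S}_F\subset\mathcal{Q}\cup\{\hat{y}_I,\hat{y}_M\}$, so that all that remains is to decide, for each of the two thresholds, whether it belongs to $\mathcal{S}_F$. The first step is to record the elementary consequence of Lemma \ref{lemma unique steady state}: since $g_R$ is continuous on $[0,1]$ and $y^*_R$ is its unique, globally stable steady state, one has $\sign g_R(y)=\sign(y^*_R-y)$ for every $y\in[0,1]$. Using this together with the definition of $F$ at a threshold $\hat y$ lying between region $W$ on its left and region $Z$ on its right---namely $F(\hat y)$ is the closed interval spanned by $g_W(\hat y)$ and $g_Z(\hat y)$---and with the standing genericity assumption that no quasi steady state equals a threshold, I would prove the key characterization: $\hat y\in\mathcal{S}_F$ if and only if there is a ``flip'', $y^*_Z<\hat y<y^*_W$. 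Indeed, genericity makes $g_W(\hat y),g_Z(\hat y)$ both nonzero; $0\in F(\hat y)$ then forces them to have opposite signs, and stability (flow pointing toward $\hat y$ on both sides, using continuity of $g_W,g_Z$ near $\hat y$) requires exactly $g_W(\hat y)>0>g_Z(\hat y)$, which by the sign identity is $\hat y<y^*_W$ and $\hat y>y^*_Z$; the reverse sign pattern gives a repelling steady state, and equal signs give no steady state at all.

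With the flip criterion in hand the theorem follows from a two-line case analysis on the ordering of the thresholds, combined with Lemma \ref{ordering lemma}. If $\hat{y}_I<\hat{y}_M$, the regions from left to right are $N,I,S$: the flip condition at $\hat{y}_M$ (between $I$ and $S$) is $y^*_S<\hat{y}_M<y^*_I$, impossible because Lemma \ref{ordering lemma} gives $y^*_S>y^*_I$; so $\hat{y}_M\notin\mathcal{S}_F$, while the flip at $\hat{y}_I$ (between $N$ and $I$) is $y^*_I<\hat{y}_I<y^*_N$. If $\hat{y}_I>\hat{y}_M$, the regions are $N,M,S$: the flip at $\hat{y}_M$ (between $N$ and $M$) would require $y^*_M<\hat{y}_M<y^*_N$, again impossible since $y^*_M>y^*_N$ by Lemma \ref{ordering lemma}; so $\hat{y}_M\notin\mathcal{S}_F$, while the flip at $\hat{y}_I$ (between $M$ and $S$) is $y^*_S<\hat{y}_I<y^*_M$. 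Hence $\hat{y}_M$ is never a stable steady state, and $\hat{y}_I$ is a stable steady state exactly under the pair of conditions stated in the theorem; substituting back into the sandwich gives case (a), $\mathcal{S}_F=\mathcal{Q}\cup\{\hat{y}_I\}$, when one of those conditions holds and case (b), $\mathcal{S}_F=\mathcal{Q}$, otherwise.

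The only genuinely delicate point---and the one I expect to be the main (still modest) obstacle---is the interface between the one-dimensional ODE language and the differential-inclusion definitions: I must make sure that ``globally stable steady state'' from Lemma \ref{lemma unique steady state} is invoked in the strong scalar sense, so that $g_R$ is strictly positive on $(0,y^*_R)$ and strictly negative on $(y^*_R,1)$ with no spurious zeros, and that the paper's formal definition of a stable steady state of the LDI indeed reduces, near an interior threshold, to ``the flow points inward from both sides''. A related bookkeeping step is to note explicitly that, because each region is an interval and $W$ (resp. $Z$) is the region immediately to the left (resp. right) of $\hat y$, a punctured neighborhood of $\hat y$ splits cleanly so that the LDI's right-hand side agrees with $g_W$ on one side and $g_Z$ on the other---this is what licenses the continuity argument---and that the knife-edge coincidences among the five pinning variables are excluded by assumption, so all the inequalities above are strict.
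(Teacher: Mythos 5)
Your proposal is correct and follows essentially the same route as the paper's proof: the sandwich $\mathcal{Q}\subset\mathcal{S}_F\subset\mathcal{Q}\cup\{\hat{y}_I,\hat{y}_M\}$, the ``flip'' criterion $y^*_Z<\hat y<y^*_W$ for a threshold to be a stable steady state, and Lemma \ref{ordering lemma} to rule out flips at $\hat{y}_M$ in both orderings of the thresholds while identifying exactly the two flip configurations at $\hat{y}_I$. Your extra care in deriving the flip criterion from the scalar sign identity $\sign g_R(y)=\sign(y^*_R-y)$ only makes explicit what the paper leaves implicit.
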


We use the term \emph{limit points} for values to which $y_n$ converges with positive probability. Since behavior in the no-sharing region ($N$) is deterministic---exactly $1$ true story and $\kappa$ false stories are added every period---if the system starts in region $N$ and $y^*_N\in N$ then $y_n\rightarrow y^*_N=\frac{1}{1+\kappa}$ deterministically. Otherwise, any stable steady state is a limit point.

\begin{theorem}[Limit Points]\label{thm limit behavior}
$y_n$ converges almost surely to a point in $\mathcal{S}$. If $y^*_N\in N$ and $y_0\in N$ then $y_n$ converges to $y^*_N$.  Otherwise, for all $y^*\in\mathcal{S}$ there is positive probability that $y_n$ converges to $y^*$. 
\end{theorem}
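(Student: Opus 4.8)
The plan is to recognize $\{y_n\}$ as a discrete stochastic approximation of the limit differential inclusion \eqref{inclusion} and then combine four ingredients: (i) the stochastic-approximation theory for differential inclusions of \textcite{benaim2005stochastic} (BHS) to show the limit set of $\{y_n\}$ is internally chain transitive (ICT) for the LDI; (ii) the one-dimensional structure of the LDI to conclude that ICT sets are single steady states; (iii) a non-convergence-to-repellers argument in the spirit of \textcite{pemantle2007survey} and \textcite{benaim2004generalized} to discard unstable steady states, yielding the almost-sure conclusion that $y_n\to\mathcal S_F$; and (iv) a reachability-plus-attraction argument for the positive-probability conclusion. The deterministic case ($y_0\in N$, $y^*_N\in N$) is handled separately by a one-line computation.

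First I would set up the Robbins--Monro form. Since each period adds between $1+\kappa$ and $1+\kappa+\rho$ stories, $|z_n|$ grows linearly; with $\gamma_{n+1}:=1/|z_n|$ one has $\gamma_n\to0$, $\sum_n\gamma_n=\infty$, and $\gamma_n\le C/n$ eventually so $\sum_n\gamma_n^2<\infty$ a.s. Moreover $T_n\ge T_0+n$ and $|z_n|\le|z_0|+(1+\kappa+\rho)n$, so $y_n$ is eventually confined to a compact subinterval of $(0,1)$ (using also that $y=1$ is unreachable). From \eqref{lom y_n} I would write $y_{n+1}-y_n=\gamma_{n+1}\bigl(g_{R(y_n)}(y_n)+b_n+M_{n+1}\bigr)$, where $R(y_n)$ denotes the region containing $y_n$, $M_{n+1}$ is a bounded $\mathcal F_n$-martingale difference (so $\sum_n\gamma_{n+1}M_{n+1}$ converges a.s.\ by $L^2$-martingale convergence), and $b_n\to0$ absorbs the $O(1/|z_n|)$ discrepancies between the denominators in \eqref{lom y_n} and $|z_n|$. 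Since $g_{R(y_n)}(y_n)\in F(y_n)$ for every state---a singleton in the interior of a region, and one of the two endpoints of the interval $F(\hat y)$ at a threshold---the affine interpolation of $\{y_n\}$ is a BHS perturbed solution of \eqref{inclusion}, even across thresholds, where $F$ is upper semicontinuous with compact convex values. BHS then gives that the limit set $L$ of $\{y_n\}$ is a.s.\ a nonempty, compact, connected, internally chain transitive set for the set-valued flow of \eqref{inclusion}. Because the LDI is one-dimensional with finitely many steady states, and $F(y)\not\ni0$ for $y$ strictly between consecutive steady states, the flow is monotone on each such gap, so it has no cycles and every connected ICT set is a single steady state. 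By Lemma \ref{lemma unique steady state} the only steady states are the quasi steady states $y^*_R$ lying in their own region $R$---each stable, since $g_R$ is strictly positive below and strictly negative above $y^*_R$---together with possibly the thresholds $\hat y_I,\hat y_M$, whose stability is resolved by Theorem \ref{thm S_F}. Hence $y_n$ converges a.s.\ to a steady state; to upgrade this to $y_n\to\mathcal S_F$ I would rule out convergence to any unstable (hence repelling) steady state $y^*$. Such a $y^*$ is necessarily a threshold (interior quasi steady states are stable), so the region immediately to its right is one of $I,M,S$, where $p^T_R>0$, $p^F_R>0$, and $1-p^T_R-p^F_R>0$ near $y^*$ (using that optimal attention is strictly below $1$, Lemma \ref{lemma U a V}), and $g_R(y^*)=0$ forces the one-step increments near $y^*$ to take both signs with conditional probability bounded below; approaching from the left the flow points away from $y^*$ (deterministically if that region is $N$, with nondegenerate noise if it is $I$ or $M$). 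The Pemantle-type non-convergence theorem then gives $\PP{y_n\to y^*}=0$, so $y_n\to\mathcal S_F$ a.s.

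For the deterministic statement, if $y_0\in N$ and $y^*_N=\tfrac1{1+\kappa}\in N$ then while $y_n\in N$ the dynamics are deterministic, $y_n=(T_0+n)/(|z_0|+(1+\kappa)n)$, which is monotone toward $y^*_N$ and never leaves $N$; hence $y_n\to y^*_N$ surely. Otherwise either $y_0\notin N$ or $y^*_N\notin N$. Fix $y^*\in\mathcal S_F$ with basin $B$ (the maximal open interval on which the LDI flow points toward $y^*$) and a compact subinterval $B'$ of $B$ with $y^*\in B'$. I would first show $\PP{y_m\in B'\text{ for some }m}>0$ by exhibiting a finite sequence of draws of positive conditional probability steering $y_n$ into $B'$: in regions $I,M,S$ both an upward and a downward one-step move have conditional probability bounded below on the range of $y$ where steady states live (again via $p^F_R>0$ and the fact that steady states avoid the extreme values of $y$ at which all three increments are single-signed), while region $N$, when $y^*_N\notin N$, is traversed rightward in finitely many deterministic steps. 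Then a standard attraction lemma for stochastic approximation---the flow carries $B'$ into a forward-invariant neighborhood of $y^*$ in bounded time, the interpolated process tracks the flow over bounded time windows with probability close to $1$, and a Lyapunov/supermartingale estimate with the summable tail noise keeps it there---yields $\PP{y_n\to y^*\mid\mathcal F_m}\ge\eta>0$ uniformly on $\{y_m\in B'\}$ for large $m$, and hence $\PP{y_n\to y^*}>0$.

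The main obstacle is the behavior at the thresholds, and especially case (a) of Theorem \ref{thm S_F}, where $\hat y_I\in\mathcal S_F$: there the increment distribution changes discontinuously and the drift is genuinely set-valued, so verifying the BHS perturbed-solution property across the jump, the uniform noise nondegeneracy needed for the Pemantle-type argument at an unstable threshold, and above all the attraction-with-positive-probability estimate toward a \emph{stable} threshold---a ``sliding'' equilibrium of the inclusion, pulled in from both sides---require the differential-inclusion versions of these arguments rather than their classical ODE counterparts. Keeping the martingale and variance bounds uniform as $y_n$ crosses a threshold, and ruling out that $y_n$ oscillates across a repelling threshold infinitely often, are the delicate technical points; the rest is routine once the GPU computations of Appendix giving $g_R$ and Lemma \ref{lemma unique steady state} are in hand.
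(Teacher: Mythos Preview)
Your proposal is correct and follows the paper's architecture: verify the BHS perturbed-solution hypotheses so the limit set of $\{y_n\}$ is internally chain transitive for the LDI, use the one-dimensional structure to reduce ICT sets to single steady states, apply a Pemantle-type noise bound (with the deterministic region $N$ handled separately, as you do) to rule out unstable thresholds, and combine reachability with a local attraction estimate for the positive-probability claim. The one place the paper proceeds differently is exactly the attraction step you flag as the main obstacle, the stable threshold $\hat y_I$ in case~(a) of Theorem~\ref{thm S_F}. Rather than a set-valued Lyapunov/tracking argument, the paper couples $y_n$ to auxiliary GPUs: for a quasi steady state $y^*_R\in R$ it uses the counterfactual process $\{z_{n;R}\}$ that follows the region-$R$ sharing rule everywhere and converges a.s.\ to $y^*_R$, noting that the two processes coincide while $y_n$ stays in $R$; for the stable threshold it constructs a fifth process $\{z_{n;H}\}$ that agrees with $\{z_n\}$ on the two regions adjacent to $\hat y_I$ but is deterministically redirected on the remaining region so that $\hat y_I$ becomes its \emph{unique} global attractor, and couples again. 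This buys a clean reduction back to the single-attractor almost-sure convergence result already established, sidestepping the differential-inclusion attraction lemma you would otherwise need to prove.
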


The proof of Theorem \ref{thm limit behavior} has three parts. First, Theorem \ref{thm limit chain transitive} in Appendix \ref{appendix:stochastic approximation} shows that $y_n$ almost surely converges to a steady state of the LDI. Second, Lemma \ref{claim stable} in Appendix \ref{appendix:proofs} shows that every stable steady state has positive probability of being the limit point. Finally, Lemma \ref{claim repelling} in Appendix \ref{appendix:proofs} shows that the system almost surely does not converge to a repelling steady state. This completes the proof, because our simplifying assumption that no two variables in $\{\hat{y}_I,\hat{y}_M,y^*_N,y^*_I,y^*_S\}$ are equal means that any steady state is either stable or repelling.\footnote{For a threshold $\hat{y}$ to be a steady state that is neither stable or repelling, the flow must have the same sign on both sides of $\hat{y}$. This is only possible at a  threshold that is also a quasi steady state, which  we have ruled out.}

\subsubsection*{Detailed Proof Summary}

   Theorem \ref{thm limit chain transitive} in  Appendix \ref{appendix:stochastic approximation} relates the limit behavior of concatenations of GPUs to the asymptotic behavior of the differential inclusions that  concatenate the corresponding ODEs.
     Applied to our system, the theorem  implies that the limit set of $y_n$, $L(y_n) = \bigcap_{m> 0}\overline{\{y_n : n>m\}}$, is almost surely a steady state of the LDI.\footnote{Here overline  denotes the closure. The proof of Theorem \ref{thm limit chain transitive} extends a result in \textcite{schreiber2001urn} on continuous-time interpolations and perturbed solutions, and then  applies a result in BHS that characterizes limits of perturbed solutions. (See appendix \ref{appendix:stochastic approximation} for definitions of these terms.) }

To prove Lemma \ref{claim stable}, that  there is positive probability of convergence to every stable steady state, we  first show that $y_n$ has positive probability of converging to any $y^*_R$ conditional on starting from states $z_m$ with $|z_m|$ sufficiently large and $y_m$ sufficiently close to $y^*_R$. This claim is true for a counterfactual process that follows the decision rule of region $R$ everywhere, because that process converges almost surely to $y^*_R$. This implies that the claim is also true for $y_n$, because: i) when $y_n$ is in region $R$ it follows the same law of motion as the counterfactual process, and ii) as we show, starting from a state $z_m$ with $|z_m|$ sufficiently large and $y_m$ sufficiently close to $y^*_R$ the counterfactual process (and therefore also $y_n$) has positive probability of never leaving region $R$. We complete the proof for the quasi steady states  by showing that the system has positive probability of arriving at a state $z_m$  from which convergence occurs with positive probability. The proof for the case where the stable steady state is $\hat{y}_I$ is similar but uses a different counterfactual process. 

Finally, the proof of Lemma \ref{claim repelling}, that $y_n$ almost surely does not converge to a repelling steady state,  uses Theorem \ref{thm nonconvergence} in Appendix \ref{appendix:stochastic approximation}, which shows that a sufficient condition for nonconvergence to a repelling steady state is that there is a positive uniform lower bound on the noise in the stochastic process. Intuitively, noise jiggles $y_n$ away from the steady state, and because the steady state is repelling, the drift of the process will tend to move it further away.

\subsection*{Discussion}

 Our simplified representation of platform dynamics allows for rich limit behavior. Our finding that the limit share of true stories is random, though not mathematically surprising within the context of generalized urns, has notable implications for the evolution of platform composition. It implies that starting from the same initial platform composition and parameters, the system can end up at very different limits in terms of both the share of true stories and users' limit actions. For instance, in some cases the system has positive probability of converging to any  of three limits: One in which the share of true stories is low and users do not share at all (since the probability of sharing a false story is high), one in which the share of true stories is intermediate and users share only stories with one evocativeness level (very interesting/mildly interesting), and one in which the share of true stories is high and users share both very interesting and mildly interesting stories. This path-dependence suggests that the long-run outcome can be influenced by shocks to the platform, such as a sudden influx of false stories. These shocks will be more likely to change limit behavior if they occur early, when the overall number of stories is small. 
 
In this context, our model can be interpreted as tracking the entire universe of stories on a platform or channel. Alternatively, it could apply to stories on a single issue (e.g., COVID-19, a presidential election, etc.), if users take into account only the prevalence of false stories on that issue in their sharing decisions. In both cases, our results imply that the nature of early discourse on a channel or issue could have long-lasting implications. A suggestive illustration appears in \textcite{zhang2021understanding}, which considers two Reddit communities dedicated to COVID-19 that began with similar user bases and content. Following an early platform-level shock in which one was designated the official community and subjected to stricter moderation, the two rapidly diverged, with the official community attracting science enthusiasts and the other becoming a hub for conspiracy theories.   

\section{Comparative Statics}\label{subsec:comparative}

 This section explains how 
 the limit points change with respect to the credibility parameter $\theta$, the reach parameter $\rho$, and the false story production rate $\kappa$---the parameters for which the comparative statics are the most interesting.
  Online Appendix \ref{appendix:comparative} gives the results and proofs that underlie  this section as well as comparative statics  with respect to the other parameters.\footnote{Among other results, we show that the limit share of true stories can  either increase or decrease in the probability that false stories are true $(\delta)$ and the cost of attention $(\beta)$. The counterintuitive result for $\beta$ arises when the limit point is $\hat{y}_I$.}

 \bigskip 
 
  \noindent \textbf{The role of $\theta$}
  \begin{table}[h]
  \centering
  \caption{\textbf{Comparative Statics for \(\theta\)}}
  \label{table:theta}
  \setlength{\tabcolsep}{12pt} 
  There are  switchpoints $\theta_M,\theta_I,\theta_S\in(0,1]$ such that:
  \begin{tabular}{ll@{}}
  \toprule
  \(y^*_M\) & Decreasing for \(\theta < \theta_M\) and increasing for \(\theta > \theta_M\). \\
  \(y^*_S\) & Decreasing for \(\theta < \theta_S\) and increasing for \(\theta > \theta_S\). \\
  \(y^*_I\) & Decreasing for \(\theta < \theta_I\) and increasing for \(\theta > \theta_I\). \\
  \(\hat{y}_I\) & Increasing. \\
  \bottomrule
  \end{tabular}
  \end{table}

Numerical examples reported in Online Appendix \ref{appendix:additional} show that each of the switchpoints in Table \ref{table:theta} can be interior. When they are, the associated quasi steady state is decreasing in $\theta$ up to a point and then increasing.\footnote{The candidate limit point $\hat{y}_I$ behaves differently:  Users' payoffs from sharing are decreasing in the credibility of false stories, so $\hat{y}_I$ needs to increase to maintain indifference.} The intuition is that when the credibility of false stories increases it is harder to identify false stories, but users are aware of this and pay more attention: Both $a(y,I)$ and $a(y,M)$ are increasing in $\theta$. This leads to two opposing forces on the limit share of true stories, and our model predicts that either one can prevail. That is, for sufficiently large values of $\theta$ the increase in attention more than compensates for the increase in credibility.

To illustrate, fix all parameters except the credibility $\theta$ at levels that ensure that $y^*_S$ is the unique limit point of the system for any value of $\theta$ and that the switchpoint $\theta_S$ is interior in $(0,1)$ (see Online Appendix \ref{appendix:additional} for an example of such a configuration). Suppose there is a single false news producer who pays a cost that increases in $\theta$, and obtains a payoff that is increasing in the limit share of false news. Then the producer would not choose a credibility level above the switchpoint $\theta_S$, even if credibility is free. Thus, the overall prevalence of false stories may be higher in a world where all false stories are of limited credibility than in one where they are indistinguishable from the truth. Consequently, false news producers may choose intermediate credibility levels to benefit from the ``lulling effect" associated with lower credibility. This could provide a partial rationale for propaganda strategies such as the ``firehose of falsehood," a term coined by \textcite{paul2016russian} to refer to contemporary Russian propaganda, characterized by a ``shameless willingness to disseminate partial truths or outright fictions."  

Another interpretation of $\theta$ is that the social media platform implements a fact-checking scheme that never mislabels true stories as false, with $\theta$ the probability that a false story is \emph{not} flagged as false. Under this interpretation, the comparative statics of the quasi steady states with respect to $\theta$ imply that if flagging rates are low ($\theta$ is high), marginally improving them may have unintended consequences. Again, the intuition relates to a counterbalancing force driven by attention choices. When more stories are flagged, users pay less attention. This means they are more likely to share stories that have not been flagged, which can lead to an overall increase in the limit share of false stories.

Under this interpretation, the comparative statics for the quasi steady states $\{y^*_S,y^*_I,y^*_M\} $ are closely related to what \textcite{pennycook2020implied} calls the ``implied truth effect'': the idea that, in the presence of flagging, content that is not flagged as false is considered validated. \textcite{pennycook2020implied} shows how this effect can arise through Bayesian updating, and demonstrates it in experiments. Instead of considering the effect of flagging on independent individual decisions, we consider its effect on limit platform composition. In our setting, the effect is mediated by attention, with users paying less attention to stories that have not been flagged. Our results show that the implied truth effect may outweigh the direct beneficial effect of flagging so that improving flagging rates may increase the prevalence of  false stories on the platform. Relatedly, \textcite{acemoglu2023online} finds that the interaction of the implied truth effect with a platform's engagement-maximizing incentives may lead a regulator to censor less misinformation than is technologically feasible. In that  model, the regulator always prefers some censorship to none. In ours, there are cases  where some flagging instead of none increases the spread of false news.\footnote{No flagging corresponds to $\theta=1$, and we saw above that this can lead to a higher share of true stories than some values of $\theta<1$.}

Finally, the comparative statics with respect to $\hat{y}_I$ imply that the limit share of true stories may be everywhere decreasing in the flagging rate, through the constraint that users are indifferent, a mechanism distinct from the implied truth effect.

\bigskip 

\noindent\textbf{The role of $\rho$}
\begin{table}[h]
\centering
\caption{\textbf{Comparative Statics for $\rho$}}
\label{table:rho}
\setlength{\tabcolsep}{10pt}
\begin{tabular}{ll@{}}
\toprule
\(y^*_M\) & Increasing. Converges to $1$ as $\rho\rightarrow\infty$. \\
\(y^*_S\) & Increasing in $\rho$. Converges to $1$ as $\rho\rightarrow\infty$. \\
\(y^*_I\) & Increasing if $d_I(\frac{1}{1+\kappa})>0$, decreasing if the inequality is reversed. \\
&Converges to either $0,1$ or to an interior value as  $\rho\rightarrow\infty$ \\

\(\hat{y}_I\) & Constant. \\
\bottomrule
\end{tabular}
\end{table}

The reach parameter has no effect on the location of the threshold $\hat{y}_{I}$ because it is not an argument in users' payoffs. To understand the effects of $\rho$ on the quasi steady states, recall that $d_R(y)$ is the difference between users' truth- and false-sharing propensities when they follow decision rule $R$ at point $y$. Under decision rules  $S$ and $M$, it is always the case that $d_R(y)>0$: When users follow either rule, the net increase in the share of true stories is larger than when they do not share, which explains why the corresponding quasi steady states increase when the number of copies of each story shared by users increases. However, $d_I(y)$ can be either positive or negative, which is why the effect of reach on the quasi steady state for this rule is ambiguous. We explain below why it suffices to consider users'  discernment with rule $I$ at the no-sharing quasi steady state $y^*_N=\frac{1}{1+\kappa}$ to sign the effect of $\rho$ on $y^*_I$.

As the reach parameter increases to infinity, the quasi steady states $y^*_S,y^*_M$ converge to $1$ while $y^*_I$ can converge to $0$, to $1$, or to some interior value $\bar y\in (0,1)$, depending on the parameters. The limit is interior if and only if there exists $\bar y\in(0,1)$ at which $d_I(\bar y)=0$. We find that the effect of attention on $y^*_I$ always works in the opposite direction as the effect of reach, so that the change due to $\rho$ is counterbalanced and may eventually settle down.

To illustrate the effect of $\rho$ on $y^*_I$, suppose that $d_{I}(y)$ is positive at $y^*_N$, and equal to zero at some interior $\bar y > y^*_N$. When $\rho=0$, all quasi steady states equal $y^*_N=1/(1+\kappa)$. Since users are discerning with rule $I$ at $y^*_N$, a marginal increase in $\rho$ increases $y^*_I$. 
Since attention is decreasing in $y$, so is discernment. Thus, as $y^*_I$ increases in $\rho$, users' discernment decreases. But as long as discernment is positive, $y^*_I$ continues to increase in $\rho$. It cannot increase beyond $\bar y$ since at all points above $\bar y$ discernment is negative. Hence, $y^*_I$ monotonically converges to $\bar y$. This also explains why $1/(1+\kappa)$ appears in the condition in Table \ref{table:rho}: If $y^*_I(\rho)$ is increasing (decreasing) at $\rho=0$, i.e., when $y^*_I=1/(1+\kappa)$, then it is increasing (decreasing) for all $\rho$.  

 Our analysis highlights that higher reach does not inherently lead to a greater spread of false news. Instead, the impact depends on how much users prioritize sharing highly evocative stories and the prevalence of false stories within the system. 

 \bigskip 

 \noindent\textbf{The role of $\kappa$}
 
In contrast to  the parameters discussed above, the local effects of  changes in $\kappa$ are obvious: All quasi steady states are strictly decreasing in $\kappa$, and both thresholds are constant in $\kappa$, since it is not an argument in users' payoffs (see Online Appendix \ref{appendix:comparative}). The effect of $\kappa$ on which steady states are stable is more interesting.

\begin{theorem}\label{thm: comparative kappa}
    Fix values for the other parameters and let $\mathcal{S}_\kappa$ be the set of stable steady states as a function of $\kappa$. There exist $0<\kappa_1<\kappa_2<\infty$ such that 
    \begin{enumerate}
  \item $\mathcal{S}_\kappa$ is equal to $\{y^*_S\}\cup A$ for all $\kappa<\kappa_1$, where $A=\emptyset$ or $\{y^*_I\}$ or $\{\hat {y}_I\}$, depending on the other parameters. 
\item $\mathcal{S}_\kappa=\{y^*_N\}$ for all $\kappa>\kappa_2$. 
    \end{enumerate}
\end{theorem}

As $\kappa\rightarrow \infty$, the exogenous inflow of false stories dominates any inflow due to user sharing, pushing all quasi steady states to $0$. Eventually, false stories become so prevalent that users stop sharing altogether, so the unique limit point is $y^*_N$.  
At the other extreme, the proof of Theorem \ref{thm: comparative kappa} shows that when users are discerning with decision rule $I$, or when the intermediate region is $M$, then $\mathcal{S}_\kappa=\{y^*_S\}$ for all sufficiently small $\kappa$. In these cases, increasing the production rate of false stories from low to high shifts limit behavior from sharing both very interesting and mildly interesting stories to not sharing at all. We saw above that users are discerning when they share stories of both evocativeness levels, so here the exogenous decrease in the share of incoming stories that are true is amplified by user behavior.\footnote{Relatedly, some changes in $\kappa$ lead to discontinuous jumps in the distribution of $\lim_{n\rightarrow\infty}y_n$. This happens when a quasi steady state crosses a threshold so that it (or the threshold) is no longer a limit point.} However, if users are not discerning when they follow decision rule $I$, then either $y^*_I$ or $\hat{y}_I$ may be an additional limit point for small $\kappa$, in which case a significant share 
of false stories can persist in the limit even as the exogenous inflow of false 
stories vanishes. See Online Appendix \ref{appendix:additional} for examples. 
 
\subsection*{Welfare}

Recall that $V(y,e) := U(a(y,e),y,e)$ denotes a users' expected value from sharing a story with evocativeness $e$ when there is a share $y$ of true stories on the platform. Define expected welfare as 
\[
    W(y):= \mathbb{P}(I|y)\max\{V(y,I),0\}+ \mathbb{P}(M|y)\max\{V(y,M),0\},
\]
where $\mathbb{P}(I|y),\mathbb{P}(M|y)$ are the probabilities of drawing a very interesting and mildly interesting story, respectively, when the fraction of true stories is $y$.\footnote{So $\mathbb{P}(I|y)=\frac{1}{2}y+\delta(1-y)$, and $\mathbb{P}(M|y)=\frac{1}{2}y+(1-\delta)(1-y)$. } In the no-sharing region welfare is constant at $0$, in the other regions, welfare is strictly increasing in the fraction of true stories.

\begin{lemma}\label{lemma welfare}
 $W(y)$ is strictly increasing for $y\notin N$.\footnote{This result does not follow immediately from Lemma \ref{lemma thresholds}, because as $y$ increases the share of very interesting stories decreases.} 
\end{lemma}

Now, we consider the welfare impact of changing the  reach parameter $\rho$. Because reach  does not directly affects users' payoffs, and welfare is increasing in the fraction of true stories, in any given limit point the effect of reach on welfare works in the same direction as its effect on the share of true stories. However, increasing $\rho$ will change the probability of converging to each limit point, so we can only sign the effect on welfare for configurations in which either (i) $y^*_I$ is increasing in $\rho$,  (ii) $y^*_I$ is the only limit point, or (iii)  $y^*_I$ is not a limit point.  Analyzing the welfare effects of the credibility parameter is harder since this parameter has a direct negative effect on welfare as well as 
an indirect effect through the fraction of true stories. Any conclusions here would rely heavily on our functional form assumptions, which we only view as an approximation.

\section{Conclusion}\label{sec:conclusion}

This paper analyzes a model of the sharing of stories on a  social media platform when users' attention levels are endogenous and depend on the mix of true and false stories. The share of true stories converges almost surely, but the realized limit point is stochastic, and  different possible limits have very different  user sharing behavior. This randomness of the limit implies that the type of stories users happened to be exposed to in the early days of the platform and their subsequent sharing decisions can have long-term implications.

 The limit share of true stories may be either increasing or decreasing in each of the following parameters: the cost of attention, the credibility of false stories, the probability that false stories are very interesting, and the reach of shared stories.  Although  endogenous attention creates a counterbalancing force to changes in the credibility/flagging of false stories, it can intensify the effect of producing more false stories. This suggests that interventions that target producers of false news might be more efficient than attempts to stop the spread of false news already on the platform. 

Our model captures many important features in a tractable framework, and parts with most of the literature by tracking the evolution of the entire platform rather than the spread of a single story. Its key simplifying feature is that it has a one-dimensional state space. We maintain this feature while considering two-dimensional story characteristics by assuming that only a story's veracity is fixed while its evocativeness is drawn every period.    
It  would be  straightforward to analyze variations that preserve this structure. For instance, \textcite{allcott2017social} shows that   education, age, and total media consumption are strongly associated with discernment between true and false content. This user heterogeneity can be incorporated into our model by having the user's type drawn randomly every period. \textcite{allcott2017social} also finds that in the run-up to the 2016 election, both Democrats and Republicans were more likely to believe ideologically aligned articles than nonaligned ones. Such partisan considerations can be incorporated by having both the user's and story's partisanship drawn every period. 

Other important features of social media behavior could in principle be handled with similar techniques but  a larger state space. Models where some stories are always more interesting or where users care about additional (fixed) story characteristics could be analyzed as a concatenation of urn models with more colors of balls. Extending our stochastic approximation arguments to these settings is straightforward, but analyzing the associated deterministic continuous-time dynamics is more complex as they would be described by differential inclusions in two or more dimensions. Yet other features do not fall within the urn-based formulation described here. For example, our model does not  track the number of times an individual story has been shared, so it does not capture the ``illusory truth'' effect described in \textcite{pennycook2018prior}, where users perceive stories they have seen many times as more likely to be true.

\begin{appendix}
\titleformat{\section}
  {\normalfont\Large\bfseries}{Appendix \thesection:}{1em}{}

\section{Proofs}\label{appendix:proofs}
\begin{proof}[Proof of Lemma \ref{lemma U a V}]
When  $v=T$,  then $s=T'$ with probability $1$ and $e=I$ with probability $\frac{1}{2}$. When $v=F$, then $e=I$ with probability $\delta$.  Thus,
\begin{equation*}
    \mathbb{P}_{a,y}(T',T|I)=\frac{\mathbb{P}_{a,y}(T',T,I)}{\mathbb{P}_{a,y}(I)}=\frac{\frac{y}{2}}{\frac{y}{2}+(1-y)\delta}=\frac{y}{y+2(1-y)\delta}.  
\end{equation*}

Similarly, $\mathbb{P}_{a,y}(T',T|M)= \dfrac{y}{y+2(1-y)(1-\delta)}$, $ 
      \mathbb{P}_{a,y}(T',F|I) = \dfrac{2(1-y)\delta \theta(1-a)}{y+2(1-y)\delta}$, and $
    \mathbb{P}_{a,y}(T',F|M) = \dfrac{2(1-y)(1-\delta) \theta(1-a)}{y+2(1-y)(1-\delta)}.$ We can rewrite \eqref{definition of U} as
\begin{equation*}
        U(a,y,M) = \mathbb{P}_{a,y}(T',T|M)u(T,M)+\mathbb{P}_{a,y}(T',F|M)u(F,M)-\beta a^2.
\end{equation*}
Since $u(T,M)= 1$, and $u(F,M)=1-\mu$, we have
\[
 U(a,y,M) = \frac{y -2(\mu-1)(1-y)(1-\delta) \theta}{y+2(1-y)(1-\delta)}+\frac{2(\mu-1)(1-y)(1-\delta) \theta}{y+2(1-y)(1-\delta)}a-\beta a^2.
\]
Similarly, $u(T,I)= 1+\lambda$ and $u(F,I)=1+\lambda-\mu$ implies that
\[
U(a,y,I) = \frac{(1+\lambda)y-2(\mu-1-\lambda)(1-y)\delta \theta}{y+2(1-y)\delta} +\frac{2(\mu-1-\lambda)(1-y)\delta \theta}{y+2(1-y)\delta}a-\beta a^2.
\]

The functions $U(a,y,I)$ and $U(a,y,M)$ are strictly concave in $a$, and they are maximized at 
$a(y,I),a(y,M)$ respectively as defined in Lemma \ref{lemma U a V}. Finally, using Assumptions \ref{condition on lambda} and \ref{sufficient for a below 1} it is straightforward to verify that $a(y,I),a(y,M)\in[0,1]$.\end{proof}

The proof of Lemma \ref{lemma thresholds} is standard and relegated to the Online Appendix \ref{appendix:lemma 2}.

\begin{proof}[Proof of Lemma \ref{lemma unique steady state}]
First, note that by the definition of $g_R(y)$ in \eqref{general ODE}, for all $R\in\{N,I,M,S\}$ we have $g_R(0)=1$ and $g_R(1)=-\kappa$. This follows from $g_R(0)=1+p^T_R(0)\rho$ and $p^T_R(0)=0$ for all $R$, and $g_R(1)= -\kappa -p^F_R(1)\rho$ and $p^F_R(1) = 0$ for all $R$. For $R=N$, the ODE takes the simple form $g_N(y) = 1-(1+\kappa)y$ and the conclusion follows immediately with $y^*_N = \frac{1}{1+\kappa}$. For the other regions, it suffices to prove that $g_R'''(y)>0$ for all $y\in[0,1]$. Indeed, for $g_R(y)$ to have more than one root in $[0,1]$ it must have a local minimum that is greater than the first root, followed by a local maximum (between the second root and $y=1$). So, there need to be $0<w<z<1$ such that $g_R''(w)\ge 0$ while $g_R''(z)\le 0$ which cannot be the case if $g_R'''(y)>0$ for all $y\in[0,1]$. The derivatives are 
\[
\begin{split}
g'''_S(y)&= \frac{12 \theta ^2 \rho }{\beta}\left(\frac{\delta ^3  (\mu -1-\lambda)}{ (y+2 (1-y)\delta)^4}+ \frac{(1-\delta)^3(\mu -1)}{(y+2(1-y)(1-\delta))^4} \right), \\
g_I'''(y) &=  \frac{12\rho \delta ^3 \theta^2  (\mu -1-\lambda)}{\beta  (y+2 (1-y)\delta)^4},\\
g_M'''(y) &=   \frac{12 \rho(1-\delta)^3 \theta^2  (\mu -1)}{\beta  (y+2(1-y)(1-\delta))^4}.   
\end{split}
\]

By Assumption \ref{condition on lambda}, all are strictly positive for $y\in[0,1]$. Stability follows from the existence of a unique root together with $g_R(0)=1>0,g_R(1)=-\kappa<0$ for all  $R$. 
\end{proof}

\begin{proof}[Proof of Lemma \ref{lemma:contribution}] Fix two decision rules $R,W\in \{N,I,M,S\}$. By Lemma \ref{lemma unique steady state}, $y^*_R>y^*_W$ if and only if $g_R(y^*_R)=0>g_W(y^*_R)$. By \eqref{general ODE}, for all $y\in[0,1]$, 
\begin{equation*}
    g_R(y)-g_W(y) = \rho\left[(1-y)\left(p^T_R(y)-p^T_W(y)\right)-y\left(p^F_R(y)-p^F_W(y)\right) \right].
\end{equation*}
So $g_R(y^*_R)>g_W(y^*_R)$ if and only if $(1 -y^*_R)\left(p^T_R(y^*_R)-p^T_W(y^*_R)\right)>y^*_R\left(p^F_R(y^*_R)-p^F_W(y^*_R)\right)$. By Lemma \ref{lemma unique steady state}, all quasi steady states are interior in $[0,1]$, so we can divide by $(1-y^*_R)y^*_R$ and rearrange to get 
$y^*_R>y^*_W\iff g_R(y^*_R)>g_W(y^*_R)\iff d_R(y^*_R)>d_W(y^*_R).$
 \end{proof}

\begin{proof}[Proof of Lemma \ref{ordering lemma}]
By Lemma \ref{lemma unique steady state}, to prove $y^*_R>y^*_W$ for $R,W\in \{N,I,M,S\}$, it suffices to prove that for all $y\in(0,1)$: $d_R(y)>d_W(y)$. Fix $y\in(0,1)$. We will show that $\min\{d_S(y),d_M(y)\}>\max\{d_I(y),d_N(y)\}$.
By  \eqref{probabilities}, 
$d_S(y) = 1-\theta\left(1-\delta a (y,I)-(1-\delta)a(y,M)\right)$,
 $d_M(y)  = \frac{1}{2} - (1-\delta) \theta \left(1-a(y,M)\right), 
d_I(y) = \frac{1}{2} -\delta\theta \left(1-a(y,I)\right)$ and $d_N(y) = 0.$

So $\min\{d_S(y),d_M(y)\}>d_N(y)$ follows from the assumptions $\delta>\frac{1}{2},\theta<1$ and since by Lemma \ref{attention levels} both attention levels are bounded above by $1$. Similarly, $d_S(y)>d_I(y)$ if and only if $\frac{1}{2}>\theta(1-\delta)(1-a(y,M))$, which always holds. Finally, to prove that $d_M(y)>d_I(y)$ it suffices to prove $
(1-\delta)\left(1-a(y,M)\right) < \delta \left(1-a(y,I)\right).$
Let $\ell(\delta) =(1-\delta)\left(1-a(y,M)\right); r(\delta) =\delta \left(1-a(y,I)\right)$. We will prove $\ell(\delta)<r(\delta)$ for all $\delta\in[\frac{1}{2},1)$ by showing that $\ell(\frac{1}{2})<r(\frac{1}{2})$ and $\ell(\delta)$ is decreasing in $\delta$ while $r(\delta)$ is increasing in $\delta$. First, 
\begin{equation*}
    r(1/2)  =\frac{1}{4}\left(2-\frac{\theta  (1-y) (\mu-1-\lambda)}{ \beta }\right) >  \frac{1}{4} \left(2-\frac{\theta  (1-y)(\mu -1) }{\beta }\right)= \ell(1/2).
\end{equation*}
Next, 
\[
\begin{split}
    \frac{\partial \ell(\delta)}{\partial \delta}& = \frac{2 (1-\delta) \theta  (\mu -1) (1-y) (1-\delta  (1-y))}{\beta (y+2(1-y)(1-\delta))^2}-1\\
    \frac{\partial r(\delta)}{\partial \delta} &= 1-\frac{2 \delta  \theta  (1-y) (\mu -1-\lambda) ( \delta+y(1-\delta) )}{\beta(y+2 (1-y)\delta)^2}
    \end{split}
\]
Assumption \ref{sufficient for a below 1} and  $\lambda>0$ imply that $ \theta(\mu-1-\lambda )<\theta(\mu-1) < 2\beta$; simple algebra then shows that $\frac{\partial \ell(\delta)}{\partial \delta}<0$ and $\frac{\partial r(\delta)}{\partial \delta}>0$.\end{proof}

\begin{proof}[Proof of Theorem \ref{thm S_F}]
That $\{y^*_R | y^*_R\in \text{region}\, R\}\subset \mathcal{S}$ follows immediately from the definitions of these sets and of $F$. Since each limit ODE has a unique steady state, the only other possible  members of $\mathcal{S}$ are the thresholds between the regions, so $\mathcal{S}\subset \{y^*_R | y^*_R\in \text{region}\, R\}\bigcup\{\hat{y}_I,\hat{y}_M\}$. A threshold $\hat y $ is a stable steady state if for all $y\in(\hat{y}-\epsilon,\hat{y}+\epsilon)$ we have $\sign (x) = \sign(\hat{y}-y)$ for all $x\in F(y)$. This holds only if there is a ``flip'' of quasi steady states: 
 Let $W$ be the region to the left of $\hat y$, and $Z$ the region to the right, a flip is: $y^*_Z<\hat y <  y^*_W$. Flips around $\hat{y}_I$ occur if and only if one the following holds: $\hat{y}_I<\hat{y}_M$ and $y^*_I<\hat{y}_I<y^*_N$; or  $\hat{y}_I>\hat{y}_M$ and $y^*_S<\hat{y}_I<y^*_M$. In Appendix \ref{appendix:online} we show that both are possible. We now show that flips cannot occur around $\hat{y}_M$ so $\hat{y}_M\notin\mathcal{S}$. There are two possible cases:
\begin{enumerate}
    \item  $\hat{y}_I<\hat{y}_M$, so the region to the right of $\hat{y}_M$ is $S$ and the region to the left is $I$. 
    \item $\hat{y}_I>\hat{y}_M$, so the region to the right of $\hat{y}_M$ is $M$ and the region to the left is $N$.
\end{enumerate}
In Case 1 a flip cannot occur because by Lemma \ref{ordering lemma}, $y^*_S>y^*_I$. In Case 2 a flip cannot occur because by Lemma \ref{ordering lemma}, $y^*_M>y^*_N$. 
\end{proof}

\begin{proof}[Proof of Theorem \ref{thm limit behavior}]
   When $y^*_N\in N$ and $y_0\in N$, the system follows the law of motion $z_{n+1} = z_n+\begin{pmatrix}1\\ \kappa\end{pmatrix}$, so  it never leaves the region $N$ and converges deterministically to $y^*_N=\frac{1}{1+\kappa}$.  We henceforth assume that $y^*_N \notin N$ and/or  $y_0\notin N$. 
By Theorem \ref{thm limit chain transitive} in Appendix \ref{appendix:stochastic approximation}, the limit set of $y_n$ is almost surely internally chain transitive for the LDI  \eqref{inclusion}. Since the LDI is a one-dimensional autonomous inclusion, its internally chain transitive sets are simply its steady states, so $y_n$ converges almost surely to a steady state of the LDI.  By Lemma \ref{claim stable} below, when $y^*_N \notin N$ and/or $y_0\notin N$ there is positive probability of convergence to any stable steady state, and by Lemma \ref{claim repelling} there is zero probability of convergence to any repelling steady state.
\end{proof}
 
 Lemma \ref{claim stable} and Lemma  \ref{claim repelling} below are used to prove Theorem \ref{thm limit behavior}, and Lemma \ref{lemma connected} is used to prove Lemma \ref{claim stable}.

 \begin{lemma}\label{lemma connected}
      Let $\epsilon>0$ and $y\notin N $ such that  $y\in(\frac{1}{1+\kappa+\rho},\frac{1+\rho}{1+\kappa+\rho})$. Starting from any state $z_n$ with $y_n\notin N$, the system has positive probability of arriving at some $y_m\in B_\epsilon(y)$.
 \end{lemma}
 \begin{proof}
 Since the number of stories added each period is bounded, there exists some $n_{\epsilon}\in \mathbb{N}$ such that $|y_{n+1}-y_n|<\epsilon$ whenever $|z_n|>n_\epsilon$. Since $|z_n|\rightarrow \infty$ we can assume w.l.o.g. that the initial state $z_n$ satisfies $|z_n|>n_\epsilon$. For such $z_n$, we consider two possible cases: $y_n<y$ and $y_n>y$.
 
 Suppose first that  $y_n<y<\frac{1+\rho}{1+\kappa+\rho}$. If the user shares a true story in period $n$ then $1+\rho$ true and $\kappa$ false stories are added to the platform, so $y_n<y_{n+1}<\frac{1+\rho}{1+\kappa+\rho}$, and if all subsequent  users  share true stories then $y_n\rightarrow \frac{1+\rho}{1+\kappa+\rho}$.
 Thus, there exists a finite  $T=T(y_{n})>0$ such that if users share a true story every period for $T$ periods then $y_{n+T}\in  B_\epsilon(y)$.  
By a similar argument, if $y_n>y>\frac{1}{1+\kappa+\rho}$ then there is a finite $T'=T'(y_n)>0$ such that if users share false stories for $T'$ periods then $y_{n+T'}\in  B_\epsilon(y)$. At any $y_m\notin N$ there is positive probability of drawing and sharing a true story and positive probability of drawing and sharing a false story.
  Also, since region $N$ is always the leftmost region and $y\notin N$ then starting from $y_n>y$ and drawing $T'$ false stories or starting from $y_n<y$ and drawing $T$ true stories will not lead the system to enter region $N$. Thus, if $y_n<y$ ($y_n>y)$ there is positive probability of sharing $T$ ($T'$) true (false) stories consecutively so there is positive probability of $y_m\in B_{\epsilon}(y)$ for some $m>n$. 
 \end{proof}

   \begin{lemma}\label{claim stable} Assume that $y^*_N \notin N$ and/or  $y_0\notin N$. If $\psi$ is a stable steady state,  there is positive probability that $y_n\rightarrow\psi$.
   \end{lemma}
   \begin{proof}
   Let $\psi$  be a stable steady state,  and pick any $\epsilon>0$.

\noindent\emph{Step 1: Defining five auxiliary processes.}

The first four auxiliary processes are $\{z_{n;R}\}$ for $R\in \{N,I,M,S\}$ as defined in \eqref{lom z_n}. Let $y_{n;R}$ be the share of true stories in period $n$ for the process $\{z_{n;R}\}$. The differential inclusion associated with $\{z_{n;R}\}$ is $\frac{dy}{dt} \in \{g_R(y)\}$. By Lemma \ref{lemma unique steady state}, this inclusion has a unique steady state $y^*_R$, so by Theorem \ref{thm limit chain transitive}, $y_{n;R}$ converges almost surely to $y^*_R$. In particular, for any $\epsilon>0$ there exists $m_R\in\mathbb{N}$ such that starting from any $y$ in the open ball $B_\epsilon(y^*_R)$, if the total number of stories is greater than $m_R$, then $y_{n;R}$ has positive probability of remaining in $B_\epsilon(y^*_R)$ forever, i.e., $\mathbb{P}\left(y_{m;R}\in B_\epsilon(y^*_R) \,\forall m>n \,\vert\, y_{n;R}\in B_\epsilon(y^*_R), |z_{n;R}|>m_R\right)>0$. 

 The fifth  auxiliary process is used to prove convergence to $\hat{y}_I$ when it is a stable steady state so we define it only for that case. Let $L$ be the region to the left of $\hat{y}_I$ and $R$ the region to the right of $\hat{y}_I$. Since $\hat{y}_I$ is a stable steady state, $y^*_{R}<\hat{y}_I<y^*_{L}$. Let $O$ be the third region of the system ($O$ is located either to the right of $R$ or to the left of $L$).  Define an alternative stochastic process $\{z_{n;H}\}$  with share of true stories $y_{n;H}$, where the law of motion in regions $R,L$ is unchanged but in region $O$ is that  $y_{n;H}$ moves deterministically towards the nearest other region. (So if $O$ is to the right of $R$ then $y_{n;H}$ is decreasing in region $O$, and if $O$ is to the left of $L$ then it is increasing in region $O$).
 Let $\frac{dy}{dt}\in F_H(y)$ be the limit differential inclusion for this alternative process, as defined in Definition \ref{inclusion definition} in Appendix \ref{appendix:stochastic approximation}. By construction, $\hat{y}_I$ is the unique steady state for this inclusion, so Theorem \ref{thm limit chain transitive} implies that $y_{n;H}$ converges to $\hat{y}_I$ almost surely. In particular, there exists $m_H\in\mathbb{N}$ such that  $\mathbb{P}\left(y_{m;H}\in B_\epsilon(\hat{y}_I) \,\forall m>n \,\vert\, y_{n;H}\in B_\epsilon(\hat{y}_I), |z_{n;H}|>m_H\right)>0$.
\medskip
 
\noindent\emph{Step 2: Positive probability of converging to $\psi$ conditional on arriving at an open ball around it when $|z_n|$ is sufficiently large.}

Assume w.l.o.g. that $\epsilon$ is small enough that $B_\epsilon(y^*_R)\subset R$ if $\psi = y^*_R$ for some region $R$ and that  $B_\epsilon(\hat{y}_I)\subset[0,1]\setminus O$ if $\psi=\hat{y}_I$ (the previous step defines $O$ as the only region not adjacent to $\hat{y}_I$). When  $\psi=y^*_R$, $\mathbb{P}\big(y_m\in B_\epsilon(y^*_R) \forall m>n \,\vert\, y_{n}\in B_\epsilon(y^*_R), |z_n|>m_R\big)>0$, since conditional on $y_n$ remaining in $B_\epsilon(y^*_R)$ we have $y_n=y_{n;R}$. The fact that $y_n=y_{n;R}$ conditional on $y_n$ remaining in region $R$ implies that if the system arrives at a state $z_n$ such that $y_{n}\in B_\epsilon(y^*_R)$ and $|z_n|>m_R$, then $y_n$ converges to $y^*_R$ with positive probability.
If $\psi=\hat{y}_I$, an analogous argument (replacing $y_{n;R}$ with $y_{n;H}$), implies that if the system arrives at state $z_n$ such that $y_{n}\in B_\epsilon(\hat{y}_I)$ and $|z_n|>m_H$ then $y_n$ converges to $\hat{y}_I$ with positive probability.

\noindent\emph{Step 3: Positive probability of arriving at such a ball.}

We now prove that there is positive probability of arriving at $z_n$ such that $y_n\in B_{\epsilon}(\psi)$ and $|z_n|>m$ where $m$ is as defined above. 
By \eqref{general ODE}, for any region $R$, 
\[y^*_R = \frac{1+p^T_R(y^*_R)\rho}{1+\kappa+\rho\left(p^T_R(y^*_R)+p^F_R(y^*_R)\right)}.\] 
This implies that $\frac{1}{1+\kappa+\rho} < y^*_R < \frac{1+\rho}{1+\kappa+\rho}$:  the first inequality is immediate and the second  is equivalent to $
\rho\left(\kappa(1-p^T_R(y^*_R))+p^F_R(y^*_R)(1+\rho)\right)>0,$ which always holds. Since any stable steady state is either is a quasi steady state or a threshold bounded above and below by quasi steady states, the above implies that
\begin{equation}\label{y_R inequality}
    \frac{1}{1+\kappa+\rho} < \psi < \frac{1+\rho}{1+\kappa+\rho}\, \quad \forall \psi\in\mathcal{S}.
\end{equation}

By hypothesis either $y^*_N\notin N $ or $y_0\notin N$ (or both). First, assume $y_0\notin N$. If $\psi\notin N$ then the claim follows immediately from \eqref{y_R inequality} and Lemma \ref{lemma connected} above, together with $|z_n|\rightarrow \infty$ surely. If $\psi\in N$ (which means $\psi=y^*_N)$, then a similar argument as in the proof of Lemma \ref{lemma connected} implies there is positive probability of arriving at some $y_m\in N$, from which point the system will converge deterministically to $\psi=y^*_N$ and in particular enter $B_\epsilon(y^*_N)$. 

Now, assume $y_0\in N$. Then, by hypothesis, $y^*_N\notin N$. Since in region $N$ the system converges deterministically towards $y^*_N$, it surely arrives at $y_n\notin N$ with $|z_n|>m$ after finite time. Lemma \ref{lemma connected} implies there is positive probability of arriving from this $y_n$ to $B_\epsilon(\psi)$. \end{proof}
\begin{lemma}\label{claim repelling}
The system almost surely does not converge to a repelling steady state. 
\end{lemma}
\begin{proof}
Since by Lemma \ref{lemma unique steady state} all quasi steady states are 
stable for their associated ODEs, the only possible repelling steady states 
for the LDI are the thresholds $\hat{y}_I,\hat{y}_M$. Let $\hat{y}$ be a 
repelling steady state. Partition into two complementary events:

\medskip\noindent\textbf{Event $A = \{y_n \in N \text{ infinitely often}\}$.}

If $\hat{y}$ is not adjacent to $N$, then $y_n \in N$ i.o.\ prevents convergence to $\hat{y}$. 
If $\hat{y}$ is adjacent to $N$: since $\hat{y}$ is repelling, the repelling property 
forces $y^*_N \in N$ and $y^*_N \neq \hat{y}$. When $y_n$ enters $N$, the dynamics are 
deterministic and $y_n \to y^*_N$, precluding convergence to $\hat{y}$. Thus 
$\mathbb{P}(y_n \to \hat{y} \mid A) = 0$.

\medskip\noindent\textbf{Event $A^C = \{y_n \in N \text{ finitely often}\}$.}

On $A^C$, there exists a finite stopping time $\tau$ after which $y_n \notin N$ 
for all $n \ge \tau$. To show $\mathbb{P}(y_n \to \hat{y} \mid A^C) = 0$, we apply Theorem 
\ref{thm nonconvergence}, which requires: (i) $\hat{y}$ is repelling 
 and (ii) the noise satisfies 
$\mathbb{E}[\xi_{n+1}^+ \mid \mathcal{F}_n] \ge r > 0$ for all $y_n$ near $\hat{y}$. 
However, when $\hat{y}$ is adjacent to $N$, the martingale noise terms 
$\xi_n = (y_{n+1} - y_n - \mathbb{E}[y_{n+1} - y_n \mid z_n])|z_n|$ vanish on one side of $\hat {y}$. To circumvent this, we construct an auxiliary process $\{z'_n\}$ as follows: 
outside region $N$, it has the same dynamics as $\{z_n\}$; inside region $N$, 
it has negative drift with sufficient noise to satisfy Theorem 
\ref{thm nonconvergence}'s condition.  Since $\tau$ is finite almost surely, convergence of $y_n$ to $\hat y$
on $A^C$ is equivalent to convergence of the shifted process.
Hence it suffices to show $\mathbb P(y'_n \to \hat y)=0$.

For small $\epsilon > 0$, let $U := (\hat{y} - \epsilon, \hat{y} + \epsilon)$ 
intersect only the two regions adjacent to $\hat{y}$. Since $\hat{y}$ is 
repelling, for all $y \in U\setminus\{\hat y\}$:
$
\sign(x) = -\sign(\hat{y} - y) \quad \text{for all } x \in F(y).$ We verify the noise condition $\mathbb{E}[{{\xi'}^+_{n+1}} \mid \mathcal{F}'_n] \ge r$ 
for some $r > 0$ when $y'_n \in U$. From \eqref{lom y_n}, let 
$\Delta_T, \Delta_O, \Delta_F$ denote the three possible increments with 
$\Delta_T > \Delta_O > \Delta_F$. For $y'_n$ in region $R \neq N$:
\[
\mathbb{E}[{\xi'}^+_{n+1} \mid \mathcal{F}'_n] \ge p^T_R(y'_n)(1 - p^T_R(y'_n))
(\Delta_T - \Delta_O)|z'_n|.
\]

For sufficiently large $|z'_n|$:
\[
(\Delta_T - \Delta_O)|z'_n| = \frac{(\kappa + |z'_n|(1-y'_n))\rho}{(|z'_n|+1+\kappa)
(|z'_n|+1+\kappa+\rho)} |z'_n| \ge \frac{(1-y'_n)\rho}{4}.
\]

Since $y'_n \in U \subset (0,1)$ and $R \neq N$, we have $p^T_R(y'_n) \in 
\{y'_n, y'_n/2\}$ by \eqref{probabilities}, so $p^T_R(y'_n)(1 - p^T_R(y'_n))$ 
is bounded below by some $c_0 > 0$. Therefore for sufficiently large $n$:
\[
\mathbb{E}[{\xi'}^+_{n+1} \mid \mathcal{F}'_n] \ge \frac{c_0(1-y'_n)\rho}{4} \ge r > 0.
\]
By Theorem \ref{thm nonconvergence}, $\mathbb{P}(y'_n \to \hat{y}) = 0$, 
hence $\mathbb{P}(y_n \to \hat{y} \mid A^C) = 0$.

\medskip\noindent Since $\mathbb{P}(A) + \mathbb{P}(A^C) = 1$, we conclude 
$\mathbb{P}(y_n \to \hat{y}) = 0$.
\end{proof}

The proof of Lemma \ref{lemma welfare} is standard and relegated to the Online Appendix \ref{appendix:lemma 2}. 

\section{Urn Models}\label{appendix:stochastic approximation}
This appendix extends results from \textcite{schreiber2001urn} and \textcite{benaim2004generalized} about \emph{Generalized Polya urns} (GPUs).   A key feature of these urn models is that the number of balls added each period is bounded,  so that as the overall number of balls grows the change in the system's composition between any two consecutive periods becomes arbitrarily small. Within each of the regions $\{N,I,M,S\}$, our system behaves like a GPU. To analyze the entire system, we define \emph{Piecewise Generalized Polya Urns} (PGPUs), and then combine  results on GPUs with results from  BHS that extend the theory of stochastic approximation to cases where the continuous system is given by a solution to a differential inclusion rather than a differential equation.\footnote{We use results on stochastic approximation for differential inclusions, since existing results on stochastic approximation by differential equations do not apply to the  to the discontinuous system we have here. Extending published results on stochastic approximation to discontinuous differential equations would have been more complicated, and the \citet{hill1980strong} analysis of discontinuous Polya urns only covers the case where a single ball is added each period.}  Theorem \ref{thm limit chain transitive} relates the limit behavior of a PGPU to the limit behavior of the associated differential inclusion; we use it in  the proof of Theorem \ref{thm limit behavior}. Section \ref{appendix:limit ODE} explains why the processes $\{z_{n;R}\}$ defined in \eqref{lom z_n} are GPUs and derives the corresponding limit ODEs.  Section \ref{section:repelling} then proves a result about  repelling steady states for limit inclusions  that is used in the proof of Theorem \ref{thm limit behavior}.
\subsection{Definitions and Notation}
Given a vector $w \in \mathbb{R}^2$ define $\lvert w \rvert = |w^1| +|w^2|$. Let  $\{z_n\}= \{\left(z^1_n,z^2_n\right)\}$ be a homogeneous Markov chain with state space $\mathbb{Z}^2_{+}$ ($\mathbb{Z}_+$ are all the non-negative integers). Let $\Pi:\mathbb{Z}^2_{+}\times \mathbb{Z}^2_{+}\rightarrow[0,1] $ denote its transition kernel, $\Pi(z,z') = \mathbb{P}(z_{n+1}=z'|z_n=z)$. We interpret the  process as an urn model, with $z^i_n$ the number of balls of color $i$ at time step $n$. We now define two types of stochastic processes. 

\begin{definition}\label{GPU definition}
   A Markov process $\{z_n\}$ as above is a \textit{generalized Polya urn} (GPU) if:
   \begin{enumerate}[i.]
    \item Balls cannot be removed and there is a maximal number of balls that can be added. Formally, for all $z_n\in \mathbb{Z}^2_{+}$ and all $z_{n+1}$ such that $\Pi(z_{n+1},z_n)>0$: $z^1_{n+1}\ge z^1_{n}, z^2_{n+1}\ge z^2_n$ and there is a positive integer $m$ such that $|z_{n+1}-z_n|\le m$.
\item For each $w\in \mathbb{Z}^2_{+}$ with $|w| \leq m$ there  exist Lipschitz-continuous maps $p^w:[0,1] \rightarrow [0,1]$ and a real number $a>0$ such that: $\left|p^w\left(\frac{z^1}{|z|}\right)-\Pi(z,z+w)\right|\le \frac{a}{|z|}$
for all nonzero $z\in \mathbb{Z}_+^2$.
\end{enumerate}
\end{definition}

Let $y_n = \frac{z^1_n}{|z_n|}$ be the share of balls of color $1$ (i.e., of true stories.)   
\begin{definition}\label{stoc. approximation definition}
    Let $\{x_n\}$ be a stochastic process in $[0,1]$ adapted to a filtration $\mathcal\{\mathcal{F}_n\}$. We say that $\{x_n\}$ is a (one dimensional) stochastic approximation if for all $n\in\mathbb{N}$:
    \begin{equation}\label{stochastic approximation equation}
        x_{n+1}-x_n = \gamma_n\left(g(x_n)+\xi_{n+1}+R_n \right),
    \end{equation}
where $\gamma_n\in\mathcal{F}_n$ are non-negative with $\gamma_n\rightarrow 0,\sum_n \gamma_n=\infty$ almost surely, $g$ is a Lipschitz function on $\mathbb{R}$, $\mathbb{E}[\xi_{n+1}| \mathcal{F}_n]=0$ and the remainder terms $R_n\in\mathcal{F}_n$ go to zero and satisfy $\sum_{n=1}^\infty \frac{|R_n|}{n}<\infty$ almost surely. 
\end{definition}
The function $g$ in \eqref{stochastic approximation equation} is the right hand side of the \emph{limit ODE}, $\frac{dx}{dt}=g(x)$. \textcite{schreiber2001urn} and \textcite{benaim2004generalized}  derive the limit ODE of a GPU and prove that with this limit ODE the sequence $\{y_n\}$ of the share of balls of color $1$ is a stochastic approximation process. Since we will later consider a system that includes several GPUs we introduce the notation $\{z_{n;k}\}$ to refer to a general GPU. 
\begin{definition}\label{limit ODE definition}
    For a GPU $\{z_{n;k}\}$ with corresponding maps $p^w_k$, the corresponding \textit{limit ODE} is  $\frac{dy}{dt} = g_k(y)$ where $g_k:[0,1]\rightarrow[0,1]$ is given by\footnote{ Definition \ref{GPU definition}.i implies that only a finite number of the summands are non-zero.} 
    \begin{equation}\label{BTS ODE}
        g_k(y) =\sum_{w \in \mathbb{Z}^2} p^w_k(y)\left(w^1-y|w|\right).
    \end{equation} 
\end{definition}
\subsection{Stochastic Approximation of PGPUs}

This section extends the literature on GPUs to concatenations of GPUs.
\begin{definition}\label{PGPU definition}
  A Markov process $\{z_n\}$ with transition kernel $\Pi$ is a \textit{piecewise generalized Polya urn} (PGPU) if there exists  a finite integer $K$, a finite number of GPUs $\{\{z_{n;k}\}\}_{k=1}^K$ (each with kernel $\Pi_k$), and an interval partition $\{I_k\}_{k=1}^K$ of $[0,1]$, such that for all $z'$, if $\frac{z^1}{|z|}\in \mathring{I}_k$ then $\Pi(z,z') = \Pi_k(z,z')$, where $\mathring{I}$ denotes the interior of $I$.\footnote{Note that we allow for an arbitrary law of motion $\Pi(z,z')$ for $z$ such that $\frac{z^1}{|z|}= \max(I_k) = \min(I_{k+1})$, i.e, when the share of balls of color 1 is the boundary  of an interval. The systems we consider will arrive at such states  with probability zero.} 
\end{definition}
The next definition defines the analog of a  limit ODE for a PGPU.
\begin{definition}\label{inclusion definition}
For a PGPU $\{z_n\}$ the \textit{limit differential inclusion} is $\frac{dy}{dt}\in F(y)$, where 
    \begin{equation*}
    F(y) = \begin{cases}
\{g_k(y)\},& y\in\mathring{I}_k\\ 
 \{g_1(0)\},& y=0\\
 \{g_K(1)\}& y=1 \\
[\min\{g_k(y),g_{k+1}(y)\},\max\{g_k(y),g_{k+1}(y)\}], &y=\max(I_k), 1\le k<K\\  
		 \end{cases}
\end{equation*}
\end{definition}
Henceforth, we fix a PGPU $\{z_n\}$ comprised of GPUs $\{\{z_{n;k}\}\}_{k=1}^K$, with share of balls of color $1$ denoted $y_n = \frac{z^1_n}{|z_n|}$ and let 
\begin{equation}\label{general inclusion}
    \frac{dy}{dt} \in F(y)
\end{equation}
be the associated differential inclusion. In order to apply results from BHS, we need to verify that the paper's  standing assumptions on the inclusion hold. These are:
\begin{Assumptions_BHS}\label{BHS conditions}
  \begin{enumerate}
    \item  $F$ has a closed graph. 
    \item $F(y)$ is non empty, compact, and convex for all $y\in[0,1]$. 
    \item There exists $c>0$ such that for all $y\in [0,1]$, $
        \sup_{x\in F(y)}|x|\le c(1+|y|). $
\end{enumerate}
 \end{Assumptions_BHS}
 \begin{lemma}\label{lemma:BHS conditions}
The inclusion \eqref{general inclusion} satisfies the standing assumptions in BHS.
 \end{lemma}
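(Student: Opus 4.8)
The plan is to verify the three BHS standing assumptions in turn; only the closed-graph condition requires genuine work. The recurring ingredient is that each limit ODE $g_k$ is Lipschitz on $[0,1]$: by Definition \ref{limit ODE definition} it is a finite sum (only finitely many $w$ satisfy $|w|\le m$) of terms $p^w_k(y)\bigl(w^1-y|w|\bigr)$, each the product of the Lipschitz map $p^w_k$ supplied by Definition \ref{GPU definition} with an affine function of $y$, and such products are Lipschitz on the compact set $[0,1]$. In particular each $g_k$ is continuous and bounded there; write $c:=\max_{1\le k\le K}\sup_{y\in[0,1]}|g_k(y)|<\infty$.

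Assumptions (2) and (3) are then immediate from Definition \ref{inclusion definition}. For every $y\in[0,1]$, the value $F(y)$ is either a singleton $\{g_k(y)\}$ (when $y\in\mathrm{int}(I_k)$ or $y\in\{0,1\}$) or a closed bounded interval with endpoints $g_k(y)$ and $g_{k+1}(y)$ (when $y=\max(I_k)$ is a threshold). Points and closed bounded intervals of $\mathbb{R}$ are nonempty, compact and convex, which gives (2); and in either case $\sup_{x\in F(y)}|x|\le c\le c(1+|y|)$, which gives (3).

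For (1) I take $y_n\to y$ in $[0,1]$ and $x_n\to x$ in $\mathbb{R}$ with $x_n\in F(y_n)$, and show $x\in F(y)$. Since the partition $\{I_k\}$ is finite, each point of $[0,1]$ lies in exactly one of finitely many categories: the $K$ open interiors $\mathrm{int}(I_k)$ or one of the finitely many boundary points $\{0,1\}\cup\{\max(I_k):1\le k<K\}$. Passing to a subsequence, I may therefore assume either that all $y_n$ lie in a single fixed interior $\mathrm{int}(I_j)$, or that all $y_n$ equal a single fixed boundary point $\hat y$. In the latter case $y=\hat y$ and $x_n\in F(\hat y)=F(y)$ for every $n$, so $x\in F(y)$ because $F(y)$ is closed by (2). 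In the former case $x_n=g_j(y_n)$, hence $x=g_j(y)$ by continuity of $g_j$, and $y\in\overline{\mathrm{int}(I_j)}$; so either $y\in\mathrm{int}(I_j)$, in which case $F(y)=\{g_j(y)\}\ni x$, or $y$ is an endpoint of $I_j$ — that is, $y\in\{0,1\}$ or $y$ is a threshold bordering $I_j$ — and in each of these cases $g_j(y)$ is, directly from Definition \ref{inclusion definition}, an element of $F(y)$ (at a threshold $\hat y=\max(I_k)$ both $g_k(\hat y)$ and $g_{k+1}(\hat y)$ lie in $F(\hat y)$, and at $0$ or $1$ the value of $F$ is the corresponding $g$). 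Thus $x\in F(y)$ in all cases.

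The step I expect to be the main obstacle is exactly this last one — checking that sequences approaching a threshold from either side have limits lying inside the interval $F(\hat y)$ — but it is essentially built into Definition \ref{inclusion definition}, which defines $F$ at each threshold to be precisely the closed interval spanned by the two one-sided ODE fields. The only bookkeeping caveat is degenerate intervals: if some $I_j$ were a single point, its interior is empty, the ``interior'' case above is vacuous, and nothing changes; in any event the regions $N$, $I$ (or $M$), $S$ arising in our model are nondegenerate.
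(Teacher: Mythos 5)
Your proposal is correct and follows essentially the same route as the paper: (2) and (3) come straight from Definition \ref{inclusion definition} together with boundedness of the finitely many continuous $g_k$ on $[0,1]$. The only difference is that you spell out the closed-graph verification (the subsequence argument at interiors versus thresholds), which the paper's proof simply declares immediate from the definition, so your write-up is a more detailed version of the same argument rather than a different one.
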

 \begin{proof}
 Assumptions 1 and 2  follow immediately from Definition \ref{inclusion definition}. Assumption 3 follows from the fact that the $g_k(y)$ are continuous functions defined over compact sets.  \end{proof}

We relate the limiting behavior of $y_n$ to the solutions to the differential inclusion \eqref{general inclusion} using the ideas of a  \emph{perturbed solution} and a \emph{piecewise affine interpolation.}

\begin{definition}
    A continuous function $\textbf{Y}:[0,\infty)\rightarrow \mathbb{R}$ is a \emph{perturbed solution} to $\eqref{general inclusion}$ (or a ``perturbed solution to $F$'') if it is absolutely continuous, and there is a locally integrable function $t\mapsto U(t)$ such that
        \begin{itemize}
            \item    $\lim_{t\rightarrow \infty} \sup_{0\le h \le T } \lvert \int_{t}^{t+h}U(s)ds \rvert =0$ for all $T>0$
            \item $\frac{d\textbf{Y}(t)}{dt}-U(t)\in F(\textbf{Y}(t))$ for almost every $t>0$. 
            \end{itemize}
         
\end{definition}
\begin{definition}
    The \emph{piecewise affine interpolation} of $y_n$ is  
\begin{equation*}
    \textbf{Y}(t) = y_n+\frac{t-\tau_n}{\gamma_{n+1}}(y_{n+1}-y_{n}), \quad t\in[\tau_n,\tau_{n+1}].
\end{equation*}
where $\tau_{0}=0$, $\tau_{n+1} = \tau_n+\frac{1}{|z_n|}$, and $\gamma_{n+1} = \frac{1}{|z_n|}$. 
\end{definition}

\begin{thm2.2_Schreiber}\label{thm schreiber}
Let $\{z_{n;k}\}$ be a GPU. Let $\textbf{Y}^k(t)$ be the  piecewise affine interpolation of $y_{n;k}=\frac{z^1_{n;k}}{|z_{n;k}|}$, and let $\phi^k$ be the flow of the limit ODE.\footnote{The flow $\phi^k:[0,1]\times\mathbb{R}_+\rightarrow [0,1]$ such that $\phi^k(x,t)$ is the time-$t$ value of a solution to the ODE at with initial condition $x$. \textcite{schreiber2001urn} states this theorem for piecewise constant interpolations, but it also applies to piecewise affine interpolations.} Then on the event $\{\liminf_{n\rightarrow \infty}\frac{|z_{n;k}|}{n}>0 \}$, for any $T>0\,$, $\lim_{t\rightarrow \infty} \sup_{0\le h \le T } \lvert \textbf{Y}^k(t+h)-\phi^k(\textbf{Y}^k(t),h) \rvert =0.$
\end{thm2.2_Schreiber}
\medskip
The next lemma extends this result from GPUs to PGPUs.
\begin{lemma}\label{lemma perturbed solution}
    Let $\{z_n\}$ be a PGPU and \eqref{general inclusion} its limit differential inclusion, and let \textbf{Y} be its piecewise affine interpolation.  Then \textbf{Y} is a bounded perturbed solution to \eqref{general inclusion}. 
\end{lemma}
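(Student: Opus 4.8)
The plan is to put $\{y_n\}$ into the form of a discrete stochastic approximation driven by $F$, and then read off that its interpolation is a perturbed solution using the theory of BHS (together with Lemma~\ref{lemma:BHS conditions}). Two of the three requirements are immediate. Boundedness: each $\mathbf Y(t)$ is a convex combination of $y_n,y_{n+1}\in[0,1]$, so $\mathbf Y(t)\in[0,1]$. Absolute continuity: since at most $m:=1+\kappa+\rho$ balls are added per period, $|y_{n+1}-y_n|\le m\gamma_{n+1}$, so on each $[\tau_n,\tau_{n+1}]$ the slope of $\mathbf Y$ is $(y_{n+1}-y_n)/\gamma_{n+1}$, bounded by $m$; hence $\mathbf Y$ is Lipschitz. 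All the work is in constructing a perturbation $U$ with the two defining properties.

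First I would write, for each $n$, $k(n)$ for the index with $y_n\in\mathrm{int}(I_{k(n)})$; this is well defined for all $n$ with probability one, since the PGPU hits the interval endpoints with probability zero (footnote to Definition~\ref{PGPU definition}), and then $F(y_n)=\{g_{k(n)}(y_n)\}$ \emph{exactly}. Expanding $\mathbb E[y_{n+1}-y_n\mid z_n]$ from the transition kernel, using the identity $\frac{z^1+w^1}{|z|+|w|}-\frac{z^1}{|z|}=\frac{w^1-y|w|}{|z|+|w|}$, the GPU bound $|\Pi_{k(n)}(z,z+w)-p^w_{k(n)}(z^1/|z|)|\le a/|z|$ from Definition~\ref{GPU definition}, $\bigl|\tfrac{1}{|z|+|w|}-\tfrac1{|z|}\bigr|\le m/|z|^2$, and the formula \eqref{BTS ODE}, I would obtain
\[
y_{n+1}-y_n=\gamma_{n+1}\bigl(g_{k(n)}(y_n)+\xi_{n+1}+\rho_n\bigr),
\]
where $\gamma_{n+1}=1/|z_n|$, $g_{k(n)}(y_n)\in F(y_n)$, $\xi_{n+1}:=\gamma_{n+1}^{-1}(y_{n+1}-y_n)-\mathbb E[\,\cdot\mid z_n]$ is a uniformly bounded $\mathcal F_n$-martingale difference, and $|\rho_n|\le C\gamma_{n+1}$ for a constant $C$ depending only on $m$, $a$, and the common Lipschitz constant $L$ of $g_N,g_I,g_M,g_S$. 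Since at least $1+\kappa$ balls are added each period, $|z_n|\ge|z_0|+(1+\kappa)n$, so $\gamma_n\to0$, $\sum_n\gamma_n=\infty$, and $\sum_n\gamma_n^2<\infty$. Consequently $M_n:=\sum_{k<n}\gamma_{k+1}\xi_{k+1}$ is an $L^2$-bounded martingale and converges a.s., which gives the Kushner--Clark property $\lim_{N\to\infty}\sup\{|M_n-M_N| : \tau_n\in[\tau_N,\tau_N+T]\}=0$ a.s. for every $T>0$; also $\sum_n\gamma_{n+1}|\rho_n|\le C\sum_n\gamma_{n+1}^2<\infty$.

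Now I would define the perturbation. For $t\in(\tau_n,\tau_{n+1})$ with $\mathbf Y(t)\in\mathrm{int}(I_{k'})$ (the excluded $t$'s have measure zero), set $v(t):=g_{k'}(\mathbf Y(t))\in F(\mathbf Y(t))$ and $U(t):=\tfrac{d\mathbf Y(t)}{dt}-v(t)=\bigl(g_{k(n)}(y_n)-v(t)\bigr)+\xi_{n+1}+\rho_n$. The inclusion property $\tfrac{d\mathbf Y(t)}{dt}-U(t)=v(t)\in F(\mathbf Y(t))$ holds by construction, and by the previous paragraph the $\xi$ and $\rho$ contributions to $\sup_{h\le T}\bigl|\int_t^{t+h}U\bigr|$ vanish as $t\to\infty$, so it remains to control $\int_t^{t+h}\bigl(g_{k(n)}(y_n)-v(t)\bigr)\,ds$. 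On sub-intervals where $\mathbf Y(t)$ lies in the \emph{same} region $I_{k(n)}$ as $y_n$, one has $|g_{k(n)}(y_n)-v(t)|\le L|y_{n+1}-y_n|\le Lm\gamma_{n+1}$, contributing at most $Lm\sum_{\tau_n\ge t}\gamma_{n+1}^2=o(1)$ over any window.

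The main obstacle is the remaining case: $\mathbf Y(t)$ may cross a region boundary $\hat y$ during a step while $y_n$ sits on the other side, and then $v(t)=g_{k'}(\mathbf Y(t))$ is the \emph{other} region's ODE value, so $g_{k(n)}(y_n)-v(t)$ is $O(1)$ rather than $O(\gamma_{n+1})$. Here $y_n,y_{n+1}$ straddle $\hat y$, so $|y_n-\hat y|\le|y_{n+1}-y_n|=O(\gamma_{n+1})$, giving $g_{k(n)}(y_n)=g_{k(n)}(\hat y)+O(\gamma_{n+1})$ and $v(t)=g_{k'}(\hat y)+O(\gamma_{n+1})$; and every entry of $\mathbf Y$ across $\hat y$ is followed by an exit of near-equal duration on which $g_{k(n)}(y_n)-v(t)$ has the opposite sign, so these $O(1)$ mismatches telescope: using that $\gamma_n$ varies slowly ($|\gamma_{n+1}-\gamma_{n+2}|=O(\gamma_{n+1}^2)$) and $\sum_n\gamma_n^2<\infty$, their running integral over $[t,t+h]$ is $o(1)$ uniformly in $h\le T$. (Equivalently, since $g_{k(n)}(y_n)\in F(y_n)\subseteq\bigcup_{|y-\mathbf Y(t)|\le\delta_n}F(y)$ with $\delta_n=|y_{n+1}-y_n|\to0$, one may invoke the characterization of perturbed solutions via such thickened inclusions — this is how the BHS/Schreiber machinery absorbs the finitely many jump points, and is the step ``extended from Schreiber'' in the text.) Combining the three pieces yields $\lim_{t\to\infty}\sup_{0\le h\le T}\bigl|\int_t^{t+h}U(s)\,ds\bigr|=0$ for every $T>0$, so $\mathbf Y$ is a bounded perturbed solution to $F$.
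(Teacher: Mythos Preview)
Your approach is genuinely different from the paper's. The paper treats Schreiber's Theorem~2.2 as a black box: on each sub-interval of $[t,t+T]$ where the interpolation $\mathbf Y$ stays in a single region $I_k$, it rewrites $\int U$ as the asymptotic-pseudotrajectory gap $\mathbf Y^k(t+h)-\phi^k(\mathbf Y(t),h)$ for the $k$th GPU, which Schreiber shows vanishes, and then sums over the finitely many such sub-intervals induced by boundary crossings. You instead rebuild the stochastic-approximation decomposition from scratch (martingale difference $\xi$ plus $O(\gamma^2)$ remainder $\rho$), which is more self-contained and avoids citing Schreiber; what this buys you is a proof that works directly at the level of the inclusion $F$ rather than piecing together flows $\phi^k$.

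The one genuine gap is your telescoping argument for the mismatch $g_{k(n)}(y_n)-v(t)$ on boundary-crossing steps. Consecutive crossings do carry opposite signs of magnitude roughly $|g_k(\hat y)-g_{k'}(\hat y)|$, but the ``bad'' sub-step durations are $\gamma_{n+1}$ times fractions in $(0,1)$ that depend on where $y_{n+1}$ lands relative to $\hat y$; these fractions are \emph{not} matched across an entry/exit pair, and the number of crossings in $[t,t+T]$ can be of order $T/\gamma_n$, so cancellation alone does not force the partial sums to be $o(1)$. Your parenthetical alternative is the right fix and in fact makes the whole crossing discussion unnecessary: once you have $y_{n+1}-y_n=\gamma_{n+1}\bigl(g_{k(n)}(y_n)+\xi_{n+1}+\rho_n\bigr)$ with $g_{k(n)}(y_n)\in F(y_n)$ and the Kushner--Clark condition on $\xi+\rho$, BHS's discrete-to-continuous criterion (their Proposition~1.3) directly yields that the interpolation is a perturbed solution to $F$; the passage from $F(y_n)$ to $F(\mathbf Y(t))$ is absorbed there via the closed-graph property verified in Lemma~\ref{lemma:BHS conditions}. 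So the clean version of your argument stops after the Kushner--Clark paragraph and cites that proposition, dropping the explicit $v(t)$ construction and the boundary analysis. A minor side point: your growth bound $|z_n|\ge|z_0|+(1+\kappa)n$ is specific to the paper's model, whereas the lemma is stated for arbitrary PGPUs; the general hypothesis you need is $\liminf_n |z_n|/n>0$, which is exactly the event on which Schreiber's theorem is stated.
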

\begin{proof} Since \textbf{Y} is piecewise affine, it is continuous and differentiable almost everywhere and hence absolutely continuous. Define $t\mapsto U(t)$ by 
\begin{equation*}
U(t) = \frac{y_{n+1}-y_n}{\gamma_{n+1}}-\tilde{F} (\textbf{Y}(t))  \quad t\in[\tau_n,\tau_{n+1}],
\end{equation*} 
where  $\tilde{F}:[0,1]\rightarrow \mathbb{R}$ satisfies $\tilde{F}(y)\in F(y)$ for all $y$.  Note that 
$\frac{d\textbf{Y(t)}}{dt} = \frac{y_{n+1}-y_n}{\gamma_{n+1}}$ for $t\in[\tau_n,\tau_{n+1}]$ , so $\frac{d\textbf{Y(t)}}{dt} - U(t) =\tilde{F}(\textbf{Y}(t)) \in F(\textbf{Y}(t))$. It remains to show  $\lim_{t\rightarrow \infty} \sup_{0\le h \le T } \lvert \int_{t}^{t+h}U(s)ds \rvert =0$ for all $T>0$. 

Fix $T>0$ and $0\le h \le T$. Let $\phi^k$ be the flow of the limit ODE $\frac{dy}{dt}=g_k(y)$. On the event ``$\textbf{Y}(s)\in I_k$ for all $s\in [t,t+h]$,'' we have
\begin{equation*}\label{shcreiber BHS equvialence}
\begin{split}
&\int_{t}^{t+h}U(s)ds = \int_{t}^{t+h} \left(\frac{d\textbf{Y}(s)}{ds} - \tilde{F}(Y(s)) \right) ds  = \int_{t}^{t+h} \left(\frac{d\textbf{Y}^k(s)}{ds} - \frac{d\phi^k(\textbf{Y}(t),s-t)}{ds}\right) ds  \\
& =  \textbf{Y}^k(t+h)-\textbf{Y}^k(t)-\left(\phi^k(\textbf{Y}(t),h) - \phi^k(\textbf{Y}(t),0)\right)= \textbf{Y}^k(t+h)-\phi^k(\textbf{Y}(t),h).
\end{split}
\end{equation*}

Since by Definition \ref{PGPU definition} a PGPU has a finite number of partition intervals $I_k$, in the interval $[t,t+h]$ the interpolation $\textbf{Y(t)}$ transitions between intervals $I_k$ a finite a number of times. Thus $
\int_{t}^{t+h}U(s)ds = \sum_{j=1}^M\left[ \textbf{Y}^{k_j}(t_j)-\phi^{k_j}(\textbf{Y}(t_{j-1}),h_j)\right],$ where $M>0$ is some integer; $t=t_0<t_1<...<t_M=t+h$; $h_j=t_j-t_{j-1}$,  and $k_j\in {1,...,K}$ for all $1\le j\le M$.\footnote{Note that $(M,(t_j)_{j=0}^M,(h_j)_{j=1}^M,(k_j)_{j=1}^M)$ is a random vector.} So from \textcite{schreiber2001urn}'s Theorem 2.2, for all $T>0$ 
\[
\lim_{t\rightarrow \infty} \sup_{0\le h \le T } \lvert \int_{t}^{t+h}U(s)ds \rvert \le \sum_{j=1}^M\left(\lim_{t\rightarrow \infty} \sup_{0\le h \le T } \lvert \textbf{Y}^{k_j}(t_j)-\phi^{k_j}(\textbf{Y}(t_{j-1}),h) \rvert\right)=0.
\] \end{proof}
We are now ready to state and prove Theorem \ref{thm limit chain transitive}.  The proof combines the previous results with a direct application of the following theorem:
\begin{thm3.6_BHS}
  If $\mathbf{x}$ is a bounded perturbed solution to $F$, the limit set of $\mathbf{x}$, $L(\mathbf{x}) = \bigcap_{t\ge 0}\overline{\{\mathbf{x}(s) : s>t\}}$ is internally chain transitive.\footnote{BHS extend the definition of internal chain transitivity to differential inclusions.}  
\end{thm3.6_BHS}

\begin{theorem}\label{thm limit chain transitive}
    Let $\{z_n\}$ be a PGPU, $\{y_n\}$ the share of balls of color $1$ and $F$ the associated limit differential inclusion. Then the limit set of $\{y_n\}$, $L(y_n) = \bigcap_{m> 0}\overline{\{y_n : n>m\}}$, is almost surely internally chain transitive for $F$. 
\end{theorem}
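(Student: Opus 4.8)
The plan is to obtain Theorem \ref{thm limit chain transitive} as an essentially immediate consequence of Lemma \ref{lemma perturbed solution} together with Theorem 3.6 of BHS, after identifying the limit set of the discrete sequence $\{y_n\}$ with that of its continuous-time interpolation $\textbf{Y}$. So the proof has three ingredients: (i) the limit inclusion $F$ is an admissible inclusion in the sense of BHS; (ii) $\textbf{Y}$ is a bounded perturbed solution to $F$; and (iii) $L(y_n)=L(\textbf{Y})$.

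First I would invoke Lemma \ref{lemma:BHS conditions}, which already establishes that $F$ satisfies the BHS standing assumptions (closed graph; nonempty, compact, convex values; linear growth), so the BHS machinery applies. Next I would record that the interpolation clock $\tau_n$ with $\tau_{n+1}=\tau_n+\gamma_{n+1}$ and $\gamma_{n+1}=|z_n|^{-1}$ diverges: since a GPU adds at most $m$ balls per step, $|z_n|\le |z_0|+mn$, hence $\sum_n \gamma_n=\sum_n |z_n|^{-1}=\infty$ and $\tau_n\to\infty$, so $\textbf{Y}$ given by \eqref{affine interpolation eqn} is well defined on all of $[0,\infty)$ and, taking values in $[0,1]$, is bounded. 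Then Lemma \ref{lemma perturbed solution} gives that $\textbf{Y}$ is almost surely a bounded perturbed solution to $F$, and Theorem 3.6 of BHS yields that $L(\textbf{Y})=\bigcap_{t\ge 0}\overline{\{\textbf{Y}(s):s>t\}}$ is internally chain transitive for $F$ on that same almost sure event.

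It remains to check $L(y_n)=L(\textbf{Y})$. Because $\textbf{Y}(\tau_n)=y_n$ and $\tau_n\to\infty$, every accumulation point of $\{y_n\}$ is an accumulation point of $\textbf{Y}$, so $L(y_n)\subseteq L(\textbf{Y})$. Conversely, on $[\tau_n,\tau_{n+1}]$ the function $\textbf{Y}$ stays on the segment between $y_n$ and $y_{n+1}$, so the distance from $\textbf{Y}(t)$ to the set $\{y_n,y_{n+1}\}$ is at most $|y_{n+1}-y_n|\le m\,\gamma_{n+1}\to 0$ (again using boundedness of the per-step additions); hence every accumulation point of $\textbf{Y}$ is an accumulation point of $\{y_n\}$ and $L(\textbf{Y})\subseteq L(y_n)$. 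Combining, $L(y_n)=L(\textbf{Y})$ is chain transitive for $F$ almost surely, which is the assertion.

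I expect the substantive difficulty to lie not in this theorem but in Lemma \ref{lemma perturbed solution}, whose proof must upgrade \textcite{schreiber2001urn}'s asymptotic-pseudotrajectory result for a single GPU to the concatenated setting by splitting each window $[t,t+h]$ at the (random, finitely many) times $\textbf{Y}$ crosses between partition cells $I_k$ and controlling the resulting finite sum of Schreiber-type error terms uniformly in $h\le T$; Theorem \ref{thm limit chain transitive} is then just the bookkeeping that packages that lemma, Lemma \ref{lemma:BHS conditions}, and BHS 3.6 together. One minor point to flag: Schreiber's theorem — and hence Lemma \ref{lemma perturbed solution} — is stated on the event $\{\liminf_n |z_n|/n>0\}$, so I would note this holds almost surely for the PGPUs of interest (indeed surely in our model, where $1+\kappa$ stories are added each period), so the almost sure conclusion is unaffected.
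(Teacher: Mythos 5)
Your proposal is correct and follows essentially the same route as the paper: Lemma \ref{lemma perturbed solution} plus Theorem 3.6 of BHS, then identifying $L(y_n)$ with $L(\mathbf{Y})$ (a step the paper asserts by "definition of interpolation" and you usefully spell out via $\mathbf{Y}(\tau_n)=y_n$ and $|y_{n+1}-y_n|\le m\gamma_{n+1}\to 0$). Your side remarks on $\tau_n\to\infty$ and Schreiber's event $\{\liminf_n |z_n|/n>0\}$ are accurate housekeeping and do not change the argument.
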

\begin{proof} By Lemma \ref{lemma perturbed solution}, the interpolation $\textbf{Y}$ is a perturbed solution to $F$. Note that it is also bounded since $\textbf{Y}(t)\in [0,1]$ for all $t\ge0$. Thus, Theorem 3.6 in BHS  implies that the limit set of $\textbf{Y}$ is internally chain transitive for $F$. Note that the asymptotic behaviors of $\textbf{Y}(t)$ and $y_n$ are the same by the definition of interpolation, i.e., $L(y_n)=L(\mathbf{Y})$, which completes the proof.

\end{proof}
\vspace{-2em}
\subsection{The GPUs $\{z_{n;R}\}$} \label{appendix:limit ODE}

This section shows that the processes $\{z_{n;R}\}$ as defined in \eqref{lom z_n} are GPUs and derive the formula for their limit ODEs. 
\begin{lemma}
    For each $R\in\{N,I,M,S\}$, $\{z_{n;R}\}$ is a GPU with limit ODE given by  \eqref{general ODE}. 
\end{lemma}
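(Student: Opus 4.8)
The plan is to verify the two defining properties of a GPU from Definition \ref{GPU definition} directly for $\{z_{n;R}\}$ as specified by the law of motion \eqref{lom z_n}, and then to compute the limit ODE from formula \eqref{BTS ODE} and check it equals \eqref{general ODE}. For property (i), observe that in every period each coordinate of $z_{n;R}$ is non-decreasing: the three possible increments are $(1+\rho,\kappa)$, $(1,\kappa+\rho)$, and $(1,\kappa)$, all with non-negative entries. The bound on the number of added balls is $m = 1+\kappa+\rho$, achieved by the first two increments. For property (ii), the only possible increment vectors $w$ with $|w|\le m$ that occur with positive probability are $w_T=(1+\rho,\kappa)$, $w_F=(1,\kappa+\rho)$, and $w_0=(1,\kappa)$, so I set $p^{w_T}(y)=p^T_R(y)$, $p^{w_F}(y)=p^F_R(y)$, $p^{w_0}(y)=1-p^T_R(y)-p^F_R(y)$, and $p^w\equiv 0$ for all other $w$. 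Since these transition probabilities depend on the state $z$ only through $y=z^1/|z|$ (they are \emph{exactly} functions of $y$, not merely up to an $O(1/|z|)$ error), the inequality in Definition \ref{GPU definition}(ii) holds trivially with $a=0$ once Lipschitz continuity of the maps $p^w$ is established.

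The substantive point is therefore the Lipschitz continuity of $p^T_R$ and $p^F_R$ on $[0,1]$, for each $R\in\{N,I,M,S\}$. For $R=N$ both are identically zero, so this is immediate. For $R\in\{S,I,M\}$, I would argue that each $p^\bullet_R(y)$ is a rational function of $y$ whose denominator is bounded away from zero on $[0,1]$: the relevant denominators appearing in the attention levels $a(y,M),a(y,I)$ from Lemma \ref{lemma U a V} are $y+2(1-y)(1-\delta)$ and $y+2(1-y)\delta$, both of which are continuous, strictly positive on $[0,1]$ (since $\delta,1-\delta>0$), hence bounded below by a positive constant. A rational function with nonvanishing denominator on a compact interval is $C^1$ there, hence Lipschitz; products and sums of such functions are again Lipschitz, and the probabilities in \eqref{probabilities} are built from $y$, $(1-y)$, the constants $\theta,\delta$, and the $a(y,e)$ in exactly this way. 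This handles all four regions.

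It remains to derive the limit ODE. Plugging the three increment vectors into \eqref{BTS ODE} gives
\[
g_R(y) = p^T_R(y)\bigl((1+\rho) - y(1+\rho+\kappa)\bigr) + p^F_R(y)\bigl(1 - y(1+\rho+\kappa)\bigr) + (1-p^T_R(y)-p^F_R(y))\bigl(1 - y(1+\kappa)\bigr).
\]
Expanding and collecting terms: the constant-in-$w$ part contributes $1 - y(1+\kappa)$; the $\rho$ in the first coordinate of $w_T$ contributes $p^T_R(y)\rho$; and the $\rho$ appearing in $|w_T|$ and $|w_F|$ contributes $-y\rho(p^T_R(y)+p^F_R(y))$. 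Summing, $g_R(y) = 1 + p^T_R(y)\rho - y\bigl(1+\kappa+\rho(p^T_R(y)+p^F_R(y))\bigr)$, which is exactly \eqref{general ODE}. I do not anticipate a genuine obstacle here; the only place that requires a moment's care is confirming that the region-specific denominators never vanish on the closed interval $[0,1]$, which I would state as a one-line observation since $\delta\in(\tfrac12,1)$ guarantees both $\delta$ and $1-\delta$ are strictly positive.
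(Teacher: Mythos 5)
Your proposal is correct and follows essentially the same route as the paper's proof: verify Definition \ref{GPU definition}(i) with bound $m=1+\kappa+\rho$, assign the maps $p^{w}$ to the three increment vectors (noting the transition probabilities are exact functions of $y$), and expand \eqref{BTS ODE} to recover \eqref{general ODE}. Your only addition is spelling out why the maps in \eqref{probabilities} are Lipschitz (nonvanishing denominators $y+2(1-y)(1-\delta)$ and $y+2(1-y)\delta$ on $[0,1]$), which the paper simply asserts.
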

\begin{proof} Let $R$ be one of the four possible regions. To show that $\{z_{n;R}\}$ is a GPU we need to verify the conditions of Definition \ref{GPU definition}. Condition i) follows directly from \eqref{lom z_n}, with the upper bound $m= 1+\kappa+\rho$.  For condition ii), let $w_1=\begin{pmatrix}1+\rho\\\kappa \end{pmatrix}, w_2 = \begin{pmatrix} 1\\ \kappa+\rho \end{pmatrix}, w_3 = \begin{pmatrix}1\\ \kappa\end{pmatrix} $, and let $p^T_R(y),p^F_R(y), 1-p^T_R(y)-p^F_R(y)$ respectively be the  maps $p^w$ corresponding to these vectors. By \eqref{probabilities} all three maps are Lipschitz-continuous. Let $\Pi_R$ denote the transition kernel for $\{z_{n;R}\}$. By the law of motion \eqref{lom z_n}, for any $w\in\{w_1,w_2,w_3\}$ and for any $z\in \mathbb{Z}^2_{+}$: $\Pi_R(z,z+w) = p^w\left(\frac{z^1}{|z|}\right)$. Since $\Pi_R(z,z+w)=0$ for any $w\notin\{w_1,w_2,w_3\}$, condition ii) is satisfied.

Next,   \eqref{probabilities}, \eqref{lom z_n}, and \eqref{BTS ODE} imply that the ODE associated with $\{z_{n;R}\}$ is
\begin{equation*}
    \begin{split}
        g_R(y) &= p^T_R(y)(1+\rho - y(1+\rho+\kappa))+p^F_R(y)(1- y(1+\rho+\kappa))\\
        &+(1-p^T_R(y)- p^F_R(y))(1-y(1+\kappa)).
   \end{split}
\end{equation*}
Rearranging gives
$g_R(y) =  1+ p^T_R(y)\rho -y\left(1+\kappa+\rho\left(p^T_R(y)+p^F_R(y)\right) \right),$
as in \eqref{general ODE}. 
\end{proof}

\subsection{Repelling Steady States}\label{section:repelling}
This subsection shows that if $\psi$ is a repelling steady state for the LDI, then under a condition on the noise in the stochastic system, $\mathbb{P}(y_n\rightarrow\psi)=0$.
Consider a PGPU $\{z_n\}$, comprised of GPUs $\{z_{n;k}\}_{k=1}^K$ with associated intervals $I_k$, where $g_k$ is the RHS of the limit ODE for GPU $\{z_{n;k}\}$. Let $y_{n;k}=\frac{z^1_{n;k}}{|z_{n;k}|}$. Recall that $y_n = \frac{z^1_n}{|z_n|}$ and that the LDI for this PGPU is given by \eqref{general inclusion}. We now add the following assumption, which is satisfied by the PGPUs in our model:
\begin{assumption}\label{assumption globally stable}
    Each limit ODE $\frac{dy}{dt}=g_k(y)$ has a globally stable steady state $y^*_k$. 
\end{assumption}

Assumption \ref{assumption globally stable} implies that the only possible repelling steady states for the LDI are the thresholds between the intervals $I_k$. Define these these as $\hat{y}_k= \max\{I_k\}$ for $k=1,\ldots,K$. Finally, let $\mathcal{F}_n$ be the $\sigma$-algebra generated by $(z_1,...,z_n)$, let $\xi_{n+1} = (y_{n+1}-y_{n}-\mathbb E[y_{n+1}-y_{n}|z_n])|z_n|$ and denote $\xi_n^+=\max\{0,\xi_n\}, \xi_n^-=-\min\{0,\xi_n\}$.
 
\begin{theorem}\label{thm nonconvergence}
Let $\hat{y}_k$ be the threshold between intervals $I_k,I_{k+1}$ and assume that $\hat{y}_k$ is a repelling steady state for the LDI. If there exist $\epsilon, r>0$ such that for all $n\in\mathbb{N}$: $\mathbb{E}[\xi^+_{n}|\mathcal{F}_n]>r$  if $y_{n}\in (\hat{y}_k-\epsilon,\hat{y}_k+\epsilon)$,  then $\mathbb{P}(y_n\rightarrow \hat{y}_k)=0$.   
\end{theorem}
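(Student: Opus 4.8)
The plan is to deduce this result as a corollary of Theorem 2.9 in \textcite{pemantle2007survey}, which gives nonconvergence to an unstable fixed point for a one-dimensional stochastic approximation, provided the noise pushing the process away from the fixed point is bounded below in conditional mean. The first step is to write the process $\{y_n\}$ in stochastic approximation form. From the law of motion \eqref{lom y_n}, set $\gamma_{n+1}=1/|z_n|$, let $g(y)$ be any fixed measurable selection from $F(y)$ agreeing with $g_k$ on $I_k$ and $g_{k+1}$ on $I_{k+1}$, and write $y_{n+1}-y_n = \gamma_{n+1}\bigl(g(y_n)+\xi_{n+1}+R_n\bigr)$ with $\xi_{n+1}$ as defined in the statement and $R_n$ collecting the discrepancy between $\mathbb{E}[y_{n+1}-y_n\mid z_n]|z_n|$ and $g(y_n)$. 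Exactly as in the GPU case treated by \textcite{benaim2004generalized} and \textcite{schreiber2001urn}, $\mathbb{E}[\xi_{n+1}\mid\mathcal{F}_n]=0$, the increments are bounded (since at most $1+\kappa+\rho$ balls are added per period, $|\xi_{n+1}|$ is uniformly bounded), $\gamma_n\to 0$ and $\sum_n\gamma_n=\infty$ because $|z_n|$ grows at most linearly, and $R_n\to 0$ fast enough. So $\{y_n\}$ is a bona fide stochastic approximation whose drift near $\hat y_k$ is governed by $g_k$ on one side and $g_{k+1}$ on the other.

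The second step is to check that $\hat y_k$ being an unstable steady state of the LDI gives exactly the ``repelling'' hypothesis Pemantle needs. By the definition of unstable steady state in Section \ref{subsec:LDI}, and the observation recorded there that in this model an unstable steady state is in fact repelling, there is an $\epsilon'>0$ so that $\sign(x) = -\sign(\hat y_k - y)$ for all $x\in F(y)$ and all $y\in(\hat y_k-\epsilon',\hat y_k+\epsilon')$; in particular $g(y)\ge \eta>0$ for $y$ slightly above $\hat y_k$ and $g(y)\le -\eta<0$ for $y$ slightly below, for some $\eta>0$, since the finitely many continuous functions $g_k,g_{k+1}$ are bounded away from zero on the relevant one-sided neighborhoods. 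This is the one-sided ``pushing away'' drift condition. The only subtlety relative to the classical statement is that the right-hand side $g$ is discontinuous at $\hat y_k$; but Pemantle's Theorem 2.9 is stated for general one-dimensional stochastic approximations and does not require continuity of the limit vector field at the point of interest — only the sign condition in a punctured neighborhood, which we have. I would remark on this explicitly.

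The third step is to feed in the noise hypothesis. Theorem 2.9 requires a uniform lower bound on the conditional probability (or conditional mean of the positive part) of the noise term that would move the process across $\hat y_k$; our hypothesis that $\mathbb{E}[\xi_n^+\mid\mathcal{F}_n]>r$ whenever $y_n\in(\hat y_k-\epsilon,\hat y_k+\epsilon)$ supplies precisely this: combined with the uniform upper bound on $|\xi_n|$, it forces a uniform lower bound on $\mathbb{P}(\xi_n > c\mid \mathcal{F}_n)$ for some $c>0$, so the process cannot get trapped at $\hat y_k$ by the drift alone — whenever it approaches, there is a bounded-below chance of a jump of definite size, after which the repelling drift carries it away. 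Invoking Theorem 2.9 then yields $\mathbb{P}(y_n\to\hat y_k)=0$. The main obstacle, and the place where care is needed in writing this up, is exactly the discontinuity of $F$ at the threshold: one must make sure the version of Pemantle's nonconvergence result being cited only uses a one-sided/punctured-neighborhood sign condition on the drift (together with the noise lower bound), and does not secretly rely on differentiability or continuity of $g$ at $\hat y_k$ — everything else is routine verification that $\{y_n\}$ meets the standing hypotheses of a stochastic approximation.
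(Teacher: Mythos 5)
Your proposal is correct and follows essentially the same route as the paper: write $\{y_n\}$ as a stochastic approximation with step sizes $1/|z_n|$ and a (discontinuous) selection $g$ from $F$, use the GPU bounds from \textcite{benaim2004generalized} and \textcite{schreiber2001urn} for the remainder and martingale terms, obtain the repelling sign condition from instability of $\hat{y}_k$, and invoke Theorem 2.9 of \textcite{pemantle2007survey}, which indeed does not require continuity of $g$ at $\hat{y}_k$. The only detail to add is that Pemantle's hypothesis also requires a positive lower bound on $\mathbb{E}[\xi_n^-\mid\mathcal{F}_n]$ near $\hat{y}_k$, which follows at once from your assumed bound because $\mathbb{E}[\xi_n\mid\mathcal{F}_n]=0$ gives $\mathbb{E}[\xi_n^-\mid\mathcal{F}_n]=\mathbb{E}[\xi_n^+\mid\mathcal{F}_n]>r$ (and the upper bounds come from the bounded increments, as you note).
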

The proof applies the following result:
\begin{thm2.9_Pemantle}
Suppose $\{x_n\}$ is a stochastic approximation process as defined in Definition \ref{stoc. approximation definition} except that $g$ need not be continuous. Assume that for some $p\in(0,1)$ and $\epsilon>0$: $\sign(g(x))=-\sign(p-x)$ for all $x\in(p-\epsilon,p+\epsilon)$. Suppose further that the martingale terms $\xi_n$ in the stochastic approximation equation \eqref{stochastic approximation equation} are such that $\mathbb{E}[\xi_{n+1}^+|\mathcal{F}_n] \text{ and }\mathbb{E}[\xi_{n+1}^-|\mathcal{F}_n]$ are bounded above and below by positive numbers when $x_n\in(p-\epsilon,p+\epsilon)$. Then $\mathbb{P}(x_n\rightarrow p)=0$. 
\end{thm2.9_Pemantle}
\begin{proof} Define the function $g:[0,1]\rightarrow\mathbb{R}$. By     \begin{equation*}
    g(y) = \begin{cases}
g_k(y),& y\in \mathring{I}_k\\ 
 g_1(0),& y=0\\
 g_K(1)& y=1 \\
g_k(y) &y=\max(I_k), 1\le k<K\\  
		 \end{cases}
\end{equation*}

Recall that $
\xi_{n+1} = (y_{n+1}-y_{n}-\mathbb E[y_{n+1}-y_{n}|z_n])|z_n|,$
and let 
\[
       R_n = |z_n|\mathbb E[y_{n+1}-y_{n}|z_n]-g(y_n).
\]
Then $\xi_n, R_n$ are adapted to  $\mathcal{F}_n$, $\mathbb{E}[\xi_{n+1}| \mathcal{F}_n]=0$ and 
\begin{equation}\label{y_n stoc. approx}
    y_{n+1}-y_n = \frac{1}{|z_n|}\left(g(y_n)+\xi_{n+1}+R_n \right)
\end{equation}

By Lemma 1 in \textcite{benaim2004generalized}, and the fact that $y_{n}$ follows the same law of motion as $y_{n;k}$ when $y_n\in \interior(I_k)$, there exists a real number $K>0$ such that $|R_n|\le \frac{K}{|z_n|}$. Thus, $\sum_{n=1}^\infty \frac{|R_n|}{n}<\infty$, so $\{y_n\}$ is a stochastic approximation. By the same Lemma, $|\xi_n|\le 4m$ where $m$ is the maximal number of balls added in each period. This implies that $\mathbb{E}[\xi_n^+|\mathcal{F}_n],\mathbb{E}[\xi_n^-|\mathcal{F}_n]$ are bounded from above by $4m$.  To apply Theorem 2.9, it remains to prove that $\mathbb{E}[\xi_n^+|\mathcal{F}_n],\mathbb{E}[\xi_n^-|\mathcal{F}_n]$ are bounded from below by a positive number when $y_n\in (\hat{y}-\epsilon,\hat{y}+\epsilon)$. Because $\xi_n = \xi_n^+ - \xi_n^-$ and $\mathbb{E}[\xi_n|\mathcal{F}_n]=0$, and $\mathbb{E}[\xi_n^+|\mathcal{F}_n]=\mathbb{E}[\xi_n^-|\mathcal{F}_n]$,  it suffices to find a positive lower bound for $\mathbb{E}[\xi_n^+|\mathcal{F}_n]$ when $y_n\in (\hat{y}-\epsilon,\hat{y}+\epsilon)$. By assumption, $r>0$ is such a lower bound. \end{proof}

\setstretch{1}
\printbibliography
\onehalfspacing

\newpage

\renewcommand\thesection{C}
\renewcommand{\thepage}{\arabic{page}}
    \setcounter{page}{1}
\section{Online Appendix}\label{appendix:online}

\subsection{Additional Proofs}\label{appendix:lemma 2}
\begin{lemma2}
 Let $V(y,e) := U(a(y,e),y,e)$. $V(y,M)$ and $V(y,I)$ are strictly increasing in $y$, and there are (unique) $\hat{y}_M,\hat{y}_I\in(0,1)$ such that $V(\hat{y}_M,M)=V(\hat{y}_I,I)=0$.
\end{lemma2}

\begin{proof}[Proof of Lemma \ref{lemma thresholds}]
Plugging the optimal attention levels from \eqref{attention levels} back into $U(a,y,M),U(a,y,I)$ respectively we get, 

\begin{equation}\label{value functions}
\begin{split}
    V(y,M) &= \frac{y -2(\mu-1)(1-y)(1-\delta) \theta}{y+2(1-y)(1-\delta)}+\frac{1}{\beta}\left(\frac{(\mu-1)(1-y)(1-\delta) \theta}{y+2(1-y)(1-\delta)}\right)^2 ,\\
    V(y,I) &= \frac{(1+\lambda)y-2(\mu-1-\lambda)(1-y)\delta \theta}{y+2(1-y)\delta} +\frac{1}{\beta}\left(\frac{(\mu-1-\lambda)(1-y)\delta \theta}{y+2(1-y)\delta}\right)^2.
\end{split}
\end{equation}
To prove that these value functions are strictly increasing in $y$, it suffices to show that $U(a,y,M),U(a,y,I)$ are strictly increasing in $y$ for all $a$, as then for $y_2>y_1$ we have $V(y_1) = U(a(y_1),y_1)<U(a(y_1),y_2)\le U(a(y_2),y_2) = V(y_2)$. These functions are increasing because
\begin{equation*}
\begin{split}
     \frac{\partial U(a,y,M)}{\partial y} & =\frac{2 (1-\delta) (1+(1-a) \theta  (\mu -1))}{\left(y+2(1-y)(1-\delta)\right)^2} >0 \\
    \frac{\partial U(a,y,I)}{\partial y}& = \frac{2 \delta  ((a-1) \theta  (\lambda -\mu +1)+\lambda +1)}{(2 \delta -2 \delta  y+y)^2} = \frac{2\delta\left(1+\lambda+(1-a)\theta(\mu-1-\lambda)\right)}{\left(y+2(1-y)\delta\right)^2}  >0,
\end{split}
\end{equation*}
where the inequalities follows from Assumption \ref{condition on lambda}.
To see that both $\hat{y}_I,\hat{y}_M$ are interior, note that $V(1,M) = 1>0, V(1,I) = 1+\lambda>0$, and, by Assumptions \ref{condition on lambda} and \ref{sufficient for a below 1}, 
\[
\begin{split}
     V(0,M) &= (\mu-1)\theta\left( \frac{(\mu-1)\theta}{4\beta}-1\right)<0,\\
       V(0,I) &= (\mu-1-\lambda)\theta\left(\frac{(\mu-1-\lambda)\theta}{4\beta}-1\right)<0. 
\end{split}
\]

\end{proof}

\begin{lemma6}
 $W(y)$ is strictly increasing for $y\notin N$. 
\end{lemma6}

\begin{proof}[Proof of Lemma \ref{lemma welfare}]
        By Lemma \ref{lemma thresholds}, $V(y,I)$ and $V(y,M)$ are strictly increasing in $y$. Also, $\mathbb{P}(M|y)=\frac{1}{2}y+(1-\delta)(1-y)$ is strictly increasing in $y$, so it suffices to prove that $\mathbb{P}(I|y)V(y,I)$ is strictly increasing in $y$. This will imply that $W(y)$ is strictly increasing for $y\notin N$ because outside of region $N$ at least one of $V(y,I)$ and $V(y,M)$ is strictly positive.
    
    Let $H(y):=\mathbb{P}(I|y)V(y,I)$, then, using the expression for $V(y,I)$ in the proof of Lemma \ref{lemma thresholds} we get, 
\[
\begin{split}
H(y)&= \frac{y+2(1-y)\delta}{2}V(y,I) \\
 &= \frac{1}{2}\left[(1+\lambda)y-2(\mu-1-\lambda)(1-y)\delta \theta  +\frac{1}{\beta}\frac{\left((\mu-1-\lambda)(1-y)\delta \theta\right)^2}{y+2(1-y)\delta}\right].
\end{split}
\]
So,
\[
\begin{split}
 H'(y) &= \frac{1}{2}\left[1+\lambda+2(\mu-1-\lambda)\delta \theta-\frac{\left( (\mu-1-\lambda)\delta \theta\right)^2 (1-y)(1+y+2(1-y)\delta)}{\beta\left( y+2(1-y)\delta\right)^2}\right],  \\
 H'(0) &= \frac{1}{2}\left[1+\lambda+(\mu-1-\lambda)\theta\left(2\delta -\frac{(\mu-1-\lambda)\theta(1+2\delta)}{4\beta}\right)\right], \\
 H''(y) &= \frac{\left((\mu-1-\lambda)\delta \theta\right)^2}{\beta\left( y+2(1-y)\delta\right)^3}.
\end{split}
\]
Note that $H'(0)>0$ since, by Assumption  \ref{sufficient for a below 1} and $\delta>\frac{1}{2}$,
\[
2\delta -\frac{(\mu-1-\lambda)\theta(1+2\delta)}{4\beta} >2\delta -\frac{1+2\delta}{2}>0.
\]
This together with $H''(y)>0$ implies $H'(y)>0$ for all $y\in[0,1]$, which completes the proof. 

\end{proof}

\subsection{Comparative Statics}\label{appendix:comparative}
 This section analyzes comparative statics for the quasi steady states and thresholds with respect to all model parameters. Theorems \ref{thm comparative sharing}, \ref{thm comparative evoc} and \ref{thm comparative boring} present comparative statics for the quasi steady states $y^*_S,y^*_I$ and $y^*_M$ respectively. Theorem \ref{thm comparative thresholds} presents comparative statics for the thresholds $\hat{y}_I,\hat{y}_M$.  Theorem \ref{thm:rho limit} considers the limits of the quasi steady states as $\rho\rightarrow\infty$. The end of the section summarizes the findings that are not discussed in the main text. 
 
\begin{theorem}\label{thm comparative sharing}
The quasi steady state $y^*_S$ is increasing in $\rho$ and $\mu$  and decreasing in $\kappa,\beta$ and $\lambda$. There exists $\theta_S\in(0,1]$ (whose value depends on the other parameters) such that $y^*_S$ is decreasing in $\theta$ for $\theta<\theta_S$ and increasing in $\theta$ for $\theta>\theta_S$. $y^*_S$ is decreasing in $\delta$ for $\delta$ sufficiently close to $\frac{1}{2}$ and increasing in $\delta$ for $\delta$ sufficiently close to 1.
\end{theorem}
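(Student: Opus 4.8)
The plan is to treat $y^*_S$ as the implicitly defined solution of $g_S(y^*_S)=0$ and deduce each comparative static from the sign of the matching partial derivative of $g_S$. By Lemma \ref{lemma unique steady state}, $y^*_S$ is the unique root of $g_S$ in $(0,1)$; since the proof of that lemma shows $g_S'''>0$ on $[0,1]$ while $g_S$ changes sign from $+$ to $-$ at $y^*_S$, necessarily $\partial g_S/\partial y<0$ at $y^*_S$. Hence by the implicit function theorem $y^*_S$ is a $C^1$ function of each parameter $\alpha$, with $\operatorname{sign}(\partial y^*_S/\partial\alpha)=\operatorname{sign}(\partial g_S/\partial\alpha)|_{y=y^*_S}$. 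From \eqref{general ODE} and \eqref{probabilities}, $g_S(y)=1+y\rho-y(1+\kappa)-y\rho\big(y+p^F_S(y)\big)$ with $p^F_S(y)=(1-y)\theta\big(1-\delta\,a(y,I)-(1-\delta)\,a(y,M)\big)$, so $\mu,\lambda,\beta,\theta,\delta$ enter $g_S$ only through $p^F_S$, while $\rho$ and $\kappa$ enter the rest of $g_S$ as well. The five monotone parameters are then read off directly: $\partial g_S/\partial\kappa=-y<0$; $\partial g_S/\partial\rho=y\big(1-y-p^F_S(y)\big)>0$ because $\delta a(y,I)+(1-\delta)a(y,M)\in[0,1]$ and $\theta<1$ force $p^F_S(y)\le(1-y)\theta<1-y$; and since the bracket in $p^F_S$ has negative coefficients on both $a(y,I)$ and $a(y,M)$, while by the closed forms in Lemma \ref{lemma U a V} both attention levels are increasing in $\mu$ and $\lambda$ and decreasing in $\beta$, we get $\partial g_S/\partial\mu,\partial g_S/\partial\lambda>0$ and $\partial g_S/\partial\beta<0$. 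This gives monotone increase in $\rho,\mu,\lambda$ and decrease in $\kappa,\beta$.

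For $\theta$ the two effects compete, so a single-crossing argument is needed. Write $a(y,I)=\theta\,\tilde a_I(y)$, $a(y,M)=\theta\,\tilde a_M(y)$ with $\tilde a_I,\tilde a_M>0$ independent of $\theta$ (Assumption \ref{condition on lambda} gives $\tilde a_I>0$), and put $\tilde b(y):=\delta\tilde a_I(y)+(1-\delta)\tilde a_M(y)>0$. Then $p^F_S(y)=(1-y)\big(\theta-\theta^2\tilde b(y)\big)$, so $\partial g_S/\partial\theta=-y\rho(1-y)\big(1-2\theta\tilde b(y)\big)$ and $\operatorname{sign}(\partial y^*_S/\partial\theta)=-\operatorname{sign}\big(1-2\theta\,\tilde b(y^*_S(\theta))\big)$. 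Let $\psi(\theta):=\theta\,\tilde b\big(y^*_S(\theta)\big)$, a $C^1$ function with $\psi(\theta)\to0$ as $\theta\to0$. At any $\theta$ with $\psi(\theta)=\tfrac12$ we have $\partial g_S/\partial\theta=0$ there, hence $(y^*_S)'(\theta)=0$ and so $\psi'(\theta)=\tilde b\big(y^*_S(\theta)\big)>0$; a continuous function strictly increasing at every point where it equals $\tfrac12$ crosses $\tfrac12$ at most once, and only upward. Taking $\theta_S$ to be that crossing value (and $\theta_S=1$ if $\psi<\tfrac12$ on all of $(0,1)$) yields $\theta_S\in(0,1]$ with $\partial y^*_S/\partial\theta<0$ for $\theta<\theta_S$ and $\partial y^*_S/\partial\theta>0$ for $\theta>\theta_S$.

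For $\delta$ it suffices, by continuity and $\partial g_S/\partial\delta=-y\rho\,\partial_\delta p^F_S$, to sign $\partial_\delta p^F_S$ at $y=y^*_S$ at the two ends of $(\tfrac12,1)$. Differentiating $p^F_S=(1-y)\theta\big(1-\delta a(y,I)-(1-\delta)a(y,M)\big)$ in $\delta$ and using the explicit forms of $a(y,I),a(y,M)$ from Lemma \ref{lemma U a V} (recall $a(y,M)$ is decreasing and $a(y,I)$ increasing in $\delta$), a direct computation at $\delta=\tfrac12$ collapses the bracket to a positive multiple of $1-\lambda$, so $\partial_\delta p^F_S>0$ there (only $\lambda<1$ is used), hence $\partial y^*_S/\partial\delta<0$ for $\delta$ near $\tfrac12$; and as $\delta\uparrow1$ the terms involving $a(y,M)$ vanish while $a(y,I)$ and $\partial_\delta a(y,I)$ remain bounded and positive, so $\partial_\delta p^F_S$ tends to a negative multiple of $\lambda\mu-(1-\lambda)$, which is negative by Assumption \ref{condition on lambda}, hence $\partial y^*_S/\partial\delta>0$ for $\delta$ near $1$.

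Most of this is bookkeeping with the closed forms for $g_S$ and the attention levels. The one step that needs a genuine idea is the $\theta$ case: obtaining a single threshold $\theta_S$, rather than merely ``locally decreasing then locally increasing,'' hinges on the observation that $y^*_S$ is stationary in $\theta$ exactly where its sign of variation flips, which forces every crossing of $\psi$ through $\tfrac12$ to be transversal and upward. I expect this to be the main obstacle; the $\delta$ endpoint computations are the next most delicate, while the $\rho,\kappa,\mu,\lambda,\beta$ checks and the verification that Assumption \ref{sufficient for a below 1} keeps the attention levels interior (so Lemma \ref{lemma U a V} applies throughout) are routine.
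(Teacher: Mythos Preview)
Your approach is essentially the paper's: apply the implicit function theorem to $g_S(y^*_S)=0$, use $g_S'''>0$ (from Lemma~\ref{lemma unique steady state}) to get $\partial g_S/\partial y<0$ at the root, and then sign $\partial g_S/\partial\alpha$ for each parameter $\alpha$. The treatment of $\rho,\kappa,\mu,\lambda,\beta$ and the $\delta$ endpoint checks match the paper's computations, just packaged more conceptually (you route $\mu,\lambda,\beta$ through monotonicity of the attention levels rather than writing out the full derivative of $G$).

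Where you genuinely go beyond the paper is the $\theta$ case. The paper only shows that $G_\theta(y^*_S,r)>0$ iff $\theta$ exceeds an expression depending on $y^*_S$---which itself depends on $\theta$---and then simply asserts the threshold behavior. Your single-crossing argument closes that gap: writing $\psi(\theta)=\theta\,\tilde b(y^*_S(\theta))$ and observing that at any $\psi=\tfrac12$ one has $(y^*_S)'=0$ and hence $\psi'=\tilde b>0$ is exactly the right idea, and it does yield a unique (or vacuous) crossing and hence a well-defined $\theta_S\in(0,1]$. This is a real improvement over the paper's proof as written. The remaining computations in your $\delta$ limits are correct (the $(1-\delta)$ prefactor is what kills the $a(y,M)$ contributions as $\delta\uparrow1$, which is what you mean by ``the terms involving $a(y,M)$ vanish''); you might state that a bit more precisely, but the reasoning is sound.
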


\begin{proof}
Let $r_0 = (\rho_0,\kappa_0,\theta_0,\mu_0,\beta_0,\delta_0,\lambda_0)$ be a vector of parameters and consider $g_S(y)$ as a function $G(y,r):\mathbb{R}^8\rightarrow\mathbb{R}$. Let $y^*_0\in(0,1)$ be the unique $y\in[0,1]$ that solves 
\begin{equation}\label{implicit function sharing}
G(y^*_0,r_0)=0.  
\end{equation}
 Lemma \ref{lemma unique steady state} implies that  $G(y,r_0)>0$ for $y<y^*_0$ and $G(y,r_0)<0$ for $y>y^*_0$ so it must be the case that $G_{y}(y^*_0,r_0)\le 0$. Moreover,  it cannot be the case that $G_{y}(y^*_0,r_0)=0$ because that would imply that $y^*_0$ is a local maximum for $G_{y}(\cdot,r_0)$ while the proof of Lemma \ref{lemma unique steady state} shows that the second derivative of this function (the third derivative w.r.t $y$ of $G(y,r_0)$) is strictly positive over $[0,1]$, so $G_y(y^*_0,r_{0})<0$. 
 
 Since $G(y^*_0,r_0)=0$ and $G_{y}(y^*_0,r_0)\neq 0$, by the implicit function theorem \eqref{implicit function sharing} defines a function $y^*_S(r):\mathbb{R}^7\rightarrow \mathbb{R}$ in some neighborhood of $r_0$, such that $y^*_S(r)$ is the unique steady state of the ODE $\frac{dy}{dt}=g_S(y)$ in $[0,1]$, and 
\begin{equation}
    \nabla y^*_S(r_0) = -\frac{1}{G_{y}(y^*_0,r_0)}\nabla_{r}G(y^*_0,r_0)
\end{equation}

Furthermore, since $G_{y}(y^*_0,r_0)< 0$, for all $x\in(\rho,\kappa,\theta,\mu,\beta,\delta,\lambda)$: $\sign(\frac{dy^*_S(r_0)}{dx}) = \sign(G_x(y^*_0,r_0))$. Plugging $p^T_S,p^F_S$ from \eqref{probabilities} into \eqref{general ODE} and rearranging yields
\[
\begin{split}
G(y,r) &=  1+(\rho(1-\theta)-1-\kappa)y-\rho(1-\theta)y^2 \\
&+\frac{\rho y (\theta(1-y))^2}{\beta}\left( \frac{(\mu-1-\lambda)\delta^2}{y+2(1-y)\delta} +\frac{(\mu-1)(1-\delta)^2}{y+2(1-y)(1-\delta)} \right)    
\end{split}
\]

We now solve for the sign of each of the partial derivatives of $G$.

\noindent$\boldsymbol{\rho}$: $G_\rho(y,r) =  (1-\theta)y(1-y)+\frac{ y (\theta(1-y))^2}{\beta}\left( \frac{(\mu-1-\lambda)\delta^2}{y+2(1-y)\delta} +\frac{(\mu-1)(1-\delta)^2}{y+2(1-y)(1-\delta)} \right) >0$

\noindent$\boldsymbol{\theta}$: $G_\theta(y,r) =  \rho y(1-y)\left(\frac{2\theta(1-y)}{\beta}\left( \frac{(\mu-1-\lambda)\delta^2}{y+2(1-y)\delta} +\frac{(\mu-1)(1-\delta)^2}{y+2(1-y)(1-\delta)} \right)-1 \right)$.

So $G_\theta(y,r)>0$ if and only if, 
\[
\theta> \frac{\beta}{2(1-y)}\left(\frac{1}{\frac{(\mu-1-\lambda)\delta^2}{y+2(1-y)\delta} +\frac{(\mu-1)(1-\delta)^2}{\left(y+2(1-y)(1-\delta)\right)}}\right)
\]
Note that the RHS is always positive, so that for sufficiently small $\theta$, $y^*_S$ is decreasing in $\theta$. However, it is possible that the RHS is below 1 so that for large values of $\theta$ the relationship reverses. See Appendix \ref{appendix:additional} for an example.

 \noindent$\boldsymbol{\kappa}$: $G_\kappa(y,r) = -y<0.$

\noindent$\boldsymbol{\mu}$:  $
G_\mu(y,r) =  \frac{\rho y (\theta(1-y))^2}{\beta}\left( \frac{\delta^2}{y+2(1-y)\delta} +\frac{(1-\delta)^2}{\left(y+2(1-y)(1-\delta)\right)} \right)>0.$

\noindent$\boldsymbol{\beta}$: $G_\beta(y,r) = -\frac{\rho y (\theta(1-y))^2}{\beta^2}\left( \frac{(\mu-1-\lambda)\delta^2}{y+2(1-y)\delta} +\frac{(\mu-1)(1-\delta)^2}{y+2(1-y)(1-\delta)} \right)  <0.$

\noindent$\boldsymbol{\delta}$: $G_\delta(y,r) =\frac{2\rho y (\theta(1-y))^2}{\beta}\left[ \frac{(\mu-1-\lambda)\delta\left(y+(1-y)\delta\right)}{\left(y+2(1-y)\delta\right)^2} +\frac{(\mu-1)(1-\delta)((1-y)\delta-1)}{\left(y+2(1-y)(1-\delta)\right)^2} \right]. $  

So, fixing all parameters except $\delta$ we have $\sign G_\delta(y,r) = \sign(s(y,\delta))$, where $s(y,\delta)$ is the expression in square brackets.
Note that $s(y,1/2) = -\frac{(1+y)\lambda}{4}<0$ and $s(y,1) =\frac{\mu-1-\lambda}{(2-y)^2}>0$ so $y^*_S$ is decreasing in $\delta$ for small values of $\delta$ and increasing in $\delta$ for large values of $\delta$ (recall that we assume $\delta\ge\frac{1}{2}$). 

\noindent$\boldsymbol{\lambda}$:  $G_\lambda(y,r) = -\frac{\rho y (\theta(1-y)\delta)^2}{\beta\left(y+2(1-y)\delta\right)} <0.$
\end{proof}

\begin{theorem}\label{thm comparative evoc}
The quasi steady state $y^*_I$ is increasing in $\mu$ and decreasing in $\kappa,\beta,\lambda$ and $\delta$. $y^*_I$ is increasing in $\rho$ if $d_I(y^*_I)>0$ and decreasing in $\rho$ when the sign is reversed, and both cases can arise in  region $I$.\footnote{Part 4 of Theorem \ref{thm:rho limit} shows we can replace the condition $d_I(y^*_I)>0$ here with $d_I(\frac{1}{1+\kappa})$, which is the condition presented in the main text.} There exists $\theta_I\in(0,1]$ (whose value depends on the other parameters) such that $y^*_I$ is decreasing in $\theta$ for $\theta<\theta_I$ and increasing in $\theta$ for $\theta>\theta_I$. 
\end{theorem}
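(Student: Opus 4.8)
The plan is to reuse the template of the proof of Theorem~\ref{thm comparative sharing}, now applied to the limit ODE of region $I$. Regard $g_I$ as a map $G(y,r):\mathbb{R}^8\to\mathbb{R}$ with $r=(\rho,\kappa,\theta,\mu,\beta,\delta,\lambda)$, obtained by substituting $p^T_I(y)=\tfrac{y}{2}$ and $p^F_I(y)=(1-y)\delta\theta(1-a(y,I))$ from \eqref{probabilities}, together with the formula for $a(y,I)$ in Lemma~\ref{lemma U a V}, into \eqref{general ODE}. Writing $h:=\lambda\mu-(1-\lambda)$ (strictly positive by Assumption~\ref{condition on lambda}), this gives
\[
G(y,r)=1+\tfrac{\rho y}{2}(1-y)-y(1+\kappa)-\rho y(1-y)\delta\theta+\frac{\rho y\,h\,\delta^2\theta^2(1-y)^2}{\beta\bigl(y+2\delta(1-y)\bigr)}.
\]
By Lemma~\ref{lemma unique steady state}, $G(\cdot,r)$ has a unique zero $y^*_I\in(0,1)$, with $G>0$ to its left and $G<0$ to its right, so $G_y(y^*_I,r)\le0$; equality is impossible, since it would make $y^*_I$ a local maximum of $G_y(\cdot,r)$, contradicting the third-derivative positivity established in the proof of Lemma~\ref{lemma unique steady state}. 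Hence $G_y(y^*_I,r)<0$, the implicit function theorem defines $y^*_I(r)$ (the globally unique steady state in $[0,1]$) locally, and $\sign\bigl(\partial y^*_I/\partial x\bigr)=\sign\bigl(G_x(y^*_I,r)\bigr)$ for every parameter $x$.

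The routine partials can be read off directly from the displayed form of $G$. One has $G_\kappa=-y^*_I<0$, and $G_\beta=-\rho y^*_I h\,\delta^2\theta^2(1-y^*_I)^2/\bigl(\beta^2(y^*_I+2\delta(1-y^*_I))\bigr)<0$ using $h>0$; since $\mu$ and $\lambda$ enter $G$ only through $h$, which appears linearly with a positive coefficient, and $\partial h/\partial\mu=\lambda>0$, $\partial h/\partial\lambda=\mu+1>0$, both $G_\mu$ and $G_\lambda$ are positive. For the reach parameter, differentiation yields $G_\rho=y^*_I(1-y^*_I)\bigl(\tfrac12-\delta\theta(1-a(y^*_I,I))\bigr)$, which is exactly the stated criterion; that both signs arise with $y^*_I\in I$ will be witnessed by the numerical examples in Appendix~\ref{appendix:online}.

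For $\delta$, one computes $G_\delta=\rho y^*_I(1-y^*_I)\theta\Bigl(\tfrac{2h\theta(1-y^*_I)\delta(y^*_I+\delta(1-y^*_I))}{\beta(y^*_I+2\delta(1-y^*_I))^2}-1\Bigr)$. Assumption~\ref{sufficient for a below 1} together with $\lambda<1$ gives $h\theta<\mu\theta<2\beta$, so $G_\delta<0$ follows once $4\delta(1-y^*_I)\bigl(y^*_I+\delta(1-y^*_I)\bigr)<\bigl(y^*_I+2\delta(1-y^*_I)\bigr)^2$, which simplifies to $(y^*_I)^2>0$ — the same reduction used in the proof of Lemma~\ref{ordering lemma}. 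Hence $y^*_I$ is everywhere decreasing in $\delta$.

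The step I expect to be the main obstacle is the monotonicity in $\theta$, because the claim concerns the composed map $\theta\mapsto y^*_I(\theta)$ rather than a sign at a fixed $y$. Here $G_\theta=\rho y^*_I(1-y^*_I)\delta\bigl(2a(y^*_I,I)-1\bigr)$, so $\sign\bigl(dy^*_I/d\theta\bigr)=\sign\bigl(a(y^*_I(\theta),I)-\tfrac12\bigr)$. I would set $\phi(\theta):=a(y^*_I(\theta),I)-\tfrac12$, which is $C^1$ on $(0,1)$, and show that every zero of $\phi$ is a strict upcrossing: at a zero one has $dy^*_I/d\theta=0$, hence $\phi'(\theta)=\partial_\theta a(y,I)\big|_{y=y^*_I(\theta)}>0$ because $a(y,I)$ is linear and strictly increasing in $\theta$ at fixed $y$. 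Therefore $\phi$ has at most one zero on $(0,1)$ (a second zero would force a sign change from $+$ to $-$ between them, a contradiction). Since $a(\cdot,I)\to0$ as $\theta\to0$ while $y^*_I(\theta)$ tends to the interior root of $1+\tfrac{\rho y}{2}(1-y)-y(1+\kappa)$, we get $\phi(\theta)\to-\tfrac12<0$; so either $\phi<0$ on all of $(0,1)$ (take $\theta_I=1$) or $\phi$ has a unique zero $\theta_I\in(0,1)$, negative below and positive above. In either case $y^*_I$ is decreasing in $\theta$ for $\theta<\theta_I$ and increasing for $\theta>\theta_I$, which completes the argument.
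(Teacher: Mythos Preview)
Your proposal is correct and follows essentially the same implicit-function-theorem template as the paper: both compute $G(y,r)=g_I(y)$, verify $G_y(y^*_I,r)<0$ via Lemma~\ref{lemma unique steady state}, and sign the partials $G_\kappa,G_\beta,G_\mu,G_\lambda,G_\rho$ identically. The only differences are in the handling of $\delta$ and $\theta$.

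For $\delta$, the paper bounds the bracketed term via $f(0)<0$ and $f'(y)<0$, whereas you invoke the identity $\bigl(y+2\delta(1-y)\bigr)^2-4\delta(1-y)\bigl(y+\delta(1-y)\bigr)=y^2$ (the same reduction the paper uses in the proof of Lemma~\ref{ordering lemma}). Both arguments are equally short; yours reuses an identity already present in the paper.

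For $\theta$, you actually go further than the paper. The paper only records that $G_\theta(y,r)>0$ iff $\theta$ exceeds a positive expression in $y$, and remarks that the sign can go either way; it does not rigorously establish that the composed map $\theta\mapsto y^*_I(\theta)$ has a \emph{single} turning point $\theta_I$. Your upcrossing argument for $\phi(\theta)=a(y^*_I(\theta),I)-\tfrac12$ --- noting that at any zero of $\phi$ one has $dy^*_I/d\theta=0$ and hence $\phi'(\theta)=\partial_\theta a(y,I)\big|_{y=y^*_I(\theta)}>0$ --- closes this gap cleanly and is a genuine strengthening of the paper's argument.
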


\begin{proof}
By a similar argument as in the proof of Theorem \ref{thm comparative sharing}, for all $x\in(\rho,\kappa,\theta,\mu,\beta,\delta,\lambda)$: $\sign(\frac{dy^*_I(r_0)}{dx}) = \sign(G_x(y^*_0,r_0))$ where now $G(y,r)$ is given by
\begin{equation*}
  G(y,r) =   1+\left(\rho\left(\frac{1}{2}-\delta \theta\right)-1-\kappa\right)y-\rho \left(\frac{1}{2}-\delta \theta\right)y^2 +\rho y (\delta \theta(1-y))^2\frac{\mu-1-\lambda}{\beta(y+2(1-y)\delta)}.
\end{equation*}
We now solve for the sign of each of the partial derivatives of $G$.

\noindent$\boldsymbol{\rho}$: $G_\rho(y,r) = y(1-y)\left[\frac{1}{2}-\delta \theta \left(1-\frac{(1-y)\delta \theta(\mu-1-\lambda)}{\beta(y+2(1-y)\delta)}\right)\right].$

Note that the expression in square brackets is exactly $d_I(y)$, so $\sign(G_\rho(y^*_I,r)) = \sign(d_I(y^*_I))$. In Appendix \ref{appendix:additional} we show that both $d_I(y^*_I)>0$ and $d_I(y^*_I)<0$ are possible and can occur when $y^*_I\in I$.

\noindent$\boldsymbol{\theta}$:  $G_\theta(y,r) = \delta \rho y (1-y)\left(\frac{2 \delta \theta (1-y) (\mu-1-\lambda)}{\beta (y+2 \delta (1-y))}-1\right).$

So, $G_\theta(y,r)>0$ if and only if
\begin{equation*}
\theta>  \frac{\beta (y+2 \delta (1-y))}{2 \delta (1-y) (\mu-1-\lambda)}.
\end{equation*}
Note that the RHS is always positive, so that for sufficiently small $\theta$, $y^*_I$ is decreasing in $\theta$. However, it is possible that the RHS is below 1 so that for large values of $\theta$ the relationship reverses. See Appendix \ref{appendix:additional} for an example.  

\noindent$\boldsymbol{\kappa}$: $G_\kappa(y,r) = -y<0.$

\noindent$\boldsymbol{\mu}$: $G_\mu(y,r) = \frac{\rho y (\delta \theta(1-y))^2}{\beta(y+2(1-y)\delta)}  >0.$

\noindent$\boldsymbol{\beta}$: $G_\beta(y,r) = -\rho y (\delta \theta(1-y))^2\frac{ (\mu-1-\lambda)}{\beta^2(y+2(1-y)\delta)}<0$

\noindent$\boldsymbol{\delta}$: $G_\delta(y,r)= \rho \theta y(1-y) \left[\frac{2 \delta \theta (1-y) \left(y+\delta (1-y)\right) (\mu-1-\lambda)}{\beta \left(y+2 \delta (1-y)\right)^2}-1\right] <0 $.

For the inequality, let $f(y)$ denote the expression in square brackets. It suffices to prove $f(y)<0$ for all $y$. This follows from $f(0) = \frac{\theta (\mu-1-\lambda)}{2\beta}-1<0 $ (by Assumption \ref{sufficient for a below 1}), and $f'(y) =  - \frac{2\delta \theta y(\mu-1-\lambda) }{\beta \left(y+2 \delta (1-y)\right)^3}<0$.

\noindent$\boldsymbol{\lambda}$: $G_\lambda(y,r) =-\frac{\rho y (\delta \theta(1-y))^2}{\beta(y+2(1-y)\delta)} <0$.
\end{proof}
 
\begin{theorem}\label{thm comparative boring}
The quasi steady state $y^*_M$ is constant in $\lambda$, increasing in $\mu,\rho$, and $\delta$, and decreasing in $\kappa$ and $\beta$. There exists $\theta_M\in(0,1]$ (whose value depends on the other parameters) such that $y^*_M$ is decreasing in $\theta$ for $\theta<\theta_M$ and increasing in $\theta$ for $\theta>\theta_M$. 
\end{theorem}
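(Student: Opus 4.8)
The plan is to mirror the proofs of Theorems \ref{thm comparative sharing} and \ref{thm comparative evoc} verbatim, specialized to region $M$. First I would put $g_M$ into closed form. Substituting $p^T_M(y)=\tfrac{y}{2}$ and $p^F_M(y)=(1-y)(1-\delta)\theta\bigl(1-a(y,M)\bigr)$ from \eqref{probabilities}, with $a(y,M)$ taken from Lemma \ref{lemma U a V}, into \eqref{general ODE}, and viewing the result as a map $G(y,r)\colon\mathbb{R}^{8}\to\mathbb{R}$, one gets
\begin{equation*}
G(y,r)=1+\Bigl(\rho\bigl(\tfrac12-(1-\delta)\theta\bigr)-1-\kappa\Bigr)y-\rho\bigl(\tfrac12-(1-\delta)\theta\bigr)y^{2}+\frac{\rho y\,\lambda\mu\bigl((1-\delta)\theta(1-y)\bigr)^{2}}{\beta\bigl(y+2(1-y)(1-\delta)\bigr)},
\end{equation*}
which is exactly the region-$I$ expression from the proof of Theorem \ref{thm comparative evoc} with $\delta\mapsto 1-\delta$ and $\lambda\mu-(1-\lambda)\mapsto\lambda\mu$ (the latter because sharing an $M$-story carries no evocativeness bonus). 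By Lemma \ref{lemma unique steady state} the unique root $y^{*}_{0}=y^{*}_{M}(r_{0})$ of $G(\cdot,r_{0})$ in $[0,1]$ is globally stable and $G$ has strictly positive third $y$-derivative there, so exactly as in the proof of Theorem \ref{thm comparative sharing} one gets $G_{y}(y^{*}_{0},r_{0})<0$; the implicit function theorem then produces a $C^{1}$ map $r\mapsto y^{*}_{M}(r)$ with $\sign\bigl(\partial y^{*}_{M}/\partial x\bigr)=\sign\bigl(G_{x}(y^{*}_{0},r_{0})\bigr)$ for each coordinate $x$.

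Second I would sign the six easy partials. Directly from \eqref{general ODE}, $G_{\kappa}=-y<0$ and $G_{\rho}=p^{T}_{M}(y)(1-y)-y\,p^{F}_{M}(y)=y(1-y)\bigl[\tfrac12-(1-\delta)\theta(1-a(y,M))\bigr]>0$ on $(0,1)$, since $1-\delta<\tfrac12$, $\theta<1$ and $a(y,M)\in[0,1]$. The parameters $\mu$ and $\lambda$ enter $G$ only through the (positive) rational term, each linearly with positive coefficient, so $G_{\mu},G_{\lambda}>0$; and $\beta$ enters that term only in the denominator, so $G_{\beta}<0$. For $\delta$, differentiating the rational term (most cleanly in the variable $1-\delta$) and simplifying gives
\begin{equation*}
G_{\delta}(y,r)=\rho\theta y(1-y)\left[1-\frac{2\lambda\mu\theta(1-y)(1-\delta)\bigl(y+(1-y)(1-\delta)\bigr)}{\beta\bigl(y+2(1-y)(1-\delta)\bigr)^{2}}\right],
\end{equation*}
and since Assumption \ref{sufficient for a below 1} with $\lambda<1$ yields $\lambda\mu\theta<2\beta$, it suffices to check $4(1-y)(1-\delta)\bigl(y+(1-y)(1-\delta)\bigr)\le\bigl(y+2(1-y)(1-\delta)\bigr)^{2}$, which with $u:=(1-y)(1-\delta)$ is just $y^{2}\ge0$ — the identical reduction to the one used in Theorem \ref{thm comparative evoc}. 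Hence $G_{\delta}>0$ on $(0,1)$ and $y^{*}_{M}$ is everywhere increasing in $\delta$.

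Third, the $\theta$ claim, which is where the only real work lies. Differentiating, $G_{\theta}(y,r)=\rho(1-\delta)y(1-y)\bigl[\tfrac{2\lambda\mu(1-\delta)(1-y)}{\beta(y+2(1-y)(1-\delta))}\,\theta-1\bigr]$, so $G_{\theta}>0$ iff $\theta>\tau(y):=\tfrac{\beta(y+2(1-y)(1-\delta))}{2\lambda\mu(1-\delta)(1-y)}$, which is positive and finite for $y\in(0,1)$; in particular $y^{*}_{M}$ is decreasing in $\theta$ for small $\theta$. The main obstacle is upgrading this to the genuine single-crossing asserted in the statement, since $y^{*}_{M}$ itself moves with $\theta$. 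I would set $\Phi(\theta):=\theta-\tau\bigl(y^{*}_{M}(\theta)\bigr)$, so $\sign\bigl(dy^{*}_{M}/d\theta\bigr)=\sign\Phi(\theta)$; $\Phi$ is continuous and $C^{1}$ (the map $\theta\mapsto y^{*}_{M}(\theta)$ is $C^{1}$ and, by the bound $\tfrac{1}{1+\kappa+\rho}<y^{*}_{M}<\tfrac{1+\rho}{1+\kappa+\rho}$ established just before \eqref{y_R inequality}, stays in a compact subinterval of $(0,1)$, so $\tau$ and its derivative remain bounded along it), $\Phi(0^{+})<0$, and at any zero $\theta_{0}$ of $\Phi$ one has $G_{\theta}=0$ at $(y^{*}_{M}(\theta_{0}),\theta_{0})$, hence $dy^{*}_{M}/d\theta\big|_{\theta_{0}}=0$, hence $\tfrac{d}{d\theta}\tau(y^{*}_{M}(\theta))\big|_{\theta_{0}}=0$ and $\Phi'(\theta_{0})=1>0$. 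Therefore $\Phi$ vanishes at most once on $(0,1)$ and only upward, which furnishes $\theta_{M}\in(0,1]$ with $y^{*}_{M}$ decreasing for $\theta<\theta_{M}$ and increasing for $\theta>\theta_{M}$ (taking $\theta_{M}=1$ in the degenerate case $\Phi<0$ throughout). Everything else reduces to routine algebra already carried out in the two preceding proofs.
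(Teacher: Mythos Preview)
Your proof is correct and follows the paper's approach essentially verbatim: write $g_M$ as $G(y,r)$, invoke the implicit function theorem with $G_y(y^*_M,r_0)<0$ (via the third-derivative argument from Lemma \ref{lemma unique steady state}), and sign each partial. Your $G_\delta$ argument differs cosmetically---you reduce to the algebraic identity $(y+2u)^2-4u(y+u)=y^2\ge0$ (this reduction appears in the proof of Lemma \ref{ordering lemma}, not Theorem \ref{thm comparative evoc} as you cite), whereas the paper bounds the bracketed fraction by checking its value at $y=0$ and showing it is decreasing in $y$---but the content is the same.

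Where you go beyond the paper is the $\theta$ claim. The paper only computes $\sign G_\theta$ pointwise and observes that $G_\theta<0$ for $\theta$ small and that the threshold $\tau(y)$ can fall below $1$; it does not actually argue single-crossing, so strictly speaking it leaves open that $\theta\mapsto y^*_M(\theta)$ could oscillate. Your $\Phi(\theta)=\theta-\tau(y^*_M(\theta))$ argument closes this gap cleanly: at any zero $\theta_0$ of $\Phi$ one has $dy^*_M/d\theta=0$, hence $\Phi'(\theta_0)=1$, so $\Phi$ crosses zero at most once and only upward. This is a genuine (if small) improvement over the paper's own proof.
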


\begin{proof}
By a similar argument as in the proof of Theorem \ref{thm comparative sharing} we have for all $x\in(\rho,\kappa,\theta,\mu,\beta,\delta,\lambda)$: $\sign(\frac{dy^*_M(r_0)}{dx}) = \sign(G_x(y^*_0,r_0))$ where now $G(y,r)$ is given by
\begin{equation*}
\begin{split}
    G(y,r) &=   1+\left(\rho\left(\frac{1}{2}-(1-\delta) \theta\right)-1-\kappa\right)y-\rho\left(\frac{1}{2}-(1-\delta) \theta\right)y^2\\
    &+\rho y \left((1-\delta) \theta(1-y)\right)^2\frac{\mu-1 }{\beta(y+2(1-y)(1-\delta))}.  
\end{split}
\end{equation*}
We now solve for the sign of each of the partial derivatives of $G$.

\noindent$\boldsymbol{\rho}$: $G_\rho(y,r) = y\left(1-y\right)\left[\frac{1}{2}-\left(1-\delta\right)\theta\left(1-\frac{(\mu-1)(1-y)(1-\delta) \theta}{\beta\left(y+2(1-y)(1-\delta)\right)}\right)\right]>0$.

For the inequality, let $s(y,r)$ denote the expression in square brackets. Then $\sign(G_\rho(y,r)) = \sign(s(y,r))$ and, $s(y,r)=\frac{1}{2}-\left(1-\delta\right)\theta\left(1-a(y,M)\right)>0$, because $1-\delta<\frac{1}{2}$. 

\noindent$\boldsymbol{\theta}$  $G_\theta(y,r) = \rho y\left(1-\delta\right)\left(1-y\right)\left(\frac{2(1-\delta)\theta(1-y)(\mu-1)}{\beta\left(y+2(1-y)(1-\delta)\right)}-1\right)$. 

So, $G_\theta(y,r)>0$ if and only if
\begin{equation*}
\theta>  \frac{\beta\left(y+2(1-y)(1-\delta)\right)}{2(1-\delta)(1-y)(\mu-1)}.
\end{equation*}
Note that the RHS is always positive, so that for sufficiently small $\theta$, $y^*_M$ is decreasing in $\theta$. However, it is possible that the RHS is below 1 so that for large values of $\theta$ the relationship reverses. See Appendix \ref{appendix:additional} for an example.  

\noindent$\boldsymbol{\kappa}$: $G_\kappa(y,r) = -y<0$.

\noindent$\boldsymbol{\mu}$: $G_\mu(y,r) = \frac{ \rho y\left((1-\delta)\theta(1-y)\right)^2}{\beta (y+2(1-y)(1-\delta))}  >0$.

\noindent$\boldsymbol{\beta}$: $G_\beta(y,r) = -\frac{\rho y\left((1-\delta)\theta(1-y)\right)^2(\mu-1) }{\beta ^2 (y+2(1-y)(1-\delta))}<0$.

\noindent$\boldsymbol{\delta}$: $G_\delta(y,r)= -\theta\rho y(1-y)\left[\frac{2(1-\delta)(1-y)(\mu-1) \theta (1-\delta(1-y))}{\beta \left(y+2(1-y)(1-\delta)\right)^2} -1\right]>0$.

For the inequality, let $f(y)$ denote the expression in square brackets. It suffices to prove $f(y)<0$ for all $y$. This follows from $f(0) = \frac{(\mu-1) \theta }{2\beta}-1<0 $ (by Assumption \ref{sufficient for a below 1}), and $f'(y) = -\frac{2(1-\delta)(\mu-1)\theta y }{\beta\left(y+2(1-y)(1-\delta)\right)^3}<0$.

\noindent$\boldsymbol{\lambda}$: $G_\lambda(y,r) = 0$.

\end{proof}
 
\begin{theorem}\label{thm comparative thresholds}

 The thresholds  $\hat{y}_I$ and $\hat{y}_M$ are constant in $\kappa$ and $\rho$ and increasing in $\theta,\mu$, and $\beta$; $\hat{y}_M$ is decreasing in $\delta$ and constant in $\lambda$ and $\hat{y}_I$ is increasing in $\delta$ and decreasing in $\lambda$. 
 \end{theorem}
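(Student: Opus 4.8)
The plan is to work entirely from the implicit characterization of the thresholds in Lemma~\ref{lemma thresholds}: $\hat y_M$ and $\hat y_I$ are the unique zeros in $(0,1)$ of the strictly increasing functions $V(\cdot,M)$ and $V(\cdot,I)$, whose closed forms appear in \eqref{value functions}. Since neither $V(y,M)$ nor $V(y,I)$ contains the parameters $\kappa$ or $\rho$, the thresholds are constant in those two parameters, which settles the first claim. For each remaining parameter $x\in\{\theta,\mu,\beta,\delta,\lambda\}$, I would combine the strict monotonicity $\partial V(y,e)/\partial y>0$ (established inside the proof of Lemma~\ref{lemma thresholds}, where $\partial U(a,y,e)/\partial y>0$ is checked for every $a$) with the implicit function theorem to get $\sign\!\big(\partial\hat y_e/\partial x\big)=-\sign\!\big(\partial V(\hat y_e,e)/\partial x\big)$; equivalently, since $V(\cdot,e)$ is strictly increasing with a single zero, a pointwise downward shift of $V(\cdot,e)$ in a parameter pushes $\hat y_e$ to the right.

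The key simplification I would exploit is the envelope theorem. Because $a\mapsto U(a,y,e)$ is strictly concave with interior maximizer $a(y,e)\in(0,1)$ whenever $y<1$ (the footnote to Lemma~\ref{lemma U a V}), and $\hat y_e<1$, we have $\partial V(y,e)/\partial x=\partial U\big(a(y,e),y,e\big)/\partial x$ for every $y<1$. Writing $U(a,y,M)=\frac{\lambda y}{D}-\frac{2\lambda\mu\theta(1-\delta)(1-y)}{D}(1-a)-\beta a^2$ with $D=y+2(1-\delta)(1-y)$, and $U(a,y,I)=\frac{y}{E}-\frac{2\theta\delta(1-y)\big(\lambda\mu-(1-\lambda)\big)}{E}(1-a)-\beta a^2$ with $E=y+2\delta(1-y)$ (both read off from the proof of Lemma~\ref{lemma U a V}), the partials are then elementary: $\partial U/\partial\beta=-a^2<0$ (with $a=a(\hat y_e,e)>0$ since $\hat y_e<1$); $\partial U/\partial\mu<0$ in both regions; and $\partial U/\partial\theta$ equals a strictly positive multiple of $(a-1)$, hence is negative by $a<1$, where in region $I$ one uses Assumption~\ref{condition on lambda} to sign $\lambda\mu-(1-\lambda)>0$. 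These three give that $\hat y_M$ and $\hat y_I$ are each increasing in $\theta,\mu$, and $\beta$. Differentiating in $\delta$ and simplifying (the cross terms from the quotient rule cancel) yields $\partial U(a,y,M)/\partial\delta>0$ and $\partial U(a,y,I)/\partial\delta<0$, so $\hat y_M$ is decreasing and $\hat y_I$ increasing in $\delta$; and $\partial U(a,y,I)/\partial\lambda=-\frac{2\theta\delta(1-y)(\mu+1)}{E}(1-a)<0$, so $\hat y_I$ is increasing in $\lambda$.

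The one case where no universal pointwise sign is available---and which I expect to be the only, and minor, obstacle---is $\hat y_M$ with respect to $\lambda$: here $\partial U(a,y,M)/\partial\lambda=\frac{y}{D}-\frac{2\mu\theta(1-\delta)(1-y)}{D}(1-a)$ has ambiguous sign for general $y$. The remedy is to evaluate it at the threshold and invoke the defining equation $U\big(a(\hat y_M,M),\hat y_M,M\big)=V(\hat y_M,M)=0$, which says exactly that $\frac{y}{D}-\frac{2\mu\theta(1-\delta)(1-y)}{D}(1-a)=\beta a^2/\lambda$ at $y=\hat y_M$, a strictly positive quantity; hence $\partial V(\hat y_M,M)/\partial\lambda>0$ and $\hat y_M$ is decreasing in $\lambda$. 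Assembling these seven sign computations completes the proof; beyond elementary calculus the only inputs are $\partial V/\partial y>0$ and the interiority of $a(\hat y_e,e)$ from Lemmas~\ref{lemma thresholds} and~\ref{lemma U a V} (the latter relying on Assumption~\ref{sufficient for a below 1}), together with Assumption~\ref{condition on lambda} in region $I$.
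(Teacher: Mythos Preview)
Your proposal is correct and follows essentially the same route as the paper: apply the implicit function theorem to $V(\hat y_e,e)=0$, use $\partial V/\partial y>0$ from Lemma~\ref{lemma thresholds}, and sign $\partial V/\partial x$ for each parameter, including the use of the threshold identity $V(\hat y_M,M)=0$ to handle the otherwise ambiguous $\lambda$--$M$ case. The one (minor) difference is that you invoke the envelope theorem to compute $\partial V/\partial x=\partial U/\partial x\big|_{a=a(y,e)}$, which streamlines the algebra relative to the paper's direct differentiation of the closed-form expressions for $V$ in \eqref{value functions}; the resulting signs and the treatment of the $\lambda$--$M$ case coincide with the paper's.
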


\begin{proof} For $X\in\{M,I\}$, let $r_0 = (\rho_0,\kappa_0,\theta_0,\mu_0,\beta_0,\delta_0,\lambda_0)$ be a vector of parameters and consider $V(y,X)$ as a function$, V^X(y,r):\mathbb{R}^8\rightarrow\mathbb{R}$ (for this proof we use a superscript to distinguish between the two value functions, and subscripts for partial derivatives). Recall that each threshold $\hat{y}^X$ is the unique solution in $(0,1)$ to
\begin{equation}\label{threshold equation}
V^X(\hat{y}_0,r_0)=0.  
\end{equation}
Additionally, recall that by Lemma \ref{lemma thresholds}, for $X\in\{M,I\}$ we have $V^X_y(y,r) >0$ for all $y\in[0,1]$ . Since $V^X(\hat{y}^X_0,r_0)=0$ and $V^X_{y}(\hat{y}^X_0,r_0)\neq 0$, by the implicit function theorem, \eqref{threshold equation} defines a function $\hat{y}^X(r):\mathbb{R}^7\rightarrow \mathbb{R}$ in some neighborhood of $r_0$ and 
\[
    \nabla \hat{y}^X(r_0) = -\frac{1}{V^X_{y}(\hat{y}^X_0,r_0)}\nabla_{r}V^X(\hat{y}^X_0,r_0)
\]

Furthermore, since $V^X_{y}(\hat{y}^X_0,r_0)> 0$, for all $m\in(\rho,\kappa,\theta,\mu,\beta,\delta,\lambda)$, $\sign(\frac{d\hat{y}^X(r_0)}{dm}) = \sign(-V^X_m(\hat{y}^X_0,r_0))$. 
We now use the functional forms of $V(y,M)$ and $V(y,I)$ in \eqref{value functions} to solve for the sign of each of the partial derivatives of $V^X$. First, it is immediate that for $X\in\{M,I\}$ we have $V^X_\kappa(y,r)= V^X_\rho(y,r) =0$. The remaining partial derivatives are presented below. The inequalities hold because by Assumption \ref{sufficient for a below 1},
\[
\frac{(\mu-1-\lambda)(1-y)\delta\theta }{\beta(y+2(1-y)\delta)}< \frac{ (\mu-1)(1-y)(1-\delta)\theta}{\beta(y+2(1-y)(1-\delta))}< \frac{2\beta(1-y)(1-\delta)}{\beta(y+2(1-y)(1-\delta))}<1.
\]
\[
    \begin{split}
V^M_\theta(y,r)&= \frac{ 2(\mu-1)(1-y)(1-\delta)}{y+2(1-y)(1-\delta)} \left( \frac{(\mu-1)(1-y)(1-\delta) \theta}{\beta(y+2(1-y)(1-\delta))}-1\right)<0,\\
          V^I_\theta(y,r) &=  \frac{2(\mu-1-\lambda)(1-y)\delta }{y+2(1-y)\delta}\left( \frac{(\mu-1-\lambda)(1-y)\delta \theta}{\beta(y+2(1-y)\delta)}-1 \right)<0, \\ 
           V^M_\mu(y,r)&=\frac{2(1-y)(1-\delta) \theta}{y+2(1-y)(1-\delta)}\left( \frac{(\mu-1)(1-y)(1-\delta) \theta}{\beta(y+2(1-y)(1-\delta))}-1\right)<0,\\
            V^I_\mu(y,r) &=\frac{2(1-y)\delta \theta}{y+2(1-y)\delta}\left(\frac{(\mu-1-\lambda)(1-y)\delta \theta}{\beta(y+2(1-y)\delta)}-1 \right)<0,\\
       V^M_\beta(y,r) &=-\frac{1}{\beta^2}\left(\frac{(\mu-1)(1-y)(1-\delta) \theta}{y+2(1-y)(1-\delta)}\right)^2<0,\\
             V^I_\beta(y,r)&= -\frac{1}{\beta^2}\left(\frac{(\mu-1-\lambda)(1-y)\delta \theta}{y+2(1-y)\delta}\right)^2 <0,\\
                V^M_\delta(y,r) &=\frac{2(1-y)y}{\left(y+2(1-y)(1-\delta)\right)^2}\left(1+\theta(\mu-1)\left(1-\frac{(\mu-1)(1-y)(1-\delta)\theta}{\beta \left(y+2(1-y)(1-\delta)\right)}\right)\right)>0, \\ 
                V^I_\delta(y,r) & = \frac{2 (1-y) y}{\left(y+2(1-y)\delta\right)^2}\left(\theta(\mu-1-\lambda)\left(\frac{(\mu-1-\lambda)(1-y)\delta\theta}{\beta \left(y+2(1-y)\delta\right)}-1\right)-1-\lambda \right)<0,\\ 
                 V^M_\lambda(y,r) & = 0, \\
                   V^I_\lambda(y,r) & = \frac{y}{y+2(1-y)\delta} +\frac{2(1-y)\delta \theta}{y+2(1-y)\delta}\left(1-\frac{(\mu-1-\lambda)(1-y)\delta \theta}{\beta\left(y+2(1-y)\delta\right)}\right)>0.  
    \end{split}
\]

\end{proof}

\begin{theorem}\label{thm:rho limit}
Fix values for the other parameters and let $y^*_R(\rho)$ be the quasi steady state for region $R\in\{I,M,S\}$  as a function of $\rho$, and $y^\infty_R := \lim\limits_{\rho \to \infty} y^*_R(\rho)$. This limit exists for all regions $R$ and:\footnote{In Appendix \ref{appendix:additional} we show that all cases described in statements 2-4 are possible, i.e., $y^*_I(\rho)$ can converge to either $0,1$ or to an interior point, and when the limit is interior it can be either increasing or decreasing towards its limit.}
\begin{enumerate}
    \item $y^\infty_S =  y^\infty_M = 1$.
    \item If $d_I(y)>0$ ($<0$) for all $y \in (0,1)$, then 
$ y^\infty_I = 1$ ($y^\infty_I = 0$). 
    \item If there exists $\bar y \in (0,1)$ such that $d_I(\bar y)=0$, then 
$y^\infty_I = \bar y$.

\item If $d_I(\frac{1}{1+\kappa})>0$ ($<0$), then $y^*_I(\rho)$ is strictly increasing (decreasing) toward its limit.
\end{enumerate}
\end{theorem}

\begin{proof}
The following observations hold for all $R\in\{I,M,S\}$: By \eqref{general ODE},
    \begin{equation}\label{eqn:y rho}
        y^*_R(\rho) = \frac{1+p^T_R(y^*_R(\rho))\rho}{1+\kappa+\rho(p^T_R(y^*_R(\rho))+p^F_R(y^*_R(\rho)))}.
    \end{equation} 
and by \eqref{probabilities}, $p^T_R(y)+p^F_R(y)>0$ for all $y\in(0,1)$. So if the limit $y^\infty_R$ exists and is in $(0,1)$,  then $ y^\infty_R= \frac{p^T_R(y^\infty_R)}{p^T_R(y^\infty_R)+p^F_R(y^\infty_R)}$, which simplifies to $d_R(y^\infty_R)=0$. Thus, if the limit defining $y^\infty_R$ exists then either $y^\infty_R\in\{0,1\}$ or $y^\infty_R\in(0,1)$ and satisfies $d_R(y^\infty_R)=0$. Additionally, by \eqref{eqn:y rho}, $y^*_R(0)=\frac{1}{1+\kappa}$. Next,

\begin{enumerate}
    \item By Theorems \ref{thm comparative sharing} and \ref{thm comparative boring}, both $y^*_I$ and $y^*_S$ are monotonically increasing in $\rho$, so the limit exists for both of these decision rules. As explained in the main text, for $R=M,S$ users are always discerning, i.e., $d_R(y)>0$ for all $y\in(0,1)$. So for both of these decision rules $y^\infty_R>y^*_R(0)>0$ and $y^\infty_R\in\{0,1\}$, which implies $ y^\infty_S =  y^\infty_M = 1$.
    \item By Theorem \ref{thm comparative evoc}, if $d_I(y)>0$ ($d_I(y)<0$) for all $y$ then $y^*_I(\rho)$ is monotone increasing (decreasing). Thus, in both of the cases the limit exists, and it  cannot be interior since there does not exist an interior $\bar y$ with $d_I(\bar y)=0$. Now, if $d_I(y)>0$ for all $y$ then $y^\infty_I>y^*_I(0)>0$ so $y^\infty_I=1$. If $d_I(y)<0$ for all $y$ then $y^\infty_I<y^*_I(0)<1$ so $y^\infty_I=0$.

    \item Assume that there exists  $\bar y \in (0,1)$ such that $d_I(\bar y)=0$. We will prove that $y^*_I(\rho)$ converges to $\bar y$, and that $y^*_I(\rho)$ is monotone increasing (decreasing) if $d_I(\frac{1}{1+\kappa})>0$ $(<0)$.  First, consider the case where $d_I(\frac{1}{1+\kappa})>0$. Since attention is decreasing in $y$, $d_I(y) = \frac{1}{2}-\delta\theta \left(1-a(y,I)\right)$ is also decreasing in $y$. By Theorem \ref{thm comparative evoc},  $y^*_I(\rho)$ is increasing at all $\rho$ for which $d_I(y^*_I(\rho))>0$, and decreasing when the inequality is reversed. Since $y^*_I(0)=\frac{1}{1+\kappa}$ the assumption $d_I(\frac{1}{1+\kappa})>0$ implies that $y^*_I(\rho)$ is bounded above by $\bar y$ and increasing for all $\rho$. Hence, the limit defining $y^\infty_R$  exists and is equal to $\bar y$. A symmetric argument shows that if $d_I(\frac{1}{1+\kappa})<0$ then $y^*_I(\rho)$ is strictly decreasing and converges to $\bar y$, which completes the proof of claims 3 and 4.  
    
\end{enumerate}
\end{proof}

\begin{proof}[Proof of Theorem \ref{thm: comparative kappa}]
     By Lemma \ref{ordering lemma}, we have $\min\{y^*_S,y^*_M\}>y^*_N=\frac{1}{1+\kappa}$. Thus, $y^*_S,y^*_M$, and $y^*_N$ all converge to $1$ as $\kappa$ goes to $0$. Since the thresholds are constant in $\kappa$, this implies that there exists some $\kappa'>0$ such that $y^*_S,y^*_M,y^*_N$ are all in region $S$ for all $\kappa<\kappa'$. Hence, $y^*_S\in\mathcal{S}_\kappa$, and $y^*_M,y^*_N\notin\mathcal{S}_\kappa$ for all $\kappa<\kappa'$. If the intermediate region is $M$, the above implies that $\mathcal{S}_\kappa=\{y^*_S\}$ for all $\kappa<\kappa'$ (the only other possible stable steady state in this case is $\hat{y}_I$, which cannot be stable if $y^*_S$ is in region $S$). Consider the case where the intermediate region is $I$. Let $y^*_I(\kappa)$ denote the quasi steady state $y^*_I$ as a function of $\kappa$. Let $y^*_I(0):=\lim_{\kappa\rightarrow 0} y^*_I(\kappa)$. If $y^*_I(0)\in I$, then there exists $\kappa_1>0$ such that $\mathcal{S}_\kappa=\{y^*_S,y^*_I\}$ for all $\kappa<\kappa_1$. If  $y^*_I(0)\in N$, then there exists $\kappa_1>0$ such that $\mathcal{S}_\kappa=\{y^*_S,\hat y_I\}$ for all $\kappa<\kappa_1$.Finally, if $y^*_I(0)\in S$, then there exists $\kappa_1>0$ such that $\mathcal{S}_\kappa=\{y^*_S\}$ for all $\kappa<\kappa_1$. Numerical examples in Appendix \ref{appendix:additional} show that all three cases are possible. This completes the proof of the first claim, where we use our standing assumption that no two variables in $\{\hat{y}_I,\hat{y}_M,y^*_N,y^*_I,y^*_S\}$ are equal.
     To prove the second claim, note that as $\kappa\rightarrow\infty$, all quasi steady states converge to $0$ (since $y^*_N=1/(1+\kappa)\rightarrow 0$ and by \eqref{y_R inequality} all others are  bounded above by $\frac{1+\rho}{1+\kappa+\rho}$), while the thresholds $\hat{y}_I,\hat{y}_M$ are constant in $\kappa$. Hence there exists $\kappa_2$ large enough that all quasi steady states fall below $\min\{\hat{y}_I,\hat{y}_M\}$ and therefore lie in region $N$, giving $\mathcal{S}_\kappa=\{y^*_N\}$ for all $\kappa>\kappa_2$. \end{proof}

\subsubsection{Summary}

We summarize the takeaways from the comparative statics results above that are not discussed in the main text.

The candidate limit point $y^*_N=\frac{1}{1+\kappa}$ is decreasing in $\kappa$ and constant in all other parameters. As one would expect, the other candidate limit points are increasing in the loss $\mu$  from  sharing false stories. Additionally, any limit point that is a quasi steady state is decreasing in the exogenous inflow of false stories $\kappa$ and all but $y^*_N$ are  decreasing in the cost of attention  $\beta$. 

More surprisingly, the candidate limit point  $\hat{y}_I$ is increasing in $\beta$ and constant in $\kappa$. Recall that $\hat{y}_I$ is the point where users are exactly indifferent between sharing and not sharing very interesting stories. It is increasing in $\beta$ because users' payoffs are decreasing in the cost of attention and increasing in the share of true stories. Hence, when $\beta$ goes up, the share of true stories required for indifference needs to go up as well to compensate for the utility loss. $\hat{y}_I$ does not depend on $\kappa$, since the exogenous inflow of false stories is not an argument in users' utility functions.

\begin{table}[h]
\centering
\begin{minipage}[t]{0.45\textwidth}
\centering
\caption{\textbf{Comparative Statics for $\kappa$}}
\label{table:kappa_beta}
\setlength{\tabcolsep}{12pt} 
\begin{tabular}{@{\extracolsep{\fill}}ll@{}}
\toprule
\(y^*_M, y^*_S, y^*_I\) & Decreasing. \\
\(\hat{y}_I\) & Constant. \\
\bottomrule
\end{tabular}
\end{minipage}
\hfill
\begin{minipage}[t]{0.45\textwidth}
\centering
\caption{\textbf{Comparative Statics for $\beta$}}
\label{table:beta}
\setlength{\tabcolsep}{12pt} 
\begin{tabular}{@{\extracolsep{\fill}}ll@{}}
\toprule
\(y^*_M, y^*_S, y^*_I\) & Decreasing. \\
\(\hat{y}_I\) & Increasing. \\
\bottomrule
\end{tabular}
\end{minipage}
\end{table}

The quasi steady states $y^*_N$ and $y^*_M$ are constant in the evocativeness parameter  $\lambda$; all other candidate limit points are decreasing in $\lambda$. As $\lambda$ increases, users pay less attention to the veracity of very interesting stories. This leads to a decrease in the share of true stories in any limit point where users share very interesting stories, which is the case when $\lambda$ is sufficiently large. Comparative statics with respect to $\delta$ are more nuanced. 

  \medskip

  \bigskip

  \noindent\textbf{The role of $\delta$}
  \begin{table}[h]
  \centering
  \caption{  \textbf{Comparative Statics for $\delta$}}
  \label{table:delta}
  \setlength{\tabcolsep}{10pt} 
  \begin{tabular}{ll@{}}
  \toprule
  \(y^*_M\) & Increasing. \\
  \(y^*_S\) & Decreasing for $\delta$ close to $\frac{1}{2}$, and increasing for $\delta$ close to $1$. \\
  \(y^*_I\) & Decreasing. \\
  \(\hat{y}_I\) & Increasing. \\
  \bottomrule
  \end{tabular}
  \end{table}

 Increasing $\delta$ means false stories are more likely to be very interesting, so the comparative statics for $y^*_I,y^*_M$ are intuitive---the limit share of true stories decreases (increases) in $\delta$ when users share only very interesting (mildly interesting) stories. The quasi steady state $y^*_S$, where users share both types of stories, decreases in $\delta$ when $\delta$ is close to $\frac{1}{2}$, and increases in $\delta$ when $\delta$ is close to $1$.   Appendix \ref{appendix:online}  presents numerical examples where $y^*_S$ is both  decreasing and increasing in $\delta$ when it is a limit point. Intuitively, the non-monotonicity arises because when $\delta$ is close to $\frac{1}{2}$ users are sharing more very interesting stories than mildly interesting stories, since both types of stories are almost equally likely to be false and very interesting stories have additional value. In this case, the comparative statics with respect to $\delta$ are similar to those comparative statics for $y^*_I$, where  users are only sharing very interesting stories. As $\delta$ moves closer to $1$, the stories that users share are more likely to be mildly interesting and comparative statics with respect to $\delta$ eventually become similar to those for $y^*_M$. Finally, $\hat{y}_I$ is increasing in $\delta$ because for a fixed $y_n$, increasing $\delta$ leads to a decrease in the value from sharing very interesting stories.\footnote{This can lead to a counter-intuitive situation where asymptotically users only share very interesting stories, but when very interesting stories become more likely to be false the limit share of true stories increases. This happens when $\hat{y}_I$ is a limit point and is between regions $N$ and $I$ so users are mixing between sharing very interesting stories and not sharing.} 

\bigskip

 \subsection{Additional Details}\label{appendix:additional}
Differentiation of the functions  $a(y,I)$ and $a(y,M)$ shows that
\[
\begin{split}
   \frac{\partial a(y,M)}{\partial y} & = -\dfrac{ (\mu -1)(1-\delta) \theta }{\beta\left(y+2(1-y)(1-\delta)\right)^2}<0, \\ 
    \frac{\partial a(y,I)}{\partial y} &= -\dfrac{  (\mu-\lambda - 1)\delta  \theta}{\beta\left(y+2(1-y)\delta\right)^2}<0, \\
    \frac{\partial a(y,M)}{\partial \theta} & =\dfrac{(\mu-1)(1-y)(1-\delta)}{\beta\left(y+2(1-y)(1-\delta)\right)} >0,\\
       \frac{\partial a(y,I)}{\partial \theta}&= \dfrac{(\mu-1-\lambda)(1-y)\delta}{\beta\left(y+2(1-y)\delta\right)} >0,\\
      \frac{\partial a(y,M)}{\partial \delta} &= -\dfrac{  (\mu -1) (1-y) y\theta}{\beta\left(y+2(1-y)(1-\delta)\right)^2}<0,\\
        \frac{\partial a(y,I)}{\partial \delta} &= \dfrac{  (\mu-1-\lambda)(1-y) y\theta}{\beta\left(y+2(1-y)\delta\right)^2}>0,\\
        \frac{\partial a(y,M)}{\partial \beta} &=-\dfrac{(\mu-1)(1-y)(1-\delta) \theta}{\beta^2\left(y+2(1-y)(1-\delta)\right)} <0,\\
         \frac{\partial a(y,I)}{\partial \beta} & =  -\dfrac{(\mu-1-\lambda)(1-y)\delta \theta}{\beta^2\left(y+2(1-y)\delta\right)}<0,
         \end{split}
        \]
               \[
    \begin{split}
          \frac{\partial a(y,M)}{\partial \lambda} & =  0,\\
            \frac{\partial a(y,I)}{\partial \lambda} & = -\dfrac{(1-y)\delta \theta}{\beta\left(y+2(1-y)\delta\right)} <0,\\
              \frac{\partial a(y,M)}{\partial \mu} & =  \dfrac{(1-y)(1-\delta) \theta}{\beta\left(y+2(1-y)(1-\delta)\right)} >0,\\
               \frac{\partial a(y,I)}{\partial \mu} & = \dfrac{(1-y)\delta \theta}{\beta\left(y+2(1-y)\delta\right)} >0,
    \end{split}
\]
where we use Assumption \ref{condition on lambda} to sign the partial derivatives.

Below we present numerical examples for claims made in main text. All examples satisfy our standing parametric assumptions, i.e., all parameters are strictly positive, satisfy Assumptions \ref{condition on lambda} and \ref{sufficient for a below 1}, and $\theta<1,\delta\in(\frac{1}{2},1)$. 
\subsubsection*{Numerical Examples for Sections \ref{subsec:sharing decision} and \ref{subsec:dynamics}}

For a numerical example that the relationships between $y^*_S$ and $y^*_M$ and between $y^*_I$ and $y^*_N$ can go both ways fix $\beta = \kappa = \rho = 1$, $\mu=1.75$, $\lambda=0.25$, and $\theta=0.75$. Calculations show that  $y^*_M<y^*_S$ for $\delta \lessapprox 0.745$ and $y^*_M>y^*_S$ for $\delta \gtrapprox 0.745$. Additionally, $y^*_N<y^*_I$ for $\delta  \lessapprox0.751$ and $y^*_N>y^*_I$ for $\delta\gtrapprox0.751$. Thus, Lemma \ref{ordering lemma} is ``all we can know" regarding the ordering of the quasi steady states. Likewise, the relationship between the thresholds $\hat{y}_I,\hat{y}_M$ is undetermined. Calculations with the same parameter values as above show that $\hat{y}_I<\hat{y}_M$ for $\delta \lessapprox0.647$ and  $\hat{y}_I>\hat{y}_M$ for $\delta\gtrapprox0.647$. Finally, the relationship between the thresholds and quasi steady states is also not determined: Both $\max\{y^*_I,y^*_M,y^*_N,y^*_S\}<\min\{\hat{y}_I,\hat{y}_M\}$ and $\min\{y^*_I,y^*_M,y^*_N,y^*_S\}>\max\{\hat{y}_I,\hat{y}_M\}$ are possible (see the numerical examples for Theorem \ref{thm: comparative kappa}).

We now show that both of the configurations that give rise to Case (a) of Theorem \ref{thm S_F} are possible. For an example where $\hat{y}_I<\hat{y}_M$ and $y^*_I<\hat{y}_I<y^*_N$, set $\rho=20,\theta=0.9,\kappa=12,\mu=1.55,\beta=1,\delta=0.65,\lambda = 0.45$. For an example where $\hat{y}_I>\hat{y}_M$ and $y^*_S<\hat{y}_I<y^*_M$, set $\rho=1,\theta=0.9,\kappa=2.55,\mu=1.65, \beta=1,\delta=0.8,\lambda=0.25$. It can be verified that in both of these examples $\hat{y}_I$ is the unique stable steady state of the LDI. 

\subsubsection*{Numerical Examples for Sections \ref{subsec:comparative} and \ref{appendix:comparative}}

\paragraph{Non-monotonicity in $\theta$:}

We now show that each quasi steady state  $y^*_M,y^*_S,y^*_I$ can be first decreasing and then increasing in $\theta$ when it is a steady state for the LDI (and thus a limit point for the system). For $y^*_S$, set  $\rho=0.3, \kappa=1.5,\mu=1.57,\beta=0.3,\delta=0.55,\lambda = 0.05$.  With these parameters, $y^*_S$ is in region $S$ for all $\theta\in(0,1)$ and is decreasing in $\theta$ for $\theta\lessapprox 0.95$ and then increasing. Furthermore, for all values of $\theta\in(0,1)$, all other candidate limit points are also in region $S$, so $y^*_S$ is the unique limit point of the system for any value of $\theta$. 

For $y^*_M$, set $\rho=1, \kappa=8,\mu=1.57,\beta=0.3,\delta=0.9,\lambda = 0.05$. With these parameters, $\hat{y}_M<\hat{y}_I$ for all $\theta\in(0,1)$, so the intermediate region is $M$. Additionally, $y^*_M$ is in region $M$ for all $\theta \gtrapprox 0.16$ (otherwise, $y^*_M$ is in region $S$), and $y^*_M$ is decreasing in $\theta$ for $\theta \lessapprox0.87$ and then increasing in $\theta$. So $y^*_M$ is both decreasing and increasing in $\theta$ in region $M$.

Finally, for $y^*_I$, set  $\rho=0.45,\kappa=3.34,\mu=1.54,\beta=0.3,\delta=0.53,\lambda=0.1$. With these parameters, $\hat{y}_I<\hat{y}_M$ for all $\theta\in(0,1)$, so the intermediate region is $I$. Additionally, $y^*_I$ is in region $I$ for all $\theta\gtrapprox0.85$ (in region $S$ for smaller $\theta$), and $y^*_I$ is decreasing in $\theta$ for $\theta\lessapprox0.88$ and then increasing in $\theta$. So $y^*_I$ is non-monotone in $\theta$ in region $I$.  
 
\paragraph{Non monotonicity in $\delta$:}

For an example that $y^*_S$  can decrease and then increase in $\delta$ when it is a steady state for the LDI, again set $\beta = \kappa = \rho = 1$, $\mu=1.75$, $\lambda=0.25$, and $\theta=0.75$. With these parameters, $y^*_S>\max\{\hat{y}_I,\hat{y}_M\}$ for all $\delta\in(\frac{1}{2},1)$ so that $y^*_S$ is a steady state for the LDI for any value of $\delta$. Additionally, $y^*_S$ is decreasing in $\delta$ for $\delta\lessapprox 0.73$ and increasing in $\delta$ for $\delta\gtrapprox 0.73$. 

\paragraph{Dependence of $y^*_I$ on $\rho$:}

We now show that $d_I(y^*_I)$ can be either positive or negative, which (by Theorem \ref{thm comparative evoc}) implies that $y^*_I$ can be either increasing or decreasing in $\rho$, and that both cases can occur when $y^*_I$ is a limit point. We also show that all cases described by Theorem \ref{thm:rho limit} are possible: For fixed values of the other parameters $y^*_I(\rho)$ can converge to either $0,1$ or to an interior point, and when the limit is interior it can be either increasing or decreasing towards its limit. 

In the following numerical examples we fix values for all parameters except $\rho$ and $\delta$ and consider comparative statics with respect  to $\rho$ at four different values of $\delta$. In the first two specifications, discernment with decision rule $I$ is negative, so $y^*_I$ is decreasing in $\rho$. In the third and fourth specifications discernment is  positive so $y^*_I$ increases in $\rho$.

Set $\theta=0.9,\kappa=3,\mu=1.55,\beta=1, \lambda =0.45$, and $\delta = 0.8$. With these parameters $\hat{y}_I<\hat{y}_M$, so the intermediate region is $I$ (for any value of $\rho$). For these parameters, $d_I(y)<0$ for all $y\in(0,1)$ so, by Theorem \ref{thm:rho limit}, $y^*_I(\rho)\rightarrow 0$. Starting with $\rho=0$, we have $y^*_I$ in region $S$, and $y^*_I$ is decreasing in $\rho$ such that it enters region $I$ when $\rho\approx 15.6$, and enters region $N$ when $\rho\approx 42.2$ (so it is a limit point when $\rho$ is between those values). 

With the same parameter values but setting $\delta=0.576$, the intermediate region is again $I$, but $y^*_I\in I$ for $\rho=0$. It is still the case that $d_I(y^*_I(0))=d_I(\frac{1}{1+\kappa})<0$,  so $y^*_I$ is still decreasing in $\rho$, but now for $\bar y \approx 0.24 $ we have $d_I(\bar y)=0$ so $y^*_I(\rho)\rightarrow \bar y$ and is decreasing towards its limit. In this case, $\bar y\in I$ so $y^*_I$ converges to an interior point and remains in region $I$ for any value of $\rho$. Note that in this specification and the previous one $d_I(y^*_I(\rho))<0$ for all $\rho$. We now present two examples where this inequality is reversed.

If we further decrease $\delta$ to $\delta=0.57$, the intermediate region is again $I$, and $y^*_I\in I$ for $\rho=0$. However, now $d_I(\frac{1}{1+\kappa})>0$ so $y^*_I$ is increasing in $\rho$ and enters region $S$ when $\rho\approx 142.7$. For these parameters $d_I(\tilde y)=0$ for $\tilde y\approx 0.47$ so $y^*_I(\rho)\rightarrow \tilde y$ and is increasing towards its limit. Finally, setting $\delta=0.55$, we get that $d_I(y)>0$ for all $y\in(0,1)$, so that $y^*_I(\rho)\rightarrow 1$. 

These examples demonstrate that making false stories less likely to be very interesting (decreasing $\delta$) can reverse the effect of increasing reach, and that small changes in $\delta$ can have large effects on the limit of $y^*_I$ in $\rho$.

\paragraph{Examples for Theorem \ref{thm: comparative kappa}} We demonstrate that when the intermediate region is $I$, for  $y^*_I(0):=\lim_{\kappa\rightarrow 0} y^*_I(\kappa)$, all three of the following cases are possible: $y^*_I(0)\in N$, $y^*_I(0)\in I$, and $y^*_I(0)\in S$. As explained in the proof of Theorem \ref{thm: comparative kappa}, each of these cases leads to a different composition of $\mathcal{S}_\kappa$ for small values of $\kappa$.

Fix all parameters except $\kappa$ at $\rho=100,\theta=0.9,\mu=1.55,\beta=1,\delta=0.8,\lambda=0.45$. With these parameters, the intermediate region is $I$. For all $\kappa>0$, we have $y^*_I\in N$, and for all $\kappa\lessapprox 10.2$ we have $y^*_N>\hat{y}_I$. Thus, for all $\kappa\in (0,10.2)$ we have $\hat{y}_I\in\mathcal{S}_\kappa$. 

Next, set $\lambda=0.52$ and leave the remaining parameters as above. The intermediate region is still $I$, but now for all $\kappa\lessapprox 14.9$ we have $y^*_I\in I$. Thus, for all $\kappa\in (0,14.9)$ we have $y^*_I\in\mathcal{S}_\kappa$.

Finally, take the configuration in the previous paragraph but set $\rho=5$. The intermediate region is still $I$ but now for all $\kappa\lessapprox 4.6$ we have $y^*_I\in S$. Thus, for all $\kappa\in (0,4.6)$ we have $y^*_I\notin\mathcal{S}_\kappa$ and $\hat{y}_I\notin \mathcal{S}_\kappa$.

\subsection{Phase Diagrams}\label{appendix:phase diagrams}

As mentioned in the main text,  five variables  pin down the phase diagram of the continuous-time system: the two thresholds $\hat{y}_I$ and $\hat{y}_M$, and the quasi steady states for the system's three regions, i.e., $y^*_S,y^*_N$ and one of $y^*_I,y^*_M$. In this section, we explain why there are 40 configurations of these variables that satisfy the model's restrictions, and present phase diagrams for all possible configurations. Recall that the only restriction implied by the analysis so far is Lemma \ref{ordering lemma}, i.e.,  $\min \{y^*_S,y^*_M\}> \max\{y^*_I,y^*_N\}$, and that numerical computations show that any configuration of the variables that satisfies this restriction arises for some parameter values.

To see why there are 40 configurations, consider the case  $\hat{y}_I<\hat{y}_M$. In this case, the five variables are $\{\hat{y}_I,\hat{y}_M,y^*_N,y^*_I,y^*_S\}$. We can now count the number of orderings of these variables that satisfy our restrictions. First, we can choose the relative positions of the two thresholds, giving ${5\choose 2} = 10$ options. Lemma \ref{ordering lemma} shows that $y^*_S>\max\{y^*_N,y^*_I\}$, and  $\hat{y}_I<\hat{y}_M$ by assumption, so the only degree of freedom is the order between $y^*_N,y^*_I$, for a total of $20$ configurations in which $\hat{y}_I<\hat{y}_M$. Similarly, there are 20 configurations with $\hat{y}_I>\hat{y}_M$.

 Figures \ref{phasediagram_1}, \ref{phasediagram_2}, \ref{phasediagram_3} and \ref{phasediagram_4} present phase diagrams for all possible configurations of the thresholds and quasi steady states. Stable steady states are in green, repelling steady states are in red, and quasi steady states that are not steady states are in purple. The numbers on the bottom left of each phase diagram are the indices of the positions of the two thresholds among the five variables that pin down the phase diagram. For example, in the phase diagram on the top left of each figure, the thresholds are in the first and second positions. 
 
 \bigskip
 
\begin{figure}[]
    \centering
    \includegraphics[width=0.75\textwidth]{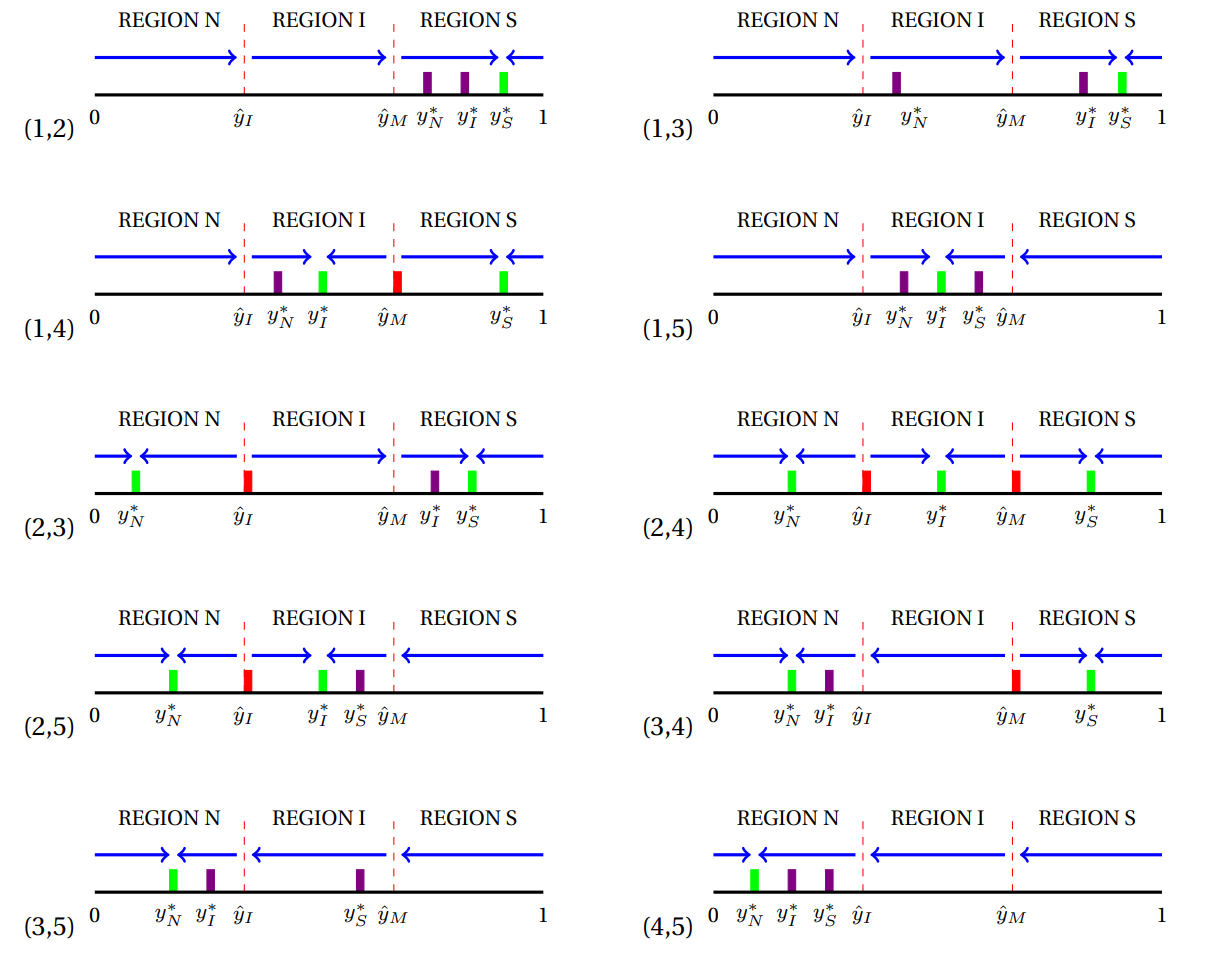}
    \caption{Phase diagrams for the case $\hat{y}_I<\hat{y}_M; y^*_I>y^*_N$.}
    \label{phasediagram_1}
\end{figure}

\begin{figure}[]
    \centering
    \includegraphics[width=0.75\textwidth]{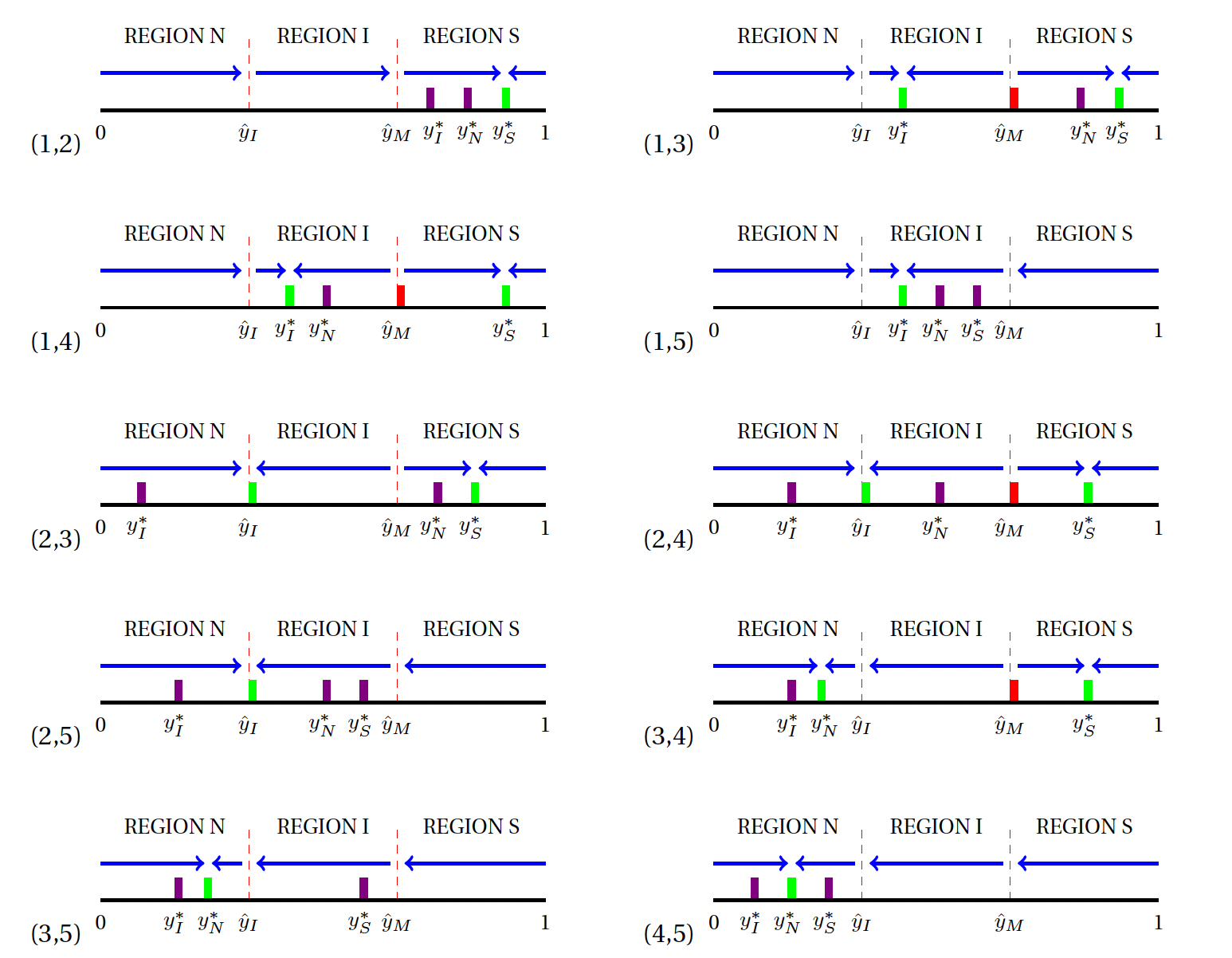}
    \caption{Phase diagrams for the case $\hat{y}_I<\hat{y}_M; y^*_I<y^*_N$.}
    \label{phasediagram_2}
\end{figure}

\begin{figure}[]
    \centering
    \includegraphics[width=0.75\textwidth]{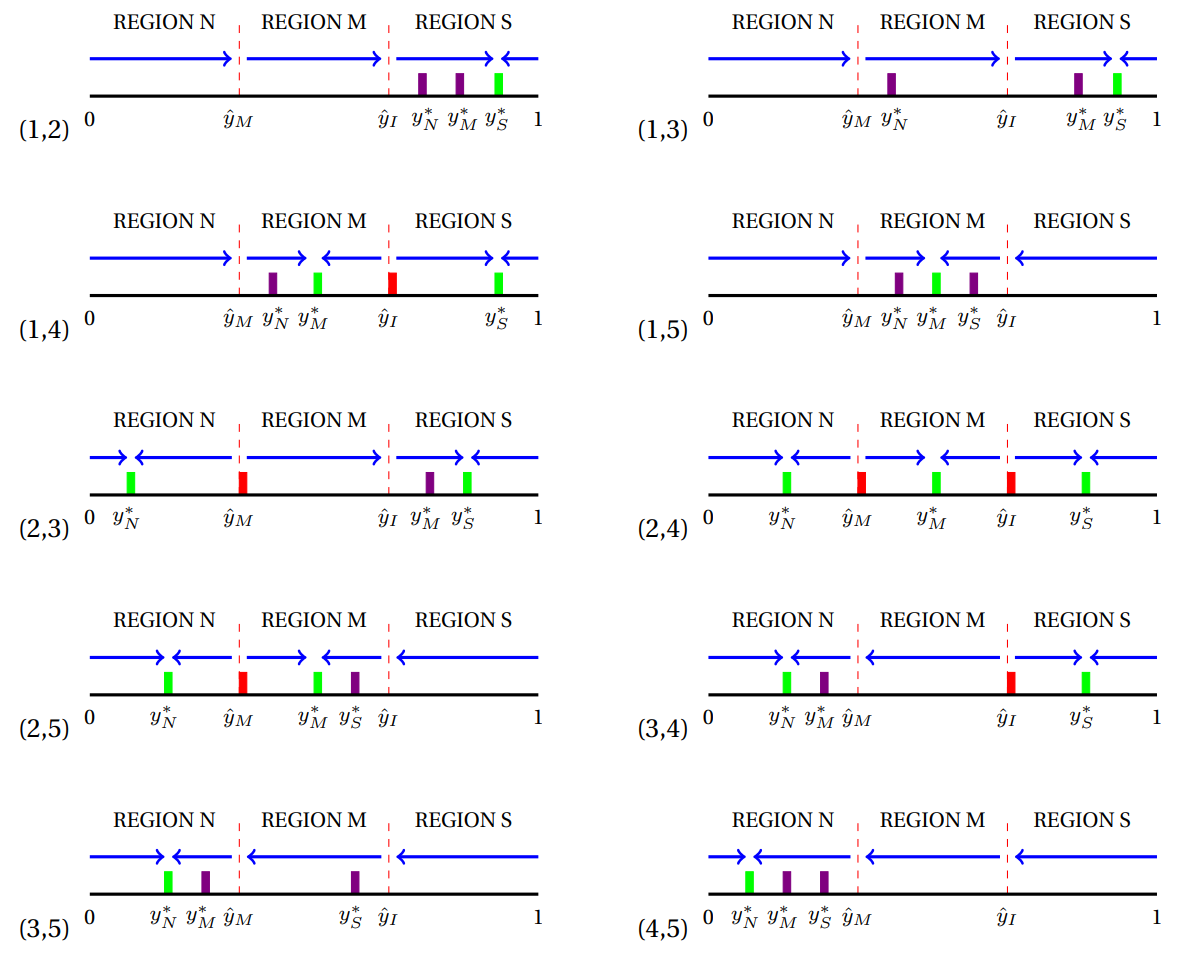}
    \caption{Phase diagrams for the case $\hat{y}_I>\hat{y}_M; y^*_S>y^*_M$.}
    \label{phasediagram_3}
\end{figure}

\begin{figure}[]
    \centering
    \includegraphics[width=0.75\textwidth]{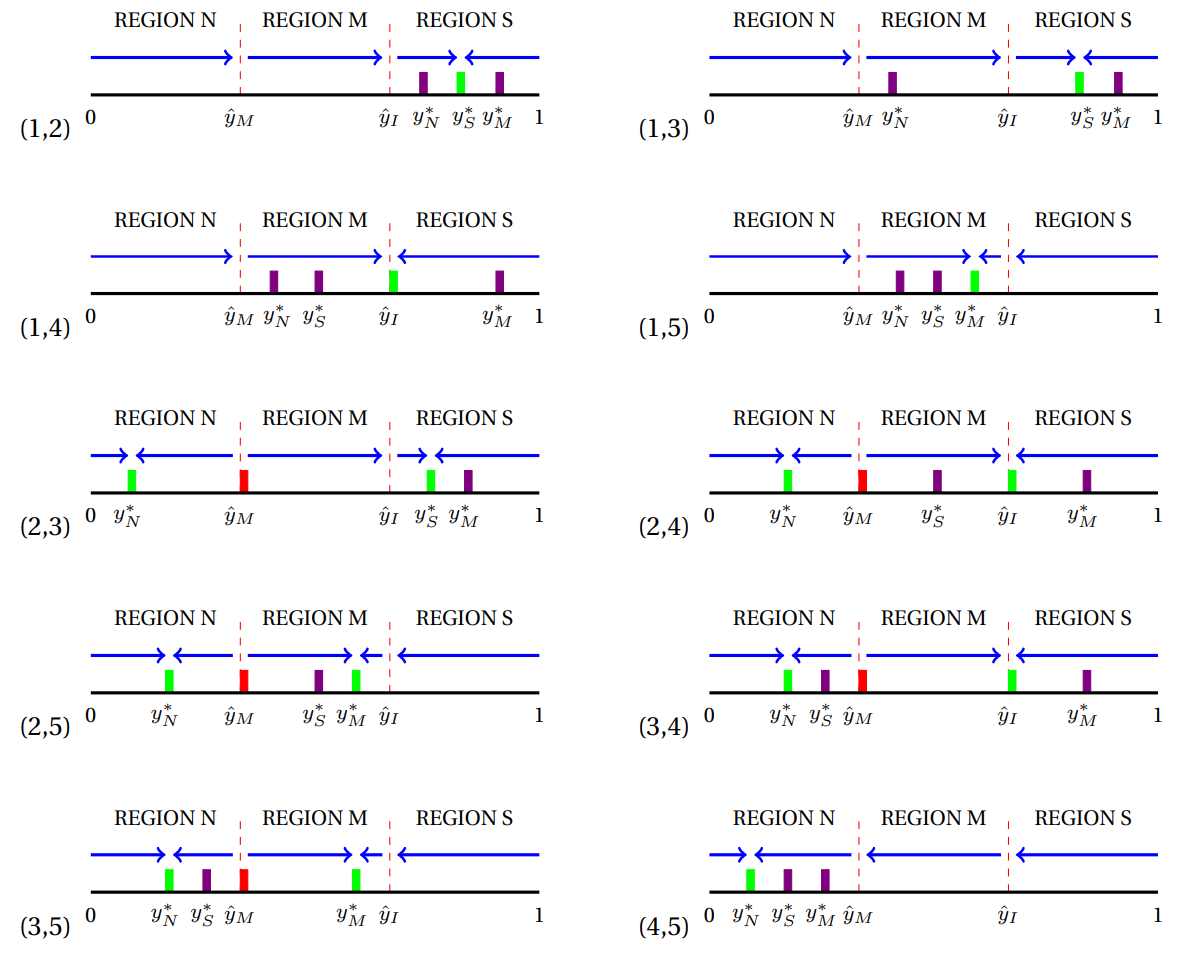}
    \caption{Phase diagrams for the case $\hat{y}_I>\hat{y}_M; y^*_S<y^*_M$.}
    \label{phasediagram_4}
\end{figure}

\end{appendix}
\end{document}